\newcommand{\footremember}[2]{%
    \footnote{#2}
    \newcounter{#1}
    \setcounter{#1}{\value{footnote}}%
}
\newcommand{\footrecall}[1]{%
    \footnotemark[\value{#1}]%
}
\newcommand{\be}{\begin{equation}}
\newcommand{\ee}{\end{equation}}
\newcommand{\ba}{\begin{array}}
\newcommand{\ea}{\end{array}}
\newcommand{\bea}{\begin{eqnarray}}
\newcommand{\eea}{\end{eqnarray}}
\newcommand{\ra}{\rangle}
\newcommand{\la}{\langle}
\newcommand{\calH}{{\cal H }}
\newcommand{\calN}{{\cal N }}
\newcommand{\calF}{{\cal F }}
\newcommand{\calJ}{{\cal J }}
\newcommand{\calS}{{\cal S }}
\newcommand{\EE}{\mathbb{E}}
\newcommand{\ZZ}{\mathbb{Z}}
\newcommand{\CC}{\mathbb{C}}
\newcommand{\RR}{\mathbb{R}}
\newcommand{\HH}{\mathbb{H}}
\newcommand{\multi}[1]{\mathbf{{#1}}}
\newcommand{\nn}{\multi{n}}
\newcommand{\mm}{\multi{m}}
\newcommand{\herm}{\mathrm{He}}
\newtheorem{prop}{Proposition}
\newtheorem{lemma}{Lemma}
\newtheorem{corol}{Corollary}
\newtheorem{theorem}{Theorem}
\newtheorem*{theorem*}{Theorem}
\newtheorem*{lemma*}{Lemma}
\newtheorem*{corol*}{Corollary}
\DeclareAcronym{nse}{
  short=NSE,
  long=Navier--Stokes equations
}
\title{Quantum simulation of a noisy classical nonlinear dynamics}
\author{Sergey Bravyi\footremember{ibm}{IBM Quantum, IBM T.J. Watson Research Center, Yorktown Heights, NY 10598 (USA)}
\and
Robert Manson-Sawko\footremember{daresbury}{IBM Research Europe,  Daresbury Laboratory, Warrington, United Kingdom}
\and 
Mykhaylo Zayats\footremember{Dublin}{IBM Quantum, IBM Research Europe, Trinity Business School, Dublin,  D02 F6N2 (Ireland)} \and Sergiy Zhuk\footrecall{Dublin}}
\begin{document}
\maketitle

\begin{abstract}
We present an end-to-end quantum algorithm   for simulating nonlinear dynamics
described by a system of stochastic  dissipative differential equations
with a quadratic nonlinearity.
The stochastic part of the system is modeled   by a Gaussian  noise in the equation of motion and  in the initial conditions.  
Our algorithm can approximate the expected value of any correlation function that depends on $O(1)$ variables with rigorous bounds on the approximation error.
The runtime scales polynomially with $\log{N}$, $t$,  $J$, and  $\lambda_1^{-1}$, where
$N$ is the total number of variables, $t$ is the evolution time,  $J$ is the nonlinearity strength, and $\lambda_1$ is the smallest dissipation rate.
However, the runtime scales exponentially with a parameter quantifying inverse relative error in the initial conditions. 
To the best of our knowledge, this is the first rigorous  quantum algorithm capable of simulating strongly nonlinear systems with
$J\gg \lambda_1$ at the cost poly-logarithmic in $N$ and polynomial in $t$.
The considered simulation problem is shown to be BQP-complete, providing a strong evidence for a quantum advantage. 
We benchmark the quantum algorithm via numerical experiments by simulating a vortex flow in the 2D Navier Stokes equation.
\end{abstract}

\newpage

\tableofcontents

\newpage

\section{Introduction}
\label{sec:intro}

Numerical simulation of nonlinear dynamical systems such as turbulent fluid flows is a formidable challenge for modern computers.
Nonlinearity often precludes an accurate prediction of the dynamics over long periods of time due to a chaotic behavior which  amplifies modeling errors and uncertainty of initial conditions. 
Furthermore, nonlinear dynamics often spans a wide range of spatial and temporal scales necessitating discretized models
with an extremely large number of variables~\cite{yang2021grid}. Such models are usually expressed by  ordinary differential equations (ODEs) 
$dX(t)/dt = f(X(t))$, where $t\ge 0$ is the evolution time, $X(t)$ is a state vector with $N$ components, and $f\, : \, \RR^N\to \RR^N$ is a nonlinear function. 
The goal is to determine some property of ODE's solution $X(t)$ for a given initial condition $X(0)$. 

The seminal work by Liu et al.~\cite{liu2021efficient} opened the door to a quantum simulation of nonlinear classical systems.
It was shown that  quantum computers can solve certain quadratic ODEs 
astonishingly fast:  the runtime scales only 
logarithmically
with the number of variables $N$ and polynomially with the evolution time $t$. 
The main technical tool employed by~\cite{liu2021efficient} is the Carleman linearization method.
It maps the classical  solution $X(t)$  to a quantum state that lives in an infinite dimensional Hilbert space
and evolves in time linearly.   Similar classical-to-quantum mappings have been investigated by many authors~\cite{koopman1931hamiltonian,neumann1932operatorenmethode,kowalski1997nonlinear,engel2021linear,tanaka2023polynomial,novikau2025quantum,shi2023koopman}.

The quantum algorithm of Liu et al.~\cite{liu2021efficient} has been rigorously justified only for ODEs that include  {\em dissipation}.
An example of a dissipative  ODE  is $dX(t)/dt = -\lambda X(t)+f(X(t))$, where $\lambda>0$ is the dissipation rate and $f$ is a quadratic polynomial.
The algorithm of~\cite{liu2021efficient} is efficient only in the regime 
$R<1$, where $R$ is the ratio between a suitable norm $\|f\|$ of 
the quadratic term  and the minimum dissipation rate\footnote{More generally, $R$ also depends on the norm of the initial state $X(0)$. For simplicity, here we 
assume that $X(0)$  is a unit norm vector. We note that $R$ is qualitatively similar to the Reynolds number  in fluid dynamics.}. In words, nonlinearity must be weak compared with the dissipation. 
In contrast, Leyton and Osborne~\cite{leyton2008quantum} showed that a quantum computer can solve strongly nonlinear quadratic ODEs 
satisfying certain measure-preservation condition
at the cost scaling logarithmically with $N$ and exponentially with the evolution time $t$. The exponential scaling arguably 
rules out most of practical applications.
Given this situation, a natural question is whether a quantum or classical algorithm can solve quadratic ODEs
at the cost   polynomial in both
$\log{N}$ and $t$ in the strongly nonlinear regime $R\gg 1$. To the best of our knowledge, this question remains open.

In practice, any dynamical system is subject to some amount of  {\em noise}. The noise can be caused, for example, by unwanted interactions with the environment or small
 uncertainty in the initial conditions. Unless active efforts are made to undo the effect of noise,  
 noisy dynamics arguably provides a more complete model of real-world systems. 
 The presence of noise can also have profound 
 implications for the computational complexity of various simulation tasks. 
 For example, the breakthrough result by Aharonov et al.~\cite{aharonov2023polynomial} demonstrated that noisy random quantum circuits can be efficiently simulated 
 on a classical computer while this task is believed to be exponentially hard classically in the absence of noise~\cite{aaronson2016complexity}.
 In this example noise is our friend as it drastically simplifies the simulation task at hand.

Here we show that strongly nonlinear ($R{\gg}1$) classical dynamics  with a quadratic nonlinearity can be efficiently simulated
on a quantum computer provided that the simulated system is
{\em both dissipative and  noisy}. The noisy dynamics is described by a system of stochastic ODEs obtained from the noiseless ODE
 by adding a Gaussian  noise to the equation of motion and to the initial state $X(0)$. 
The runtime of our quantum algorithm scales polynomially with $\log{N}$,
evolution time $t$, inverse error tolerance, and parameters quantifying sparsity, dissipation, and nonlinearity strength
of the ODE. However, there are two caveats. First,
we have to assume that the ODE  satisfies a certain divergence-free condition analogous
to the measure-preservation property of Leyton and Osborne~\cite{leyton2008quantum}.
Second, runtime scales exponentially with a parameter quantifying inverse relative error in the initial conditions
induced  by the noise.
Thus the noise rate for the initial conditions cannot be too small. 
Despite these limitations, 
we show that the considered simulation problem is BQP-complete providing a strong  evidence for a quantum advantage.

In contrast, it is known that  some noise-free and dissipation-free 
 nonlinear systems may be hard to simulate even on a quantum computer
due to an extreme sensitivity to small changes of parameters, e.g. initial conditions, that grants the ability to discriminate non-orthogonal states and efficiently solve NP-hard problems
~\cite{abrams1998nonlinear,childs2016optimal,brustle2025quantum}.
The sensitivity to initial conditions  is exemplified by a rigid pendulum balanced in the  upward position. In theory, the pendulum will stay in this position forever. However, 
even a  tiny perturbation of the initial state will be amplified exponentially as the time advances until it results in a 
swinging motion. Intuitively, adding  noise to the initial conditions and to the equation of motion makes the system less sensitive to small
changes of parameters 
removing the main obstacle for quantum algorithms. Our work rigorously confirms this intuition. 

While the present work is mainly focused on the asymptotic runtime analysis, 
a natural next step is to develop practical quantum algorithms for solving nonlinear stochastic ODEs
relevant for real-world applications. An example of a notoriously hard to solve dissipative and quadratic system is given by the Navier--Stokes equation (NSE) in turbulent regimes, where nonlinear advection (quadratic term) dominates diffusion (dissipative term) on a so-called inertia manifold~\cite{Foias_Manley_Rosa_Temam_2001}: for example, resolving all dynamically relevant scales of turbulent atmospheric boundary layer requires direct numerical simulations of the NSE with Reynolds number of order $10^8$ (so $\lambda\approx 10^{-8}$) but as of 2020 the state-of-the-art methods have reached Reynolds number of just $10^4$~\cite{garcia-villalba_using_2020}. We note that the number of variables $N$ in discretizations of the NSE 
scales roughly quadratically with the Reynolds number~\cite{yang2021grid}. Application of the algorithm of~\cite{liu2021efficient} to quadratic ODEs associated with NSEs was discussed in~\cite{mikel-sanz-PRR25}: the authors concluded that the algorithm is efficient only for very low Reynolds number i.e. when the linear term (dissipation) dominates the nonlinear term (advection).\\ 
Other examples include nonlinear Markov diffusions, e.g. overdamped Langevin diffusion in global optimization~\cite{trillos2023optimization}, finance~\cite{alghassi2022variational} and uncertainty  quantification~\cite{frank2018detectability,yeo2024reducing}. The proposed method has a great potential to provide theoretical insights and computational breakthroughs if applied to these problems when large scale fault-tolerant quantum computers become available.

\subsection{Main results}
\label{sec:setup}

Consider a system of nonlinear ODEs with $N$ real variables
$X=(X_1,\ldots,X_N)$ and the equation of motion
\be
\label{ODE}
\frac{d}{dt} X_i(t) = -\lambda_i X_i(t) + f_i(X(t))
\ee
with a given  initial state $X(0)$.
Here $t\ge 0$ is the evolution time, $\lambda_i>0$ are dissipation rates, and $f_i\, : \, \RR^N\to \RR$ are quadratic polynomials, 
\be
\label{drift}
f_i(x) = \sum_{j=1}^N b_{ij} x_j +  \sum_{j,k=1}^N c_{ijk} x_j x_k
\ee
for some real coefficients 
$b_{ij}$ and $c_{ijk}$. Assume wlog that $c_{ijk}=c_{ikj}$.
We shall refer to the polynomials $f_i$ as {\em drift functions}.
The system is said to have  the  {\em drift strength} $J$ if $|b_{ij}|\le J$ and $|c_{ijk}|\le J$ for all indices $i,j,k$.
The system is called {\em $s$-sparse} if any row and any column
of the matrix $b=\{b_{ij}\}$ has at most $s$ nonzeros and any two-dimensional slice of the tensor $c=\{c_{ijk}\}$ has
at most $s$ nonzeros\footnote{In other words, for any $i$ there must be at most $s$ pairs $j,k$ with $c_{ijk}\ne 0$,
for any $j$ there must be at most $s$ pairs $i,k$ with $c_{ijk}\ne 0$,
and for any $k$ there must be at most $s$ pairs $i,j$ with $c_{ijk}\ne 0$.}. Assume wlog that $\lambda_1\le \lambda_2\le\ldots\le \lambda_N$.

We are interested in the regime when the number of variables $N$ is large enough that
storing the full description of the system and its solution in a classical computer memory
is impractical. Accordingly, we  assume that
the system  admits a succinct description by classical 
algorithms with the runtime scaling polynomially with $\log{N}$
that compute dissipation rates $\lambda_i$,
 values and locations of nonzero
 matrix elements $b_{ij}$ and $c_{ijk}$,
 see Section~\ref{sec:algorithm} for details.
 
For general nonlinear ODEs with quadratic $f_i$ the norm of the solution $X(t)$ can become infinite after a finite evolution time
and the problem becomes ill-defined, e.g. for Riccati equations~\cite{DASCALIUC201953}. To avoid this situation, we restrict our attention to a class
of ODEs satisfying certain zero divergence condition, which contains quadratic chaotic ODEs like Lorenz 96-model and Burgers-Hopf model~\cite{frank2018detectability}.  
To the best of our knowledge, this class was introduced by Engel, Smith, and Parker~\cite{engel2021linear} in the study of quantum algorithms based on linear embedding methods.
A closely related measure-preservation property was studied by Leyton and Osborne~\cite{leyton2008quantum}.
The ODE Eqs.~(\ref{ODE},\ref{drift})  is said to be {\em divergence-free} if $b$ and $c$ have zero diagonal entries, that is, $b_{ii}=0$,
$c_{ijj}=c_{jij}=c_{jji}=0$ for all indices $i,j$. In addition, off-diagonal entries of $b$ and $c$ must obey
\be
\label{NDE3}
\lambda_i b_{ij} + \lambda_j b_{ji}=0
\ee
and
\be
\label{NDE2}
\lambda_i c_{ijk} + \lambda_j c_{jki} + \lambda_k c_{kij} = 0
\ee
for all $i\ne j\ne k$.
These conditions ensure that a norm $\|x\|_\lambda^2 = \sum_{i=1}^N \lambda_i x_i^2$ is non-increasing\footnote{This can be easily checked
by computing the derivative $(d/dt)\|X(t)\|_\lambda^2$ from Eq.~(\ref{ODE}) and 
 using the
identity $\sum_{i=1}^N \lambda_i x_i f_i(x)=0$ which follows from Eqs.~(\ref{NDE3},\ref{NDE2}).}
along the trajectories $X(t)$, that is, $(d/dt) \|X(t)\|_\lambda  \le 0$. Thus the divergence-free condition can be viewed
as a generalization of unitarity to non-linear dissipative ODEs.
We demonstrate that discretizations of the Navier-Stokes equations are described by divergence-free quadratic ODEs
(although these ODEs are not sparse in the Fourier basis), see Section~\ref{sec:examples}.

Our goal is to simulate a noisy version of the ODE Eq.~(\ref{ODE}) described by a stochastic differential equation (SDE) 
\be
\label{SDE}
dX_i(t) = -\lambda_i X_i(t) dt  + f_i(X(t))dt  + 
\sqrt{q}\, dW_i(t).
\ee
Here $q>0$ is the noise rate and $W(t)\in \RR^N$ is 
a vector of $N$ independent Wiener processes (Brownian motions). In discrete time this SDE  is related to the following random walk:
\[
X_i(t+dt) = X_i(t)  -\lambda_i X_i(t) dt + f_i(X(t)) dt+
\sqrt{q dt } \calN(0,1),
\]
where $\calN(0,1)$ is the standard normally distributed random variable.
We note that in the zero noise limit ($q\to 0$) the solution of this SDE  converges (on average) to the solution of the corresponding noiseless ODE Eq.~(\ref{ODE}),
see~\cite[p.245]{chow2007infinite}.

 Our quantum algorithm outputs a real number that approximates the expected value of a  given {\em observable function} $u_0 \, : \, \RR^N\to \RR$
evaluated at the solution $X(t)$. This function describes the property of the solution that we would like to observe. 
For example, $u_0(x)=x_i$ if we are  interested in the expected value of a single variable $X_i(t)$
or $u_0(x)=x_ix_j$ if we are interested in the expected value of $X_i(t) X_j(t)$. 
We assume that $u_0(x)=\prod_{i=1}^N x_i^{d_i}$ where $d_i\ge0$
are integers such that $\sum_{i=1}^N d_i=O(1)$. 

Let $v(t,x)$ be  the expected value of the obsvervable function $u_0(X(t))$ along the trajectory $X(t)$ starting from a vector $X(0)=x+z$ 
where $x\in \RR^N$ is the desired initial condition and 
$z\in \RR^N$ models noise in the initial condition. 
The expected value is taken over Wiener noise and over $z$. 
Our setup is illustrated on Figure~\ref{fig:diagram_classical}.
For technical reasons, we assume that each  component $z_i$ is drawn independently from  the normal distribution 
with the zero mean and variance $q/(2\lambda_i)$. Thus 
\be
\label{v(t,x)}
 v(t,x) =
  \EE_{W,z} u_0(X(t))
    \quad \mbox{subject to $X(0)=x+z$}
 \ee
 where  $z_i\sim \calN(0,q/(2\lambda_i))$ for all $i$. 
Our main result is the following.

\begin{figure}
\centering\includegraphics[width=10cm]{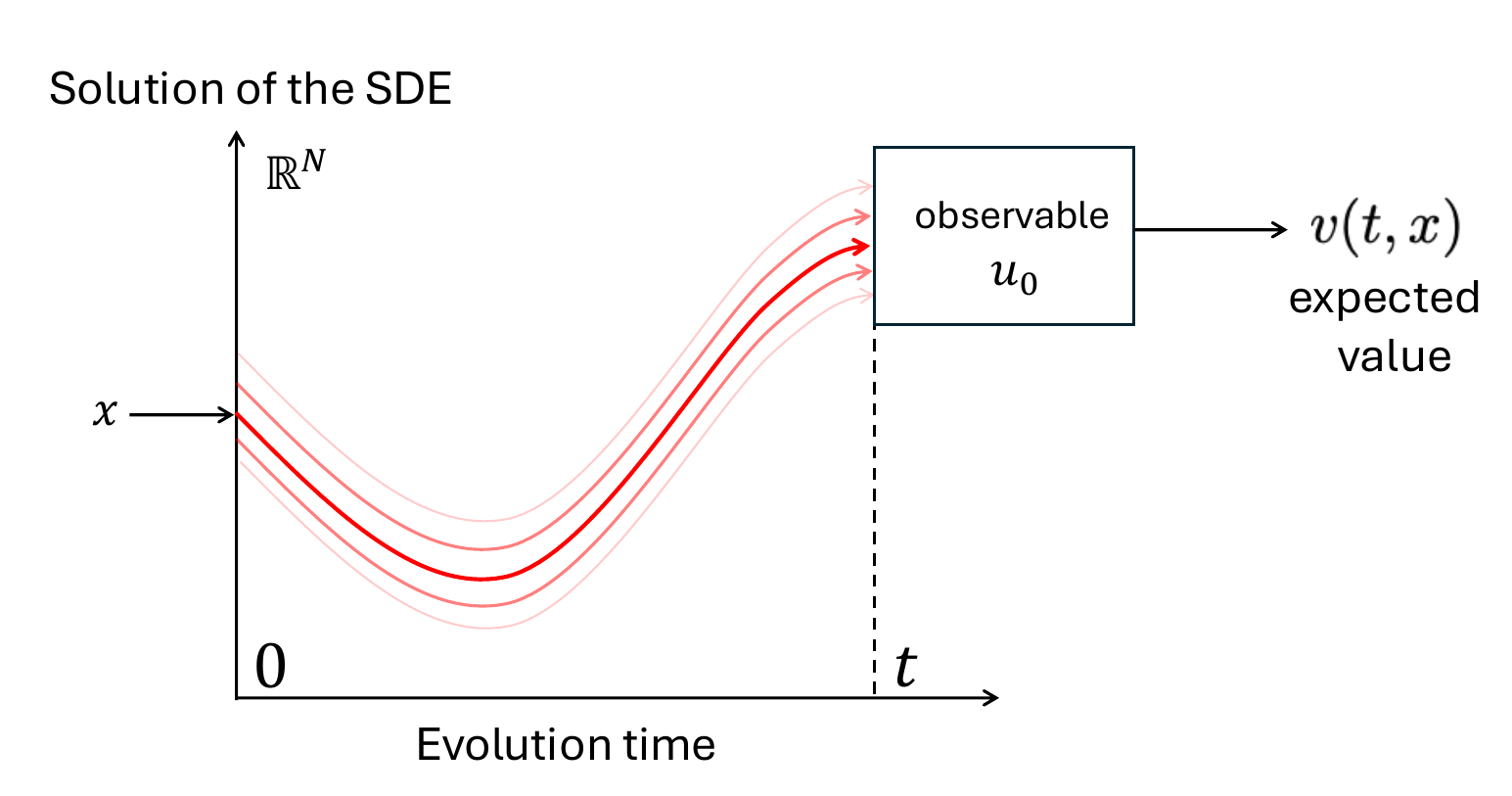}
        \caption{Cartoon of our simulation problem. We consider 
        a stochastic differential equation (SDE) for a state vector $X(t)\in \RR^N$ with  the initial condition $X(0)=x+z$, where $x$ is the desired initial condition and $z$ is a Gaussian noise.
        Red lines represent solutions of the SDE corresponding to different realization of noise, including Wiener noise in the SDE and noise in the initial condition.
        The goal is to  approximate the expected value of a given observable function $u_0\, : \, \RR^N\to \RR$ evaluated at the solution $X(t)$. The expected value is taken over noise realizations.  }
\label{fig:diagram_classical}
\end{figure}

\begin{theorem} 
\label{thm:algorithm}
There exists a quantum algorithm taking as input a vector $x\in \RR^N$ with $O(1)$ nonzeros, evolution time $t$,
an error tolerance $\epsilon$, and an observable function $u_0$ as above. 
The algorithm outputs a real number approximating the 
expected value $v(t,x)$ with an additive error $\epsilon$. 
The algorithm has runtime scaling
polynomially with  $\log{(N)}$,  $t$, $\epsilon^{-1}$, $J$, $s$,  $\lambda_1^{-1}$,
and $q$.
The runtime scales exponentially with the quantity $q^{-1} \|x\|_\lambda^2$,
where $\| x\|_\lambda^2  = \sum_{i=1}^N \lambda_i x_i^2$.
\end{theorem}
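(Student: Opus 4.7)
The plan is to organize the proof around the pipeline depicted in Figure~\ref{fig:diagram}, treating it as a sequence of reductions whose cumulative error and runtime we track carefully.

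\textbf{Step 1: Reduction to the Kolmogorov equation.} By the standard theory of SDEs, the conditional expectation $u(t,x)=\EE_W u_0(X(t)|X(0)=x)$ satisfies a linear infinite-dimensional PDE (the Kolmogorov equation) with generator built from the dissipation $-\lambda_i x_i\partial_i$, the drift $b_i(x)\partial_i$, and the diffusion $\tfrac{q}{2}\partial_i^2$. Because we actually want $v(t,x)=\EE_z u(t,x+z)$ with $z$ Gaussian of variance $q/(2\lambda_i)$, it is natural to work in the Gauss--Sobolev space weighted by the stationary Gaussian measure $\mu$ of the linear part of the dynamics, since $v(t,x)=\la \delta_x, u(t,\cdot)\ra_\mu$ for an appropriate reproducing element.

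\textbf{Step 2: Galerkin projection onto Hermite polynomials of degree $\le K$.} Expand $u(t,\cdot)$ in the multivariate Hermite basis orthonormal w.r.t.\ $\mu$ (Wiener--It\^o chaos decomposition). In this basis the linear + noise part of the generator is diagonal with nonnegative entries $\Lambda$ (a weighted total-degree operator), while the divergence-free conditions Eqs.~(\ref{NDE1})--(\ref{NDE3}) guarantee that the drift contribution $C$ is skew-symmetric and, by the sparsity assumption, row-sparse (bounded number of Hermite coefficients mix per basis vector). Projecting onto degree $\le K$ thus yields a finite dimensional linear ODE $\dot\phi=(-\Lambda+C)\phi$ where $-\Lambda$ is a contraction and $C$ is skew-symmetric and $s'$-sparse for some $s'=O(\mathrm{poly}(s,K))$. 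The key analytic estimate is a bound on the truncation error $\|(I-P_K)u(t,\cdot)\|_\mu$ that is polynomial in $t,J,J_1,\lambda_N$ and decays in $K$; this is obtained by writing out the high-degree component of the generator, using the contractivity of $-\Lambda$ to absorb the factor from $C$, and Grönwall. Picking $K=O(\mathrm{polylog}(1/\epsilon)\cdot \mathrm{poly}(t,J,J_1,\lambda_N,1/\lambda_1))$ suffices to make this error $\le \epsilon/3$.

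\textbf{Step 3: Quantum simulation of the projected ODE.} The vector $|\phi(0)\ra$ is a single Hermite basis element (or $O(1)$ terms) determined by $u_0$, so it admits an efficient state preparation. Because $-\Lambda+C$ has the form diagonal-plus-sparse-skew-hermitian, the non-unitary semigroup $e^{t(-\Lambda+C)}$ can be simulated by the sparse block-encoding + QSVT/LCU Hamiltonian-simulation toolbox of~\cite{low2017optimal,low2017quantum}. Since $-\Lambda\le 0$ in operator norm, the subnormalization factor is bounded by $1$, and the query complexity scales polynomially in $\|-\Lambda+C\|\cdot t$ and $\log(1/\epsilon)$; the block-encoding oracle is implemented in $\mathrm{polylog}(N)\cdot\mathrm{poly}(K,s)$ time using the classical succinct description of $\lambda_i, b_{i,j}$ and the Hermite coefficients of $c_i$ assumed in Section~\ref{sec:algorithm}.

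\textbf{Step 4: Readout and amplitude estimation.} We need $v(t,x)=\la \phi_{out}(x),\phi(t)\ra$, where $\phi_{out}(x)$ is the Hermite-expansion of a Dirac-$\delta$-like evaluation functional smoothed by the Gaussian kernel. Because Hermite polynomials grow like $H_n(x)\sim e^{x^2/4}\sqrt{n!}$ pointwise, the $\ell_2$ norm of the truncated readout state is $\exp(\Theta(\|x\|_\lambda^2/q))$, which is the source of the exponential $q^{-1}\|X(0)\|_\lambda^2$ factor in the runtime: we prepare the normalized state $|\phi_{out}(x)\ra/\|\phi_{out}(x)\|$ and recover the inner product by amplitude estimation, paying the norm as an overall multiplicative factor. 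The sparse initial state $X(0)$ combined with sparsity of $b$ keeps the preparation of $|\phi_{out}(x)\ra$ efficient in $\log N$. Amplitude estimation to additive error $\epsilon/3$ after rescaling then costs $O(\|\phi_{out}(x)\|/\epsilon)$ queries.

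\textbf{Putting it together and main obstacle.} The three error contributions (projection, Hamiltonian simulation, amplitude estimation) each are bounded by $\epsilon/3$; the runtime is the product of the oracle cost, the Hamiltonian-simulation query complexity, and the amplitude-estimation overhead, which combine to give the stated $\mathrm{poly}(\log N, t, 1/\epsilon, J, J_1, 1/\lambda_1, \lambda_N)\cdot e^{O(\|X(0)\|_\lambda^2/q)}$ scaling. The main obstacle I expect is \emph{Step~2}: establishing a quantitative, time-global bound on the Galerkin truncation error in the Gauss--Sobolev norm without losing regularity. The divergence-free conditions Eqs.~(\ref{NDE1})--(\ref{NDE3}) are precisely what make the generator dissipative in that norm and prevent norm blow-up, but translating this into a $K$-decay estimate requires a careful commutator analysis between $C$ and the degree operator, coupled to the coercivity supplied by $\Lambda$ — this is where most of the technical work lies and presumably justifies the separate ``may be of independent interest'' claim in the abstract.
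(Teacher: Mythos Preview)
Your pipeline matches the paper's: Kolmogorov equation in the Gauss--Sobolev space, Hermite/Wiener--It\^o expansion, Galerkin truncation to degree $\le K$, sparse skew-hermitian simulation of $-\Lambda+C$, and readout via the coherent-state vector $|\phi_{out}(x)\rangle$ whose norm $\exp(q^{-1}\|x\|_\lambda^2)$ supplies the exponential factor.

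One quantitative correction is worth flagging. The regularization bound the paper actually proves (Theorem~\ref{thm:regul}) is only $\|\psi(t)-\phi(t)\|\le O(\gamma/\sqrt{r})\,\|\psi(0)\|$, i.e.\ \emph{polynomial} decay in the cutoff $r\sim K\lambda_1$, so $K$ must be $\mathrm{poly}(1/\epsilon)$ rather than $\mathrm{polylog}(1/\epsilon)$ as you claim. Moreover, since the quantity you ultimately need small is $\|\psi_{out}(x)\|\cdot\|\psi(t)-\phi(t)\|$, the cutoff $K$ itself inherits the factor $\exp(2q^{-1}\|x\|_\lambda^2)$ (see Eq.~(\ref{choice_of_r})); the exponential enters the runtime through \emph{both} the readout normalization and the truncation dimension, not just the former. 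Finally, the proof of the truncation bound is not a commutator analysis between $C$ and the degree operator: it is a direct energy estimate using skew-symmetry of $C$, the relative-boundedness inequality $|\langle h,Cg\rangle|\le\gamma\,\|h\|\,\|A^{1/2}g\|$, the dissipation identity $\tfrac{d}{dt}\|\psi\|^2=-2\langle\psi,A\psi\rangle$, and Gr\"onwall. None of this breaks your outline, since the theorem only claims polynomial scaling in $1/\epsilon$, but your expectations about both the decay rate and its proof mechanism are off.
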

The exponential factor in the runtime is $\exp{\left(C q^{-1} \|x\|_\lambda^2\right)}$,
where $C=O(1)$ is a universal constant. 
By definition, the expected value and the variance of $X_i(0)$ is $x_i$ and $q/(2\lambda_i)$  respectively. Thus
\[
q^{-1} \| x\|_\lambda^2 =\frac12 \sum_{i=1}^N \frac{\left( \EE X_i(0)\right)^2}{\mathrm{Var}(X_i(0))}
\]
quantifies inverse relative error in the initial conditions. If the relative initialization error per variable 
is lower bounded by a positive constant and $x$ has $O(1)$ nonzeros, 
the runtime of our algorithm scales polynomially with $\log{(N)}$ and all other parameters. 
Let us emphasize that the runtime does not depend on the {\em maximum} dissipation rate $\lambda_N$.
This is a highly desirable feature since $\lambda_N$ scales as a positive power of $N$ for many practical applications, see Section~\ref{sec:examples}.

A natural question is whether the considered simulation problem can be solved by a classical algorithm with
a comparable or better runtime. Our second result shows that this  is highly unlikely since the problem of approximating 
the expected value $v(t,x)$ defined in Eq.~(\ref{v(t,x)})   is BQP-complete,
that is, as hard as simulating the universal quantum computer.
\begin{theorem}
\label{thm:BQP}
Suppose $U$ is $n$-qubit  quantum circuit with  $m$ gates.
There exists a linear divergence-free  SDE with $N=(m+1)2^n$ variables,  drift strength $J=O(m)$,
constant sparsity, noise rate, and dissipation rate such that 
\be
\label{U00}
\left|  \la 0^n|U|0^n\ra - v(t,x) \right|\le \frac1{10}
\ee
for $t=1$.
The initial state $x\in \RR^N$ has a single nonzero and obeys  $q^{-1} \|x\|_\lambda^2=1$.
\end{theorem}
Approximating the matrix element $\la 0^n|U|0^n\ra$ with a constant additive error, say  $1/3$, is the canonical example
of a BQP-complete problem~\cite{kitaev2002classical}. Theorem~\ref{thm:BQP} implies that approximating the quantity
$v(t,x)$  is also BQP-complete even if $t=1$ and $q^{-1} \|x\|_\lambda^2=1$.
In this regime the runtime of our quantum algorithm is polynomial in all parameters.

\subsection{Proof sketch}
Our main technical tool in the proof of Theorem~1 is the Kolmogorov equation  describing the time evolution of the expected value  
\be
\label{u(t,x)}
u(t,x) =
  \EE_{W} u_0(X(t))
    \quad \mbox{subject to $X(0)=x$}.
 \ee
Note that this coincides with the expected value $v(t,x)$ computed by our quantum algorithm except that the initial state $X(0)$ is now noiseless.
The Kolmogorov equation reads as
\[
\frac{\partial}{\partial t} u(t,x) = \sum_{i=1}^N (-\lambda_i x_i + f_i(x)) \frac{\partial}{\partial x_i} u(t,x)
+\frac{q}2 \sum_{i=1}^N  \frac{\partial^2}{\partial x_i^2} u(t,x)
\]
with the initial condition $u(0,x)=u_0(x)$. By mapping each variable $x_i$ to a quantum harmonic oscillator, we express the desired expected value $v(t,x)=\EE_z u(t,x+z)$ as an inner product of two (unnormalized) quantum states,
\[
v(t,x)=\la \psi_{out}(x)|\psi(t)\ra,
\]
where the ambient Hilbert space describes a system of $N$ quantum harmonic oscillators, $|\psi_{out}(x)\ra$ is the
coherent state embedding~\cite{engel2021linear} of the initial condition  $x$, 
the state $|\psi(0)\ra$
encodes the observable function $u_0$,
and $|\psi(t)\ra$ solves the second-quantized Kolmogorov equation  
\be
\label{ABC}
\frac{d}{dt} |\psi(t)\ra = (-A+B+C)|\psi(t)\ra
\ee
for certain  hermitian positive definite operator $A$, and anti-hermitian operators $B$ and $C$. Denoting the creation and annihilation operators
for the $i$-th oscillator by $a_i^\dag$ and $a_i$, one has $A=\sum_{i=1}^N \lambda_i a_i^\dag a_i$,
$B= \sum_{i,j=1}^N b_{ij}(\lambda_i/\lambda_j)^{1/2} \, a_j^\dag a_i$, while $C$ is a linear combination of operators $a_i a_j^\dag a_k^\dag- a_k a_j a_i^\dag$
with coefficients simply related  to $c_{ijk}$. Time evolution under $-A+B$ describes free bosons (linear optics). However, $C$  is cubic in the creation/annihilation operators and $C$ does not preserve the particle number, which complicates the analysis.
Moreover, $C$ cannot be considered as a small perturbation in the strong nonlinearity regime. 
To complicate the matter further, $A,B,C$ are unbounded operators. Thus,  it is far from obvious that Eq.~(\ref{ABC}) has a well-defined solution.

To address these challenges, we derive operator upper bounds\footnote{If $X$ and $Y$ are Hermitian operators, we write $X\le Y$ to indicate that $Y-X$ is positive semidefinite.} on the commutator of $A$ and $B+C$, namely,
$[A,B+C] \le A^{2} + \kappa A$, where $\kappa$
depends only on the parameters $q,s,J$, and $\lambda_1^{-1}$.
We derive similar upper bounds for commutators $[A^p,B+C]$ with $p\ge 1$.
 Likewise, we show that
  $|\la \phi|B+C|\varphi\ra| \le \gamma \sqrt{\la \varphi|A|\varphi\ra\la \phi|A^2|\phi\ra}$ for any states $\varphi,\phi$.
Here $\gamma$ depends only on $q,s,J$, and $\lambda_1^{-1}$.

Equipped with these bounds,
we were able to prove that the solution $\psi(t)$ of Eq.~(\ref{ABC}) is well-defined and can can be efficiently approximated
on a quantum computer. To this end we define a regularized Kolmogorov equation that depends on a cutoff parameter $k>0$.
Loosely speaking, the regularization "turns off" the term $B+C$ on a subspace spanned by Fock basis vectors $|\mm\ra$ satisfying $\la \mm|A|\mm\ra> k$.
Here $\mm$ specifies the occupation number of each oscillator. 
Let $\psi_k(t)$ be the solution of the regularized Kolmogorov equation with the initial condition $\psi_k(0)=\psi(0)$.
 We prove that the family of states $\psi_k(t)$ has a limit
$\psi(t)=\lim_{k\to \infty} \psi_k(t)$ and this limit solves the original (unregularized) Kolmogorov equation. Furthermore, the regularization error
$\epsilon=\|\psi(t)-\psi_k(t)\|$ is upper bounded by a polynomial function of $1/k$, $s$, $J$, $\lambda_1^{-1}$.
Thus one can choose $k=poly(\epsilon^{-1},s,J,\lambda_1^{-1})$. For a formal statement see Theorem~\ref{thm:regul} in Section~\ref{sec:regul-sz}.
Our regularization procedure is  the key for
reducing an infinite dimensional simulation problem to a finite dimensional one with a well controlled approximation error.

The regularized version of  Eq.~(\ref{ABC})  is  then projected onto the subspace $\calH_k$ spanned by Fock basis vectors $|\mm\ra$ satisfying $\la \mm|A|\mm\ra\le k$.
This subspace is finite-dimensional can can be expressed using $\log_2{\mathrm{dim}(\calH_k)}=O(k\lambda_1^{-1}\log{N})$ qubits in the worst case. 
The projected version of Eq.~(\ref{ABC}) can be efficiently simulated  on a quantum computer using off-the-shelf Hamiltonian simulation methods.
In general, the initial state $\psi_k(0)=\psi(0)$ does not belong to  $\calH_k$. However, we show that the dynamics of $\psi_k(t)$
on the orthogonal subspace $\calH_k^\perp$ can be efficiently simulated classically and we only need a quantum computer to simulate the dynamics within $\calH_k$.

The exponential scaling in the runtime of Theorem~1 results from normalization of the readout state $\psi_{out}(x)$.
We show that $\psi_{out}(x)$ has an infinite norm if the initial condition $X(0)=x$ is noiseless, while $\|\psi_{out}(x)\| = \exp{\left[ q^{-1}\|x\|_\lambda^2 \right]}$
in the noisy case, $X(0)=x+z$. A large norm of $\psi_{out}(x)$ amplifies errors incurred in the Hamiltonian simulation and in the regularization translating
into the runtime and qubit count scaling polynomially with $\|\psi_{out}(x)\|$, that is, exponentially with $q^{-1}\|x\|_\lambda^2$.
We leave as an open question whether this exponential factor can be removed. The workflow of our algorithm is illustrated on Figure~\ref{fig:diagram}.

Our "Kolmogorian simulation" approach drastically departs from the prior quantum algorithms for simulating nonlinear dynamics~\cite{liu2021efficient, koopman1931hamiltonian,neumann1932operatorenmethode,kowalski1997nonlinear,engel2021linear,tanaka2023polynomial,novikau2025quantum,shi2023koopman}
that
aim to track dynamics of the full state vector $X(t)$, as opposed to the expected value of a single scalar function of $X(t)$. 
At a high level, this idea of  mapping states to expectations   has similarities with the one of Refs.~\cite{aharonov2023polynomial,beguvsic2025simulating,schuster2024polynomial}
that investigated classical simulation of quantum circuits based on the Heisenberg time evolution of observables
expanded in the Pauli basis. 
We note that the Kolmogorov equation  is a well known tool in the literature on stochastic processes~\cite{gikhman2004theory,da2004kolmogorov}: this very tool is used to approach many open problems in mathematics, e.g. the Navier–-Stokes existence and uniqueness Millennium Prize Problem\footnote{\url{https://www.claymath.org/millennium/navier-stokes-equation/}} (see~\cite[p.235]{dalang2015stochastic}) and has already generated a set of exciting results recognized by Fields medals~\cite{hairer2006ergodicity}.

Let us  emphasize that our main technical contribution is Theorem~\ref{thm:algorithm} whereas Theorem~\ref{thm:BQP}
follows almost trivially  from the definitions.  Indeed, since the considered noise has 
zero mean and the SDE of Theorem~\ref{thm:BQP} is linear, noise does not contribute to the
observed expected value. Thus we just need to prove BQP-completeness for linear ODEs satisfying the sparsity and divergence-free conditions.
This requires only a minor modification of the known circuit-to-Hamiltonian mapping~\cite{kitaev2002classical}.

\subsection{Organization of the paper}

The rest of this paper is organized as follows. 
Section~\ref{sec:Kolmogorov} contains our main technical contributions.
It considers  the second-quantized version of the Kolmogorov equation (KE) describing a system of $N$ quantum harmonic oscillators and
proves that this equation has a well-defined solution that can be well approximated by a solution of the regularized KE. 
Section~\ref{sec:readout} highlights some challenges associated with extracting the desired expected values from the solution of KE.
We show how to overcome these challenges in the case then the initial condition $X(0)$ is noisy obtaining an efficient readout step. 
We combine the two ingredients to prove Theorem~\ref{thm:algorithm} in Section~\ref{sec:algorithm}.
Section~\ref{sec:BQP} contains the proof of Theorem~\ref{thm:BQP}.
Section~\ref{sec:examples} describes 
an example of 
a divergence-free system with a quadratic nonlinearity: 
the incompressible 
Navier–Stokes equation
 in two spacial dimensions on a torus discretized in the eigen-basis of the Stokes operator~\cite{flandoli1998kolmogorov}.
This system has natural dissipation (molecular diffusion) and is stochastically forced.  We numerically simulate our quantum algorithm
and compare its outputs with expected values obtained  by the brute force sampling. We use simple analytic solution (Taylor-Green vortex) to test our algorithm in the limit of very small noise without changing the dissipation.
 We provide proofs of some technical lemmas  in the appendix.

\begin{figure}
\centering\includegraphics[width=14cm]{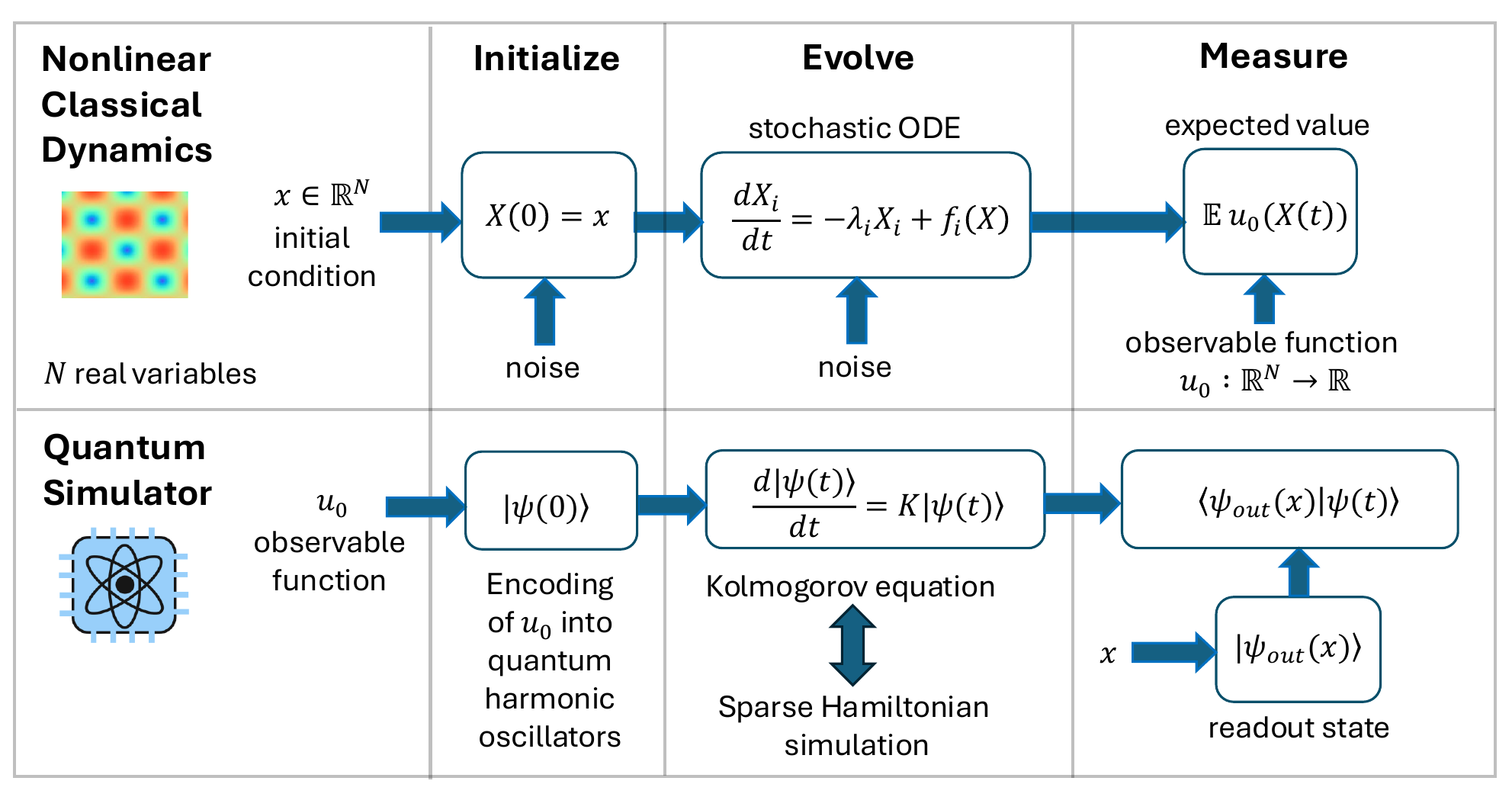}
        \caption{Quantum simulation of a nonlinear SDE. The workflow is divided into  initialization, time evolution, and   measurement stages. 
    The simulator is initialized in a state $|\psi(0)\ra$ encoding
     the observable function $u_0$ into the Hilbert space of $N$ quantum harmonic oscillators.
      Time evolution of $|\psi(t)\ra$ is governed by a "Kolmogorian" $K=-A+B+C$, 
     where $A=\sum_{i=1}^N \lambda_i a_i^\dag a_i$ describes noise and dissipation while $B$ and $C$ are anti-hermitian operators
     simply related to the drift functions $f_i(x)$. The Hilbert space is truncated to a finite dimension based on a novel 
     regularization procedure.  Time evolution under the truncated Kolmogorian is simulated using sparse Hamiltonian simulation methods. 
       The measurement estimates the inner product between the time evolved state $|\psi(t)\ra$
     and a readout state $|\psi_{out}(x)\ra$ encoding the desired  initial condition $x$
     of the SDE. The measured inner product gives an $\epsilon$-approximation of the classical expected value. }
    \label{fig:diagram}
\end{figure}

\section{Kolmogorov equation}
\label{sec:Kolmogorov}
Time evolution of the expected value $u(t,x)$ defined in Eq.~(\ref{u(t,x)}) is 
governed by the backward Kolmogorov equation~\cite{da2004kolmogorov}.
 Assuming that $u(t,x)$ is a sufficiently smooth function, the Kolmogorov equation (KE) reads as
\be
\label{Kolmogorov1}
\frac{\partial}{\partial t} u(t,x) = \sum_{i=1}^N (-\lambda_i x_i + f_i(x)) \frac{\partial}{\partial x_i} u(t,x)
+\frac{q}2 \sum_{i=1}^N  \frac{\partial^2}{\partial x_i^2} u(t,x)
\ee
for $t\ge 0$ with the initial condition $u(0,x)=u_0(x)$.  
(The inverse mapping from the partial differential equation Eq.~(\ref{Kolmogorov1}) to an SDE is known as Feynman–Kac formula~\cite{oksendal2013stochastic}.)

\subsection{Mapping to quantum harmonic oscillators}

Let 
\be
\label{mu(x)}
\mu(x)\sim \exp{\left(-q^{-1} \|x\|_\lambda^2\right)}
\ee
be a normalized Gaussian distribution such that $\int_{\RR^N} dx\, \mu(x)=1$.
Recall that $\|x\|_\lambda^2=
\sum_{i=1}^N \lambda_i x_i^2$.
We shall construct  a solution of Eq.~(\ref{Kolmogorov1}) in the form
\be
u(t,x)= \frac{\psi(t,x)}{
\sqrt{\mu(x)}},
\ee
where $\psi(t,x)$ is a function satisfying $\int_{\RR^N} dx\,  |\psi(t,x)|^2 <\infty$\footnote{This ansatz ensures that the solution $u(t,x)$ belongs to a Gauss-Sobolev space associated with the measure $\mu(x)$, see for instance~\cite{da2004kolmogorov}.}
A simple algebra shows that Eq.~(\ref{Kolmogorov1}) is equivalent to 
\be
\label{Kolmogorov1a}
\frac{\partial}{\partial t} \psi(t,x)
=-\frac12 \sum_{i=1}^N \left( q^{-1}\lambda_i^2  x_i^2 -q \frac{\partial^2}{\partial x_i^2} - \lambda_i\right)\psi(t,x)
+ \sum_{i=1}^N f_i(x) ( q^{-1} \lambda_i x_i + \frac{\partial}{\partial x_i}) \psi(t,x).
\ee
The term $( q^{-1}\lambda_i^2  x_i^2 -q \frac{\partial^2}{\partial x_i^2} - \lambda_i)/2$ is the Hamiltonian of
a quantum harmonic oscillator, up to an overall energy shift. 
The oscillator has mass $q^{-1}$, spring constant
$q^{-1}\lambda_i^2$, and frequency $\lambda_i$. 
It will be convenient to describe this Hamiltonian by the  creation ($a_i^\dag$) and annihilation ($a_i$) operators
\be
a_i = \frac1{\sqrt{2}}\left( \sqrt{\frac{\lambda_i}{q}} x_i + \sqrt{\frac{q}{\lambda_i}} \frac{\partial}{\partial x_i}\right) \quad \mbox{and} \quad 
a_i^\dag= \frac1{\sqrt{2}}\left( \sqrt{\frac{\lambda_i}{q}} x_i - \sqrt{\frac{q}{\lambda_i}} \frac{\partial}{\partial x_i}\right).
\ee
Then\footnote{Here we use the identity $(x_i\partial_{x_i}-\partial_{x_i}x_i)\psi = -\psi$.} $( q^{-1}\lambda_i^2  x_i^2 -q \frac{\partial^2}{\partial x_i^2} - \lambda_i)/2=\lambda_i a_i^\dag a_i$.
Using the identity
 $q^{-1} \lambda_i x_i + \frac{\partial}{\partial x_i}=a_i\sqrt{2\lambda_i/q}$ one can rewrite Eq.~(\ref{Kolmogorov1a}) as 
\be
\label{Kolmogorov1b}
\frac{\partial}{\partial t} \psi(t,x) = \left( -\sum_{i=1}^N \lambda_i a_i^\dag a_i  +   \sum_{i=1}^N \sqrt{\frac{2\lambda_i}{q}} f_i(x) a_i\right) \psi(t,x).
\ee
We shall work in the second quantization picture and 
express the wave function $\psi(t,x)$ of $N$ harmonic oscillators  in the Fock (occupation number) basis. 
Let $\ZZ_{\ge 0}$ be the set of nonnegative integers,  $m_i \in \ZZ_{\ge 0}$ be the occupation number of the $i$-th oscillator,
and $\mm = (m_1,\ldots,m_N)$ be a multi-index composed of $N$ occupation numbers. 
Let $|\mm\ra$ be the Fock basis vector
 corresponding to a multi-index $\mm \in \ZZ_{\ge 0}^N$.
These vectors form an orthonormal basis,  i.e. $\la \mm|\nn\ra=\delta_{\nn,\mm}$ for all multi-indices $\mm,\nn$, and 
\be
\label{ai_action}
a_i |\mm\ra = \left\{
\ba{rcl} 
\sqrt{m_i} |\mm-e^i\ra &\mbox{if} & m_i\ge 1,\\
0 &\mbox{if} & m_i=0.\\
\ea
\right.
\ee
Here $e^i\in \ZZ_{\ge 0}^N$ is a multi-index such that $(e^i)_i=1$ and $(e^i)_j=0$ for $j \ne i$.
Accordingly, $a_i^\dag |\mm\ra = \sqrt{m_i+1} |\mm+e^i\ra$ and
$a_i^\dag a_i |\mm\ra = m_i |\mm\ra$. Let $|{\bf 0}\ra=|00\ldots0\ra$ be the vacuum state
annihilated by all operators $a_i$. Recall that 
the Fock space $\calF$ is a Hilbert space formed by linear combinations $|\psi\ra = \sum_{\mm \in \ZZ_{\ge 0}^N}\;  \psi_\mm |\mm\ra$
with complex coefficients $\psi_\mm$ satisfying $\sum_{\mm\in \ZZ_{\ge 0}^N}\;  |\psi_\mm|^2 <\infty$. The inner product in $\calF$ is defined as
$\la \psi|\phi\ra = \sum_{\mm\in \ZZ_{\ge 0}^N}\; \psi_\mm^* \phi_\mm$.
The second-quantized version of Eq.~(\ref{Kolmogorov1b}) becomes
\be
\label{Kolmogorov1c}
\frac{d}{dt}|\psi(t)\ra  =  \left( -\sum_{i=1}^N \lambda_i a_i^\dag a_i  +  \sqrt{2\lambda_i/q} \sum_{i=1}^N f_i(x) a_i\right)|\psi(t)\ra,
\ee
where $|\psi(t)\ra \in \calF$ 
and $f_i(x)$ is expressed in terms of creation/annihilation operators using the identity
\be
\label{x_vs_a}
x_i = \sqrt{q/(2\lambda_i)} (a_i+a_i^\dag)
\ee
The correspondence between the first and the second quantization pictures is established by $\psi(t,x)=\sum_{\mm\in \ZZ_{\ge 0}^N}\;  \la \mm|\psi(t)\ra \Phi_\mm(x)$,
where $\Phi_\mm(x)$ are the eigenfunctions for the considered system of harmonic oscillators. An explicit expression for $\Phi_\mm(x)$  
in terms of Hermite polynomials will be needed only later in Section~\ref{sec:readout}.
Recalling that  $f_i(x)=\sum_{j=1}^N b_{ij} x_j + \sum_{j,k=1}^N c_{ijk} x_j x_k$ 
and expressing each variable $x_i$ in terms of creation/annihilation operators using Eq.~(\ref{x_vs_a})
one arrives at 
\be
\label{Kolmogorov1d}
\frac{d}{dt}|\psi(t)\ra  =  (-A+B+C)|\psi(t)\ra,
\ee
where
\be
\label{ABexplicit}
A = \sum_{i=1}^N \lambda_i a_i^\dag a_i, \qquad B =  \sum_{i,j=1}^N b_{ij} (\lambda_i/\lambda_j)^{1/2}  \, a_j^\dag a_i,
\ee
and
\be
\label{Csimplified}
C = (q/2)^{1/2}\sum_{i,j,k=1}^N   (\lambda_i\lambda_j  \lambda_k)^{-1/2}  \lambda_i  c_{ijk} (a_j+a_j^\dag) (a_k+a_k^\dag) a_i.
\ee
Obviously, $A$ is hermitian and positive semi-definite. 
We claim that $B$ and $C$ are skew-hermitian whenever the original ODE Eq.~(\ref{ODE}) is divergence-free.
\begin{lemma}
Suppose the ODE Eq.~(\ref{ODE}) is divergence-free. Then 
$B^\dag=-B$ and $C^\dag=-C$. One has 
\be
\label{Csimplified}
C=
(q/2)^{1/2}\sum_{i,j,k=1}^N   (\lambda_i\lambda_j  \lambda_k)^{-1/2}   \lambda_i  c_{ijk} (a_j^\dag a_k^\dag a_i -  a_i^\dag a_k a_j).
\ee
\end{lemma}
\begin{proof}
Indeed, from $b_{ii}=0$ and  $\lambda_i b_{ij}+\lambda_j b_{ji}=0$ one infers that the matrix $b_{ij} (\lambda_i/\lambda_j)^{1/2}$ is skew-symmetric, 
which implies $B^\dag = -B$.
Recall that 
$c_{ijk}=0$ unless all indices $i,j,k$ are distinct. Hence $C$ is a cubic multi-linear polynomial in the creation/annihilation operators.
Suppose $i\ne j \ne k$. The coefficient in front of $a_i a_j a_k$ in the formula for $C$
is proportional to the product $\lambda_i c_{ijk}$ symmetrized over all permutations of indices. From the divergence-free condition
$\lambda_i c_{ijk} + \lambda_j c_{jki} + \lambda_k c_{kij}=0$ one infers that the coefficient in front of $a_i a_j a_k$ is zero. 
Using the symmetry $c_{ijk}=c_{ikj}$ one gets 
\[
C = (q/2)^{1/2}\sum_{i,j,k=1}^N   (\lambda_i\lambda_j  \lambda_k)^{-1/2}  \lambda_i  c_{ijk} (a_j^\dag a_k^\dag + 2 a_j a_k^\dag) a_i.
\]
Replacing $2\lambda_i c_{ijk} a_j a_k^\dag a_i$ by $(\lambda_i c_{ijk}  + \lambda_j c_{jik})a_j a_k^\dag a_i $ under the sum
(which is justified since we sum over all pairs $i\ne j$), using the identity $c_{jik}=c_{jki}$ and  $\lambda_i c_{ijk} + \lambda_j c_{jki} = -\lambda_k c_{kij}$
proves Eq.~(\ref{Csimplified}).
\end{proof}

Our goal is to solve the Kolmogorov equation in the form Eq.~(\ref{Kolmogorov1d}). Since the vacuum state $|{\bf 0}\ra$ is annihilated by all
terms in the righthand side of Eq.~(\ref{Kolmogorov1d}), time evolution of the vacuum amplitude  is trivial, that is,
$\la {\bf 0}|\psi(t)\ra=\la {\bf 0}|\psi(0)\ra$ for all $t$.  Instead of keeping track of  this constant amplitude in all equations, we find it convenient
to work in the subspace of the Fock space $\calF$  orthogonal to the vacuum. Let this subspace be
\be
\calF_0 = \{ |\psi\ra \in \calF \, : \,  \la {\bf 0}|\psi\ra=0 \}.
\ee
Note that $\la \psi|A|\psi\ra \ge \lambda_1 \la \psi|\psi\ra$ for all $|\psi\ra \in \calF_0$.
We assume that the initial state $|\psi(0)\ra$ belongs to $\calF_0$. 
This can always be achieved by subtracting a multiple of $|{\bf 0}\ra$ from $|\psi(0)\ra$.

\subsection{Key technical lemmas}
\label{sec:ABCstructural}

In this section we establish  structural properties of the operators  $A,B,C$ that appear in the second-quantized version of the Kolmogorov equation.

\begin{lemma}
\label{lemma:AC1}
For any vectors $\psi,\phi\in \calF_0$ one has 
\be
\label{C_overlap_upper}
|\la \phi |B+C|\psi\ra |\le  \gamma \sqrt{\la \psi|A|\psi\ra \la \phi|A^2|\phi\ra}
\ee
where $\gamma = 2q^{1/2} sJ \lambda_1^{-2} (1+ \lambda_1^{1/2} q^{-1/2})$.
\end{lemma}
{\em Comment:} we  implicitly assume that $\la \psi|A|\psi\ra<\infty$ and $\la \phi|A^2|\phi\ra<\infty$ since otherwise Eq.~(\ref{C_overlap_upper})
is meaningless. The same comment applies to all other inequalities stated below: all expected value that include $A$
or powers of $A$ are assumed to be finite. 
We defer the proof of Lemma~\ref{lemma:AC1}  to Appendix~\ref{app:lemmaAC1proof}.
Given operators $X$ and $Y$ acting on the same Hilbert space, let 
 $[X,Y]\equiv XY-YX$.
\begin{lemma}
\label{lemma:commutator_order1}
For any vector $\psi\in \calF_0$ one has
\be
\label{commutator_bound_order1}
\la \psi|[A,B+C]|\psi\ra\le 
2Js \lambda_1^{-1/2} 
(1+ 4 q^{1/2}  \lambda_1^{-1/2}  ) \sqrt{ \la \psi|A|\psi\ra \la \psi |A^2|\psi\ra}.
\ee
\end{lemma}
Note that $[A,B+C]$ is a  hermitian operator since $A$ is hermitian while $B+C$ is anti-hermitian. Thus $\la \psi| [A,B+C]|\psi\ra$  is a real number
and Eq.~(\ref{commutator_bound_order1}) is well defined. 
We prove Lemma~\ref{lemma:commutator_order1} in Appendix~\ref{app:B}.
Note that the naive estimate $\la \psi|[A,B+C]|\psi\ra\le 2 |\la \psi |A (B+C)|\psi\ra|$ combined with Lemma~\ref{lemma:AC1}
would also results in the upper bound 
on $\la \psi|[A,B+C]|\psi\ra$. However, this bound is considerably weaker than Eq.~(\ref{commutator_bound_order1}).
We shall often use the following implication of Lemma~\ref{lemma:commutator_order1}.
\begin{corol}[\bf Commutator bound]
\label{corol:commutator_bound_order1}
For any vector $\psi\in \calF_0$ one has
\be
\label{commutator_kappa1}
\la \psi|[A,B+C]|\psi\ra\le  \la \psi |A^2|\psi\ra + \kappa_1 \la \psi|A|\psi\ra,
\ee
where $\kappa_1=2J^2 s^2 \lambda_1^{-1}(1+16q \lambda_1^{-1})$.
\end{corol}
\begin{proof}
For any real numbers $\alpha_1,\alpha_2\ge 0$, and $\beta>0$ one has 
$\sqrt{\alpha_1 \alpha_2}\le (1/2)(\beta \alpha_2 + \beta^{-1} \alpha_1)$. Choose  $\alpha_1= \la \psi|A|\psi\ra$ and $\alpha_2=\la \psi|A^2|\psi\ra$.
Choose $\beta$ such that $\beta Js \lambda_1^{-1/2} 
(1+ 4 q^{1/2}  \lambda_1^{-1/2}  )=1$. Now  Eq.~(\ref{commutator_kappa1}) follows from Eq.~(\ref{commutator_bound_order1}). 
\end{proof}
We shall need a generalization  of the bound Eq.~(\ref{commutator_kappa1}) for commutators $[A^p,B+C]$ with $p\ge 2$.

\begin{lemma}[\bf Order-$p$ commutator bound]
\label{lemma:AC2}
For any integer $p\ge 2$ there exists a real number  $\kappa_p\le poly(s,J,\lambda_1^{-1},\lambda_N)$ 
such that  for any vector $\psi \in \calF_0$ one has 
\be
\label{commutator_bound}
\la \psi| [A^p,B+C]|\psi\ra \le \la \psi | A^{p+1}|\psi\ra  +  \kappa_{p} \la \psi| A^p  |\psi\ra.
\ee
\end{lemma}
We defer the proof to Appendix~\ref{app:lemmaAC2proof}.
The bound Eq.~(\ref{commutator_bound}) is rather weak as it depends on the {\em largest} dissipation rate $\lambda_N$, which often scales as a positive power of $N$.
Luckily, we only need Lemma~\ref{lemma:AC2} to prove that the Kolmogorov equation has a well-defined solution.
The scaling of $\kappa_p$ with $p\ge 2$  is irrelevant for the runtime of our quantum algorithm, which depends on the coefficients $\gamma$ and $\kappa_1$.

Finally, we shall need the following very loose upper bound. 
\begin{lemma}
\label{lemma:loose}
Let $p\ge 0$ be a real number. 
 There exist a real number $\omega_p<\infty$ such that  
\be
\label{ABCloose}
\| A^p(B+C)\psi\| \le \omega_p \| A^{p+3/2}\psi\|
\ee
for any vector $\psi \in \calF_0$. Here $\omega_p$ depends on $p$, $N$, $J$, $\lambda_1$, and $\lambda_N$. 
\end{lemma}
We defer the proof to Appendix~\ref{app:D}.

\subsection{Regularization}
\label{sec:regul-sz}

In this section we prove that the Kolmogorov equation has a well-defined solution. Our proof is constructive in the sense that it provides 
a regularization procedure approximating the exact solution 
along with  rigorous bounds on the approximation error. The regularized Kolmogorov equation can be simulated on a quantum computer using
sparse Hamiltonian simulation methods. 
Define a set of multi-indices
\[
\calJ = \{\mm =(m_1,\ldots,m_N) \in \ZZ_{\ge 0}^N \, : \, m_1+\ldots+m_N\ge 1\}.
\]
By definition, any vector in the Fock space $\calF_0$ is a linear combination of basis states $|\mm\ra$ with $\mm \in \calJ$.
Given a real cutoff parameter $k>0$,  let
\[
\calJ_k = \{\mm \in \calJ \, : \, \sum_{i=1}^N \lambda_i m_i  \le k\}.
\]
In other words, $\mm \in \calJ_k$ iff $\la \mm|A|\mm\ra\le k$.
Note that  $\calJ_k$ is a finite set (possibly empty) for any $k>0$. Indeed, $\mm \in \calJ_k$ implies $|\mm|\equiv \sum_{i=1}^N m_i \le \lceil k/\lambda_1\rceil$.
It is well known that the number of partitions of an integer $L$  into a sum of $N$ non-negative integers is ${L+N-1\choose L}$.
Substituting $L=0,1,\ldots, \lceil k/\lambda_1 \rceil$ gives $|\calJ_k|\le O(N^{k/\lambda_1})$. Define a projector
\[
\Pi_k = \sum_{\mm \in \calJ_k} |\mm\ra\la \mm|
\]
and a subspace
\[
\calH_k = \mathrm{span}(|\mm\ra\, : \, \mm \in \calJ_k) \subseteq \calF_0.
\]
We shall refer to $\calH_k$ as a {\em low-dissipation subspace}.
Define the regularized Kolmogorov equation as 
\be
\label{KE_k}
\frac{d}{dt} |\psi_k(t)\ra = (-A + \Pi_k (B+C)\Pi_k ) |\psi_k(t)\ra
\ee
with the initial condition $|\psi_k(0)\ra = |\psi(0)\ra$, where $|\psi(0)\ra\in \calF_0$ is the initial condition for the original (unregularized) KE. 
We emphasize that all regularized KEs are defined on the same Hilbert space, namely the infinite-dimensional Fock space $\calF_0$.
However the operator $B+C$ is "turned on" only on the low-dissipation subspace $\calH_k$, which is finite-dimensional.
The orthogonal complement of $\calH_k$ evolves only under the diagonal operator $-A$ 
so that $\la \mm |\psi_k(t)\ra = e^{-\lambda_\mm t } \la \mm|\psi(0)\ra$ for all $\mm \notin \calJ_k$, where $\lambda_\mm = \sum_{i=1}^N \lambda_i m_i$.
  Using these observations
one can easily check that Eq.~(\ref{KE_k}) has an infinitely differentiable solution $|\psi_k(t)\ra \in \calF_0$.
The main result of this section is the following. 
\begin{theorem}[\bf Regularization]
\label{thm:regul}
Consider any vector $\psi_0\in \calF_0$ such that 
 $\la \psi_0|A^8|\psi_0\ra<\infty$. There exists a differentiable function $\psi\, : \, \RR_{\ge 0} \to \calF_0$ 
 with $\psi(0)=\psi_0$  such that for any evolution time $t\ge 0$ 
 one has $\la \psi(t)|A^3|\psi(t)\ra<\infty$ and $\psi(t)$ solves the Kolmogorov equation\footnote{More formally, 
 $\| \psi(t+\delta) - \psi(t) - \delta (-A+B+C)\psi(t)\| = O(\delta^2)$ in the limit $\delta\to 0$, where the norm is the usual $2$-norm of the Fock
 space $\calF_0$.}

 \[
 \frac{d}{dt}\psi(t)=(-A+B+C)\psi(t).
 \]
Furthermore,   for any cutoff parameter  $k>0$ and for all $t\ge 0$
\be
\label{regul_bound1}
\| \psi_k(t) - \psi(t)\|^2 \le  \frac{12 \gamma}{ k^{1/2}} \left(\la \psi_0|A|\psi_0\ra  + \kappa_1 \la\psi_0|\psi_0\ra\right).
\ee
Here $\gamma$ and $\kappa_1$ are the coefficients defined in Section~\ref{sec:ABCstructural}.
\end{theorem}
Note that the bound $\la \psi(t)|A^3|\psi(t)\ra<\infty$ implies $\| (B+C)\psi(t)\|<\infty$ due to Lemma~\ref{lemma:loose} and thus
$(-A+B+C)\psi(t)\in \calF_0$, that is, $\psi(t)$ belongs to the domain of the operator $-A+B+C$.
In the rest of this section we prove the theorem.
\begin{proof}
For a fixed cutoff parameter $k>0$  define moment functions
\[
f_p(t) =\la \psi_k(t)|A^p|\psi_k(t)\ra,
\]
where $p$ is a non-negative integer. 
We shall need uniform (independent of $k$) upper bounds on $f_p(t)$ 
established
in the following lemma.
\begin{lemma}[\bf Moment bounds]
\label{lemma:moment_bounds}
For any integer  $p\ge 0$ and evolution time $t\ge 0$ one has 
\be
\label{f_moment_p}
f_p(t) \le  \max{(f_p(0), (\kappa_p)^p f_0(0))},
\ee
where $\kappa_0=1$ and $\kappa_p$ with $p\ge 1$ are the coefficients defined in Section~\ref{sec:ABCstructural}.
Furthermore, 
\be
\label{f_moment_p_integral}
\int_0^t d\tau\, f_p(t) \le f_{p-1}(0) + \kappa_{p-1} f_{p-2}(0) + \kappa_{p-2} \kappa_{p-1} f_{p-3}(0) + \ldots + (\kappa_1 \kappa_2 \cdots \kappa_{p-1}) f_0(0)
\ee
for $p\ge 1$.
\end{lemma}
\begin{proof}
From the equation of motion Eq.~(\ref{KE_k}) one gets
\[
\frac{d}{dt} f_p(t) = -2f_{p+1}(t) - \la \psi_k(t)| [A^p, \Pi_k (B+C)\Pi_k  ] |\psi_k(t)\ra = -2f_{p+1}(t) -\la \psi_k(t)|\Pi_k [A^p,B+C] \Pi_k  |\psi_k(t)\ra.
\]
Here we used the fact that $A$ is hermitian while $B$ and $C$ are anti-hermitian. 
Also we used the identity $A\Pi_k=\Pi_k A$. 
Suppose $p=0$. Then the term with the commutator disappears, that is, $(d/dt)f_0(t) = -2 f_1(t)$.
Since $f_1(t)\ge 0$, this implies $f_0(t)\le f_0(0)$, proving Eq.~(\ref{f_moment_p}) with $p=0$.
Integrating $(d/dt)f_0(t) = -2 f_1(t)$ gives
\[
\int_0^t f_1(t) = \frac12( f_0(0)-f_0(t)) \le \frac12 f_0(0) \le f_0(0)
\]
proving Eq.~(\ref{f_moment_p_integral}) with $p=1$.
Suppose $p \ge 1$.
Applying the commutator bound of Corollary~\ref{corol:commutator_bound_order1} and Lemma~\ref{lemma:AC2} with $\psi = \Pi_k \psi_k(t)$ one gets
\[
|\la \psi_k(t)|\Pi_k [A^p,B+C] \Pi_k  |\psi_k(t)\ra| \le \la \psi_k(t)| A^{p+1} + \kappa_p A^p |\psi_k(t)\ra = f_{p+1}(t) + \kappa_p f_p(t).
\]
Hence
\be
\label{f_p_derivative_eq1}
\frac{d}{dt} f_p(t) \le  -f_{p+1}(t)  + \kappa_p f_p(t).
\ee
Jensen's inequality gives
\[
f_{p+1}(t) \ge (f_0(t))^{-1/p} (f_p(t))^{(p+1)/p} \ge (f_0(0))^{-1/p} (f_p(t))^{(p+1)/p},
\]
where the  second inequality follows from $f_0(t)\le f_0(0)$.
Thus
\[
\frac{d}{dt} f_p(t) \le -  (f_0(0))^{-1/p} (f_p(t))^{(p+1)/p} + \kappa_p f_p(t).
\]
The derivative of $f_p(t)$ is  negative whenever $f_p(t)> (\kappa_p)^p f_0(0)$.
Hence, if  $f_p(0)\le (\kappa_p)^p f_0(0)$ then $f_p(t)\le (\kappa_p)^p f_0(0)$ for all $t\ge 0$. Otherwise, if $f_p(0)>(\kappa_p)^p f_0(0)$, then
$f_p(t)$ is monotone decreasing until the first time moment $t$ such that  $f_p(t)\le (\kappa_p)^p f_0(0)$ and this bound holds for all subsequent times.
This proves Eq.~(\ref{f_moment_p}).
From Eq.~(\ref{f_p_derivative_eq1}) one gets
\[
\int_0^t d\tau\, f_{p+1}(\tau) \le f_p(0) +  \kappa_p \int_0^t d\tau \, f_p(\tau).
\]
Applying this recursively proves Eq.~(\ref{f_moment_p_integral}).
\end{proof}
As a corollary of Lemma~\ref{lemma:moment_bounds} one infers that solutions of the regularized KE  have most of their mass
on the low-dissipation subspace, as formally stated below. 
\begin{corol}[\bf Projection error]
For any cutoff parameters $k,\ell >0$ and evolution time $t\ge 0$ one has
\be
\label{projection_error}
\| (I-\Pi_k)\psi_\ell(t)\|^2 \le \frac{\la \psi(0)|A|\psi(0)\ra + \kappa_1 \la \psi(0)|\psi(0)\ra }{ k}.
\ee
Here $\kappa_1$ is the coefficient defined in Section~\ref{sec:ABCstructural}.
\end{corol}
\begin{proof}
An operator inequality $I-\Pi_k \le k^{-1} A$ gives 
\be
\la \psi_\ell(t) | I- \Pi_k|\psi_\ell(t)\ra \le   k^{-1} \la \psi_\ell(t) |A|\psi_\ell(t)\ra \le  k^{-1} \cdot \max{(f_1(0), \kappa_1f_0(0))}.
\ee
Here the second inequality follows from Lemma~\ref{lemma:moment_bounds} with $p=1$.
This implies Eq.~(\ref{projection_error}).
\end{proof}

Fix a pair of cutoff parameters  $0<k<\ell$ and define  an "error state"
\be
|e(t)\ra = |\psi_k(t)\ra - |\psi_\ell(t)\ra.
\ee
Our strategy is to prove that the norm of $e(t)$ goes to zero in the limit $k,\ell\to \infty$.
For technical reasons, we shall also need an upper bound on the norm of $A^{p/2}e(t)$ for integers $p=O(1)$.
To this end define moments
\be
h_p(t) = \la e(t)|A^p|e(t)\ra.
\ee
We shall be primarily concerned with $h_0(t)$ which controls the rate of convergence of solutions $\psi_k(t)$ in the limit $k\to \infty$.
Higher order moments $h_p(t)$ with $p\ge1$ are only needed to prove that 
$\psi_k(t)$ "converges to the right thing" in the sense that the limiting point of the sequence $\psi_k(t)$ solves the Kolmogorov equation (the unregularized one).
 In particular, the runtime of our quantum algorithm only depends on the upper bound on $h_0(t)$.
\begin{lemma}
\label{lemma:regularization_p}
For  any  cutoff parameters $0< k<\ell$ and evolution time $t\ge 0$ one has 
\be
\label{h0_upper_bound}
h_0(t)\le 
\frac{12 \gamma}{k^{1/2}} (\la \psi(0)|A|\psi(0)\ra  + \kappa_1 \la\psi(0)|\psi(0)\ra).
\ee
For any integer $p\ge 1$ 
\be
\label{hp_upper_bound}
h_p(t) \le \frac{e^{2\kappa_p t}}{k^{1/2}} poly(s,J,\lambda_1^{-1},\lambda_N)\max_{0\le p'\le 2p+2} \la \psi(0)|A^{p'}|\psi(0)\ra.
\ee
Here  $\gamma$ and $\kappa_p$ are the coefficients defined in Section~\ref{sec:ABCstructural}.
\end{lemma}
\begin{proof}
From the equation of motion Eq.~(\ref{KE_k}) 
and the analogous equation for $\psi_\ell(t)$
one gets
\[
\frac{d}{dt} |e(t)\ra =-A |e(t)\ra  + \Pi_k (B+C)\Pi_k |\psi_k(t)\ra - \Pi_\ell (B+C)\Pi_\ell |\psi_\ell(t)\ra
\]
and
\[
\frac{d}{dt} \la e(t)| =- \la e(t)|A  - \la \psi_k(t)| \Pi_k (B+C)\Pi_k + \la \psi_\ell(t)| \Pi_\ell (B+C)\Pi_\ell.
\]
Here we noted that $A$ is hermitian while $B+C$ is anti-hermitian. 
It follows that
\[
\frac12 \frac{d}{dt} h_p(t) = -h_{p+1}(t) +  \la e(t) | A^p  \Pi_k (B+C)\Pi_k |\psi_k(t)\ra - \la e(t) | A^p  \Pi_\ell (B+C)\Pi_\ell |\psi_\ell(t)\ra.
\]
Clearly, $k<\ell$ implies $\calJ_k\subseteq \calJ_\ell$. Thus one can write
\[
\Pi_\ell = \Pi_k + \Pi_k^\perp, \qquad \Pi_k^\perp \equiv  \sum_{\mm \in \calJ_\ell\setminus \calJ_k} \; |\mm\ra\la \mm|.
\]
Then
\begin{align*}
\frac12 \frac{d}{dt} h_p(t) & = -h_{p+1}(t)  +  \la e(t) | A^p  \Pi_k (B+C)\Pi_k |e(t)\ra
-   \la e(t) | A^p  \Pi_k^\perp  (B+C)\Pi_k |\psi_\ell(t)\ra \\
& -  \la e(t) | A^p  \Pi_k  (B+C)\Pi_k^\perp |\psi_\ell(t)\ra  -  \la e(t) | A^p  \Pi_k^\perp  (B+C)\Pi_k^\perp |\psi_\ell(t)\ra.
\end{align*}
Note that
\[
 \la e(t) | A^p  \Pi_k (B+C)\Pi_k |e(t)\ra =\frac12  \la e(t) | \Pi_k  [A^p,B+C] \Pi_k |e(t)\ra
 \]
for $p\ge 1$. Indeed,  $(1/2)  \Pi_k  [A^p,B+C] \Pi_k$ is the real (hermitian) part of the operator $A^p  \Pi_k (B+C)\Pi_k$.
Applying the commutator bound of Corollary~\ref{corol:commutator_bound_order1} and Lemma~\ref{lemma:AC2} with $\psi=\Pi_k e(t)$ one gets
\[
  \la e(t) | \Pi_k  [A^p,B+C] \Pi_k |e(t)\ra \le \la e(t)| A^{p+1} + \kappa_p A^p |e(t)\ra.
 \]
 We arrive at 
\begin{align}
\frac12 \frac{d}{dt} h_p(t) &  \le 
\kappa_p h_p(t)+
  |\la e(t) | A^p  \Pi_k^\perp  (B+C)\Pi_k |\psi_\ell(t)\ra|\nonumber \\ 
  & + 
|\la e(t) | A^p  \Pi_k  (B+C)\Pi_k^\perp |\psi_\ell(t)\ra| 
  +  |\la e(t) | A^p  \Pi_k^\perp  (B+C)\Pi_k^\perp |\psi_\ell(t)\ra|. \label{h_p_derivative}
\end{align}
Here the term $\kappa_p h_p(t)$ disappears for $p=0$ since $[A^0,B+C]=0$.
 We shall bound each term that depends on $B+C$ 
 using Lemma~\ref{lemma:AC1} and  the moment bounds of Lemma~\ref{lemma:moment_bounds}.
Applying Lemma~\ref{lemma:AC1} with $|\psi\ra = \Pi_k^\perp A^p |e(t)\ra$ and $|\phi\ra = \Pi_k |\psi_\ell(t)\ra$
gives 
\[
|\la e(t) | A^p  \Pi_k^\perp  (B+C)\Pi_k |\psi_\ell(t)\ra| \le \gamma \sqrt{\la e(t)| \Pi_k^\perp A^{2p+1} |e(t)\ra
\la \psi_\ell(t)|A^2 |\psi_{\ell}(t)\ra}.
\]
Here $\gamma$ is
defined in Lemma~\ref{lemma:AC1}.
The restriction of $A$ onto the range of $\Pi_k^\perp$ has eigenvalues at least $k$ (and at most $\ell$).
Hence 
\[
\la e(t)| \Pi_k^\perp A^{2p+1} |e(t)\ra \le k^{-1}  \la e(t) | A^{2p+2}|e(t)\ra.
\]
Using the definition $e(t) = \psi_k(t)-\psi_\ell(t)$  and
 triangle inequality one gets
\[
 \sqrt{\la e(t)|  A^{2p+2} |e(t)\ra} \le
 \sqrt{\la \psi_k(t)|  A^{2p+2} |\psi_k(t)\ra} 
+\sqrt{\la \psi_\ell(t)|  A^{2p+2} |\psi_\ell(t)\ra}.
\]
We arrive at 
\[
|\la e(t) | A^p  \Pi_k^\perp  (B+C)\Pi_k |\psi_\ell(t)\ra| \le  2\gamma k^{-1/2} \cdot  \sqrt{f_{2p+2}(t) f_2(t)}.
\]
By a slight abuse of notations, here we ignore the dependence of the moment functions $f_{2p+2}(t)$ and $f_2(t)$ on the
cutoff  parameters $k$ and $\ell$
(this is justified since the upper bound of Lemma~\ref{lemma:moment_bounds} is uniform in $k$). 
Applying exactly the same argument to the last term in Eq.~(\ref{h_p_derivative})  gives
\[
 |\la e(t) | A^p  \Pi_k^\perp  (B+C)\Pi_k^\perp |\psi_\ell(t)\ra|\le 2\gamma k^{-1/2} \cdot  \sqrt{f_{2p+2}(t) f_2(t)}.
\]
It remains to bound the term $|\la e(t) | A^p  \Pi_k  (B+C)\Pi_k^\perp |\psi_\ell(t)\ra|$ 
 in Eq.~(\ref{h_p_derivative}).
Applying Lemma~\ref{lemma:AC1} with $|\phi\ra = \Pi_k  A^p |e(t)\ra$ and $|\psi\ra = \Pi_k^\perp |\psi_\ell(t)\ra$
gives 
\[
|\la e(t) | A^p  \Pi_k  (B+C)\Pi_k^\perp |\psi_\ell(t)\ra|\le \gamma \sqrt{ \la e(t)|A^{2p+2} |e(t)\ra \la \psi_\ell(t)| A \Pi_k^\perp |\psi_\ell(t)\ra}.
\]
The same arguments as above give $ \la \psi_\ell(t)| A \Pi_k^\perp |\psi_\ell(t)\ra\le k^{-1} f_2(t)$ and
$ \sqrt{\la e(t)|A^{2p+2} |e(t)\ra} \le 2\sqrt{f_{2p+2}(t)}$. Hence 
\[
|\la e(t) | A^p  \Pi_k  (B+C)\Pi_k^\perp |\psi_\ell(t)\ra|\le2\gamma k^{-1/2} \cdot  \sqrt{f_{2p+2}(t) f_2(t)}.
\]
Combining the above bounds and Eq.~(\ref{h_p_derivative})   gives
\be
\label{h_0_derivative1}
\frac{d}{dt} h_0(t) \le 12 \gamma k^{-1/2} f_2(t)
\ee
and
\be
\label{h_p_derivative1}
\frac{d}{dt} h_p(t)\le 2\kappa_p h_p(t) + 12   \gamma k^{-1/2} \sqrt{f_{2p+2}(t) f_2(t)}
\ee
for $p\ge 1$. Consider first the case $p=0$. Integrating Eq.~(\ref{h_0_derivative1}), recalling that $h_0(0)=0$,  and using the second part of Lemma~\ref{lemma:moment_bounds}
gives
\begin{align}
h_0(t) & \le  12 \gamma k^{-1/2} \int_0^t d\tau\,  f_2(\tau)
\le  12 \gamma k^{-1/2} (f_1(0) + \kappa_1 f_0(0)) \nonumber \\
& =  12 \gamma k^{-1/2} (\la \psi(0)|A|\psi(0)\ra  + \kappa_1 \la\psi(0)|\psi(0)\ra). 
\end{align}
Suppose $p\ge 1$. Lemma~\ref{lemma:moment_bounds} implies that
$ \sqrt{f_{2p+2}(t) f_2(t)}\le poly(J,s,\lambda_1^{-1},\lambda_N)\max_{0\le p'\le 2p+2} f_{p'}(0)$ for all $t\ge0$.
Hence Eq.~(\ref{h_p_derivative1}) and  Gronwall inequality gives
\be
h_p(t) \le \frac{e^{2\kappa_p t}}{k^{1/2}} poly(s,J,\lambda_1^{-1},\lambda_N)\max_{0\le p'\le 2p+2} f_{p'}(0).
\ee
This is equivalent to the statement of the lemma.
\end{proof}

From Lemma~\ref{lemma:regularization_p} we derive the following. 
\begin{corol}
\label{corol:limit}
Let $p\ge 1$ be an integer such that $\la \psi(0)|A^{p'}|\psi(0)\ra<\infty$ for all $p'\in [0,2p+2]$.
Then for any $t\ge 0$ there exists a vector function  $\psi \, : \, [0,t]\to   \calF_0$ such that 
\be
\label{convergence_bound_with_A}
\lim_{k\to \infty} \,\max_{\tau \in [0,t]} \| A^{p/2} (\psi_k(\tau) - \psi(\tau))\|=0.
\ee
Furthermore, for all $t\ge 0$
\be
\label{regularization_error}
\| \psi_k(t) - \psi(t)\|^2 \le  \frac{12 \gamma}{ k^{1/2}} (\la \psi(0)|A|\psi(0)\ra  + \kappa_1 \la\psi(0)|\psi(0)\ra)
\ee
Here $\gamma$ and $\kappa_1$ are the coefficients defined in Section~\ref{sec:ABCstructural}.
\end{corol}
\begin{proof}
Consider  a sequence of vector functions $\{A^{p/2} \psi_k(\tau) \}$ mapping $[0,t]$ to $\calF_0$.
For simplicity here we restrict ourselves to integer cutoff  parameters $k\ge 1$.
We claim that the sequence $\{A^{p/2} \psi_k(\tau) \}$  is uniformly  Cauchy.
Indeed,  for any $1\le k<\ell$ one has
\[
\| A^{p/2}\psi_k(\tau) - A^{p/2}\psi_\ell(\tau)\|^2 = \la e(\tau)|A^p |e(\tau)\ra=h_p(\tau),
\]
 where $e(\tau)=\psi_k(\tau)-\psi_\ell(\tau)$
is the error state and $h_p(\tau)$ decays as $1/\sqrt{k}$ in the limit $k \to \infty$ due to  Eq.~(\ref{hp_upper_bound}) of Lemma~\ref{lemma:regularization_p}.
Moreover, the upper bound on $h_p(\tau)$ is uniform with respect to $\tau$ (it depends only on $t$).
Thus $\{A^{p/2} \psi_k(\tau) \}$  is uniformly  Cauchy.
Since $\calF_0$ is the complete metric space, the sequence $\{A^{p/2} \psi_k(\tau)\}$
must have a limit 
\[
\theta_p(\tau)=\lim_{k\to \infty} A^{p/2} \psi_k(\tau) \in \calF_0.
\]
For each $\tau \in [0,t]$ define
\[
\psi(\tau) = A^{-p/2} \theta_p(\tau) \in \calF_0.
\]
This is well-defined  since $A^{-1/2}$ is a bounded operator on $\calF_0$ (note that eigenvalues of $A^{-1/2}$ are at most $\lambda_1^{-1/2}$).
Then $\theta_p(\tau)=A^{p/2} \psi(\tau)$ and 
Eq.~(\ref{convergence_bound_with_A}) is equivalent to
\[
\lim_{k\to \infty} \,\max_{\tau \in [0,t]} \| A^{p/2} \psi_k(\tau) - \theta_p(\tau)\|=0
\]
which follows from the fact that $\{A^{p/2} \psi_k(\tau) \}$  is uniformly  Cauchy and $\theta_p(\tau)$ is its limiting point. 
Since $A^{-1/2}$ is a bounded operator on $\calF_0$, one gets
\[
\psi(\tau) = A^{-p/2} \lim_{k\to \infty} A^{p/2} \psi_k(\tau) = \lim_{k\to \infty} \psi_k(\tau).
\]
For any $k\ge 1$ one has
\[
\| \psi(\tau)-\psi_k(\tau)\|^2 = \lim_{\ell \to \infty} \| \psi_\ell(\tau) - \psi_k(\tau)\|^2 = h_0(\tau).
\]
Thus the bound Eq.~(\ref{regularization_error}) follows from Eq.~(\ref{h0_upper_bound}) of Lemma~\ref{lemma:regularization_p}.
\end{proof}
\begin{lemma}
\label{lemma:KEsolution}
Suppose $\la \psi(0)|A^8|\psi(0)\ra<\infty$. Then
the vector function $\psi\, : \, \RR_{\ge 0} \to \calF_0$ constructed in Corollary~\ref{corol:limit} is differentiable and solves the Kolmogorov equation, that is,
\be
\label{KErestated}
\frac{d}{dt} \psi(t) = (-A+B+C)\psi(t)
\ee
for all $t\ge 0$.
\end{lemma}
\begin{proof}
We can apply Corollary~\ref{corol:limit} with $p=3$ since $\la \psi(0)|A^8|\psi(0)\ra<\infty$ (which implies $\la \psi(0)|A^{p'}|\psi(0)\ra<\infty$ for all $0\le p'\le 8=2p+2$).
From Eq.~(\ref{convergence_bound_with_A}) one gets
\[
\la \psi(t)|A^3|\psi(t)\ra =\lim_{k\to \infty} \la \psi_k(t)|A^3|\psi_k(t)\ra \le f_3(t),
\]
where $f_3(t)$ is the moment function upper bounded by Lemma~\ref{lemma:moment_bounds}.
Thus $\la \psi(t)|A^3|\psi(t)\ra<\infty$ which implies $\| (-A+B+C)\psi(t)\|<\infty$ due to Lemma~\ref{lemma:loose}.
Hence $(-A+B+C)\psi(t)\in \calF_0$.
We need to show that 
\be
\label{time_step_delta_eq1}
\| \psi(t+\delta) - \psi(t) - \delta (-A+B+C)\psi(t)\| = O(\delta^2)
\ee
in the limit $\delta\to 0$.
Consider any  $k\ge 1$. Since $\psi_k(t)$ is infinitely differentiable, one can write
\begin{align}
\label{time_step_delta_eq2}
\| \psi_k(t+\delta) - \psi_k(t) - \delta (-A+\Pi_k(B+C)\Pi_k)\psi_k(t)\| & =\left\| 
\int_t^{t+\delta} d\tau_1 \,  \left( \frac{d}{d\tau_1} \psi_k(\tau_1) - \frac{d}{dt}\psi_k(t)\right) \right\| \nonumber \\
&\le  \int_t^{t+\delta} d\tau_1 \, \int_t^{\tau_1} d\tau_2 \left\| \frac{d^2}{d\tau_2^2} \psi_k(\tau_2)\right\|
\end{align}
From the equation of motion Eq.~(\ref{KE_k}) one gets
\[
\frac{d^2}{d\tau_2^2} \psi_k(\tau_2) = \left(-A+\Pi_k(B+C)\Pi_k\right)^2 \psi_k(\tau_2).
\]
Let $\varphi \equiv \left(-A+\Pi_k(B+C)\Pi_k\right)\psi_k(\tau_2)$. Then
\[
\| \left( -A+\Pi_k(B+C)\Pi_k \right) \varphi \| \le \| A\varphi\| + \|(B+C)\Pi_k \varphi\| \le \|A \varphi\| + \omega_0 \| A^{3/2} \varphi\|
\le (\lambda_1^{-1/2} + \omega_0) \| A^{3/2} \varphi\|.
\]
Here the second inequality uses Lemma~\ref{lemma:loose}.
The definition of $\varphi$  gives
\[
 \| A^{3/2} \varphi\| \le \| A^{5/2} \psi_k(\tau_2)\| + \| A^{3/2} (B+C) \Pi_k \psi_k(\tau_2)\|.
\]
The last term can be upper bounded using Lemma~\ref{lemma:loose} which gives
\[
\| A^{3/2} (B+C) \Pi_k \psi_k(\tau_2)\|\le \omega_{3/2} \| A^3 \psi_k(\tau_2)\|.
\]
Combining the above bounds results in
\[
 \left\| \frac{d^2}{d\tau_2^2} \psi_k(\tau_2)\right\|^2 = 
\| \left(-A+\Pi_k(B+C)\Pi_k\right)^2 \psi_k(\tau_2)\|^2  \le \eta  \la \psi_k(\tau)|A^6|\psi_k(\tau)\ra
\]
for some real number $\eta<\infty$ independent of $k$ and $\delta$ (although $\eta$ depends on $N$ and other parameters of the original ODE). 
Lemma~\ref{lemma:moment_bounds} provides a uniform (independent of $k$ and $\delta$) upper bound on $\la \psi_k(\tau)|A^6|\psi_k(\tau)\ra$
since $\la \psi(0)|A^6|\psi(0)\ra<\infty$. Substituting this into Eq.~(\ref{time_step_delta_eq2}) and taking the double integeral one gets
\be
\label{time_step_delta_eq3}
\| \psi_k(t+\delta) - \psi_k(t) - \delta (-A+\Pi_k(B+C)\Pi_k)\psi_k(t)\|  \le O(\delta^2),
\ee
where the constant  hidden in $O(\delta^2)$ is independent of $k$. Suppose we can prove that 
\be
\label{time_step_delta_eq4}
\lim_{k\to \infty}  (-A+\Pi_k(B+C)\Pi_k)\psi_k(t) = (-A+B+C)\psi(t).
\ee
We have already established that $\lim_{k\to \infty} \psi_k(t)=\psi(t)$ and $\lim_{k\to \infty} \psi_k(t+\delta)=\psi(t+\delta)$.
Taking the limit $k\to \infty$ in Eq.~(\ref{time_step_delta_eq3}) and using Eq.~(\ref{time_step_delta_eq4}) proves
Eq.~(\ref{time_step_delta_eq1}), thereby proving the lemma. Hence it remains to prove Eq.~(\ref{time_step_delta_eq4}).
We divide the proof into two parts.
\begin{prop}
\label{prop:limit1}
\be
\label{time_step_delta_eq5}
 (-A+B+C)\psi(t) = \lim_{k\to \infty}  (-A+B+C) \psi_k(t).
\ee
\end{prop}
\begin{prop}
\label{prop:limit2}
\be
\label{time_step_delta_eq6}
\lim_{k\to \infty}  \| (B+C)\psi_k(t) -\Pi_k(B+C)\Pi_k \psi_k(t)\|=0.
\ee
\end{prop}
Replacing the righthand side of Eq.~(\ref{time_step_delta_eq4})
with $\lim_{k\to \infty}  (-A+B+C) \psi_k(t)$ using Proposition~\ref{prop:limit1},
cancelling the  terms $-A\psi_k(t)$ that appear in both sides, and using Proposition~\ref{prop:limit2} proves Eq.~(\ref{time_step_delta_eq4}).
\begin{proof}[\bf Proof of Proposition~\ref{prop:limit1}]
Applying the triangle inequality and Lemma~\ref{lemma:loose} one gets
\[
\|  (-A+B+C)\psi(t)  - (-A+B+C) \psi_k(t)\| \le \| A(\psi(t)-\psi_k(t))\| + \omega_0 \| A^{3/2} (\psi(t)-\psi_k(t)) \|,
\]
Both terms go to zero in the limit $k\to \infty$
due to Corollary~\ref{corol:limit}.
\end{proof}
\begin{proof}[\bf Proof of Proposition~\ref{prop:limit2}]
Let $\Pi_{>k}= I-\Pi_k$. We have
\[
B+C - \Pi_k (B+C)\Pi_k  = (B+C) (\Pi_k+\Pi_{>k}) - \Pi_k (B+C)\Pi_k = (B+C)\Pi_{>k}  + \Pi_{>k} (B+C) \Pi_{k}.
\]
Thus it suffices to prove that $\|(B+C)\Pi_{>k} \psi_k(t)\|$ and $\|  \Pi_{>k} (B+C) \Pi_{k}\psi_k(t)\|$ go to zero in the limit $k\to \infty$.
From Lemma~\ref{lemma:loose} one gets
\[
\| (B+C)\Pi_{>k} \psi_k(t)\| \le  \omega_0 \| A^{3/2} \Pi_{>k} \psi_k(t)\|.
\]
Since $\Pi_{>k}\le k^{-1} A \Pi_{>k}$, one gets
\[
\| (B+C)\Pi_{>k} \psi_k(t)\| \le   \frac{\omega_0}{k} \| A^{5/2}\psi_k(t)\| = \frac{\omega_0 \sqrt{f_5(t)}}{k},
\]
where $f_5(t)=\la \psi_k(t)|A^5|\psi_k(t)\ra$ has a uniform (independent of $k$)  upper bound of  Lemma~\ref{lemma:moment_bounds}.
Hence 
\[
\lim_{k\to \infty} \| (B+C)\Pi_{>k} \psi_k(t)\|=0.
\]
Similar  arguments give
\begin{align*}
\| \Pi_{>k} (B+C) \Pi_{k} \psi_k(t)\| &  \le k^{-1} \| A \Pi_{>k} (B+C)\Pi_k \psi_k(t)\| =  k^{-1} \|  \Pi_{>k}  A(B+C)\Pi_k \psi_k(t)\|  \\
& \le k^{-1} \| A(B+C)\Pi_k \psi_k(t)\| 
\le \omega_1 k^{-1}  \| A^{5/2}\Pi_k \psi_k(t)\| \le  \frac{\omega_1 \sqrt{f_5(t)}}{k}
\end{align*}
Here the third   inequality uses Lemma~\ref{lemma:loose} with $p=1$. Hence
\[
\lim_{k\to \infty} \| \Pi_{>k} (B+C) \Pi_{k} \psi_k(t)\|=0.
\]
This proves Eq.~(\ref{time_step_delta_eq6}).
\end{proof}
\end{proof}
The statement of  Theorem~\ref{thm:regul} is equivalent to 
Corollary~\ref{corol:limit} with $p=3$, Lemma~\ref{lemma:KEsolution}, and Eq.~(\ref{regularization_error}).
\end{proof}

\section{Readout state}
\label{sec:readout}

Let $|\psi(t)\ra\in \calF_0$ be the solution of the second-quantized Kolmogorov equation  constructed in Section~\ref{sec:Kolmogorov}. 
By construction, the original Kolmogorov equation  Eq.~(\ref{Kolmogorov1}) has a solution
\be
\label{u_via_Phi}
u(t,x) 
= \frac1{\sqrt{\mu(x)}} \sum_{\mm \in \calJ} \la \mm|\psi(t)\ra \Phi_\mm(x).
\ee
Here $\mu(x)$ is the normalized Gaussian distribution defined in Eq.~(\ref{mu(x)}) and
$\Phi_\mm(x)$ are normalized eigenfunctions for a system of $N$ quantum harmonic oscillators such that  the $i$-th oscillator has mass,
$q^{-1}$, spring constant $q^{-1}\lambda_i^2$, and frequency $\lambda_i$. 
A well-known expression for these eigenfunctions is 
\[
 \Phi_\mm(x)=\sqrt{\mu(x)} \HH_\mm(x),
\]
where 
$\HH_\mm(x)$ are (rescaled) $N$-variate Hermite polynomials defined as follows.
Suppose first $x\in \RR$ is a single variable.
We use probabilist's Hermite polynomials defined as
\[
\herm_n(x) = (-1)^n e^{x^2/2} \frac{d^n}{dx^n} e^{-x^2/2},
\]
where $n\ge 0$ is the degree of $\herm_n(x)$. 
For example, $\herm_0(x)=1$, $\herm_1(x)=x$, $\herm_2(x)=x^2-1$, and
$\herm_3(x)=x^3-3x$. Define
\be
\label{Hermite_normalized}
\mathbb{H}_{\multi{m}}(x) = \prod_{i=1}^N  \frac1{\sqrt{(m_i)!}} \herm_{m_i} \left(x_i\sqrt{2\lambda_i/q}\right),
\ee
where $x\in \RR^N$. The rescaling of variables in $\HH_\mm(x)$ ensures that the eigenfunctions $\Phi_\mm(x)$ form an orthonormal family. Equivalently,
\be
\label{orthogonality_relations}
\int_{\RR^N} dx\, \mu(x) \HH_\nn(x)  \HH_\mm(x) = \delta_{\nn,\mm}.
\ee
 From Eq.~(\ref{u_via_Phi}) one gets
\be
\label{u_via_H}
u(t,x) =  \sum_{\mm \in \calJ} \la \mm|\psi(t)\ra \HH_\mm(x)
\ee

\subsection{Noiseless initial conditions}
\label{sec:readout_challenges}

Suppose first that the SDE Eq.~(\ref{SDE}) has noiseless initial conditions, $X(0)=x$.
 For the sake of argument, suppose a quantum simulator can efficiently prepare the exact
solution $|\psi(t)\ra$ of the Kolmogorov equation (up to the normalization). Can one efficiently extract $u(t,x)$ from $|\psi(t)\ra$ ? 
From Eq.~(\ref{u_via_H}) one gets
\be
\label{readout_failed}
u(t,x) = \la \varphi_{out}(x)|\psi(t)\ra,
\ee
where $|\varphi_{out}(x)\ra$ is a "state" with amplitudes $\la \mm|\varphi_{out}(x)\ra = \HH_\mm(x)$. One may hope to construct a quantum
circuit that prepares a normalized version of $|\varphi_{out}(x)\ra$ and estimates the inner product in Eq.~(\ref{readout_failed}) using the Hadamard test. 
Unfortunately, this approach fails since the "state" $|\varphi_{out}(x)\ra$  has an infinite norm. 
For example, suppose $x$ is the all-zero vector, $x=0^N$.
By definition, $\HH_\mm(0^N)=\prod_{i=1}^N (1/\sqrt{m_i!}) \herm_{m_i}(0)$. 
Using the identity $\herm_n(0)=(-1)^{n/2} (n-1)!!$ for even $n$ and $\herm_n(0)=0$ for odd $n$, see Ref.~\cite{andrews1999special}, p.~282, one gets
\[
\frac1{n!} \herm_n(0)^2  = \frac{(n-1)!!}{n!!}  \sim 1/\sqrt{n}.
\]
Thus 
\[
\|\varphi_{out}(0^N)\|^2 =  \sum_{\mm\in \ZZ_{\ge 0}^N} \;  \HH_\mm(0^N)^2\sim\left( \sum_{n=0}^\infty 1/\sqrt{n}\right)^N=\infty.
\]
Using  Mehler's formula one can check that the sum $\sum_{\mm\in \ZZ_{\ge 0}^N} \HH_\mm(x)^2$ diverges for any $x\in \RR^N$.

\subsection{Noisy initial conditions}
\label{sec:readout_solution}

Recall that our quantum algorithm aims to approximate the expected value $v(t,x)=\EE_z u(t,x+z)$, where $z\in \RR^N$ is a random  vector
that models noise in the initial conditions. 
Each variable $z_i$ is drawn from $\calN(0,q/(2\lambda_i))$. Equivalently, 
$z$ is drawn from $\mu(z)$.
We claim that 
for any $x\in \RR^N$ there there exists a state 
$|\psi_{out}(x)\ra\in \calF$ such that 
\be
\label{readout_eq1}
v(t,x) = \la \psi_{out}(x)|\psi(t)\ra
\ee
and
\be\begin{aligned}
\label{readout_state_norm}
    \|\psi_{out}(x)\|&=  \exp{\left[ q^{-1} \|x\|_\lambda^2\right]}.
\end{aligned}\ee
We shall refer to $|\psi_{out}(x)\ra$ as the {\em readout state}.
Indeed, from Eq.~(\ref{u_via_H}) one gets
\[
v(t,x)=
\int_{\RR^N}dy\,  \mu(z) u(t,x+z)
=\sum_{\mm\in \calJ} \la \mm|\psi(t)\ra \int_{\RR^N}dz\,  \mu(z)  \HH_\mm(x+z).
\]
The well-known  umbral identity for  single-variate Hermite polynomials reads as 
\[
\herm_n(x+y) = \sum_{k=0}^n {n \choose k} x^{n-k} \herm_k(y)
\]
for any $x,y\in \RR$. Using this identity and the orthogonality relations Eq.~(\ref{orthogonality_relations})
 gives
\[
\int_\RR dy \, e^{-y^2/2} \herm_n(x+y)  = \int_\RR dy\,  e^{-y^2/2} \herm_n(x+y)   \herm_0(y) = (2\pi)^{1/2} x^n
\]
for any $x\in \RR$.
As a consequence,
\[
\int_{\RR^N} dz \, \mu(z) \HH_\mm(x+z) = \prod_{i=1}^N \frac1{\sqrt{(m_i !)}} \left( x_i \sqrt{\frac{2 \lambda_i}{q}}\right)^{m_i}
\]
for any $x\in \RR^N$. We arrive at
\[
v(t,x) = \sum_{\mm\in \calJ} \psi_\mm(t)  \prod_{i=1}^N \frac1{\sqrt{(m_i !)}} \left( x_i \sqrt{\frac{2 \lambda_i}{q}}\right)^{m_i} = \la \psi_{out}(x)|\psi(t)\ra,
\]
where
\be
\label{psi_out}
|\psi_{out}(x)\ra = \sum_{\mm \in \ZZ_{\ge 0}^N}\; 
|\mm\ra \cdot  \prod_{i=1}^N \frac1{\sqrt{(m_i !)}} \left( x_i \sqrt{\frac{2 \lambda_i}{q}}\right)^{m_i} \in \calF.
\ee
We note that $|\psi_{out}(x)\ra$ coincides with the coherent state embedding of $x$
discussed in~\cite{engel2021linear}. 
One can easily check Eq.~(\ref{readout_state_norm}) by noting that
$|\psi_{out}(x)\ra$ is a tensor product of coherent states associated with each variable $x_i$, the norm is multiplicative under the tensor product,
and $\|\psi_{out}(x)\|^2= e^{2q^{-1} \lambda_1 x_1^2}$
in the case $N=1$.

In the next section we approximate $v(t,x)$ by the inner product
 $\la \psi_{out}(x)|\psi_k(t)\ra$, where
  $\psi_k(t)$ solves the  regularized Kolmogorov equation
  with a suitable cutoff $k$. 
 We shall see that normalized versions of the states $|\psi_{out}(x)\ra$ and $|\psi_k(t)\ra$ can be
efficiently approximated by quantum circuits. 

As a side remark, we note that the expected value $u(t,x)$ with a noise-free initial condition also admits a representation in terms of the readout state $\psi_{out}(x)$ defined in Eq.~(\ref{psi_out}).
Using the identity\footnote{This identity follows from
the well-known formula $\herm_n(x) = \EE_z (x+iz)^n$, where  $z\sim \calN(0,1)$, see for instance Ref.~\cite{andrews1999special}, p.~280.}
\[
\HH_\mm(x) = \int_{\RR^N} dz\, \mu(z)   \prod_{j=1}^N  \frac1{\sqrt{(m_j)!}} \left((x_j+ i z_j)\sqrt{2\lambda_j/q}\right)^{m_j}
\]
with $i=\sqrt{-1}$ one gets
\be
\label{u_coherent_state_rep}
u(t,x) =\int_{\RR^N} dz\, \mu(z) \la \psi_{out}(x+iz)|\psi(t)\ra.
\ee
Unfortunately, we were not able to convert Eq.~(\ref{u_coherent_state_rep}) into an efficient readout procedure capable of approximating $u(t,x)$. 

\section{Quantum algorithm}
\label{sec:algorithm}

Our  algorithm simulates the regularized Kolmogorov equation (RKE)  defined in Section~\ref{sec:regul-sz}.
Using the readout state introduced in Section~\ref{sec:readout} one can express the desired expected value $v(t,x)$ as 
\be
\label{v_quantum_classical_parts}
v(t,x)=\la \psi_{out}(x)|\psi(t)\ra \approx \la \psi_{out}(x)|\psi_k(t)\ra = \la  \psi_{out}(x)|\Pi_k|\psi_k(t)\ra +  \la  \psi_{out}(x)|I-\Pi_k|\psi_k(t)\ra.
\ee
The approximation in the second equality is controlled by the regularization error of Theorem~\ref{thm:regul} and the norm
of the readout state $\psi_{out}(x)$, as discussed in details below.  By definition of the RKE, the state
$(I-\Pi_k) \psi_k(t)$ evolves only under the diagonal Hamiltonian $-A$. Thus 
\be
\label{v_from_Hk_perp}
 \la  \psi_{out}(x)|I-\Pi_k|\psi_k(t)\ra = \sum_{\mm \in \calJ\setminus \calJ_k} e^{-\lambda_\mm t} \la \psi_{out}(x)|\mm\ra\la \mm|\psi(0)\ra,
\ee
where $\lambda_\mm = \la \mm|A|\mm\ra=\sum_{i=1}^N \lambda_i m_i$. 
As discussed below, our assumptions on the observable function $u_0$ imply that $\psi(0)$ is a sparse vector
with $O(1)$ nonzero amplitudes $\la \mm|\psi(0)\ra$ and each amplitude admits a simple analytic formula.  Thus  the second term in Eq.~(\ref{v_quantum_classical_parts})
can be computed classically in time negligible compared with the overall runtime of our algorithm.
To simplify the presentation, below we ignore the second term in Eq.~(\ref{v_quantum_classical_parts}).
Equivalently, we assume that $\psi(0)\in \calH_k$ (if this is not the case, one just need to replace $\psi(0)$ by $\Pi_k\psi(0)$ in all equations).
 Since RKE preserves $\calH_k$, one gets
$\psi_k(t)\in \calH_k$ for all $t\ge 0$. 
Thus it suffices to simulate the Kolmogorov equation projected onto $\calH_k$.

Let $Q$  be the number of qubits one needs to encode the  subspace $\calH_k$.
As discussed in Section~\ref{sec:regul-sz},  $Q\le O(k\lambda_1^{-1} \log{N})$ in the worst case.
In Appendix~\ref{app:encoding} we use techniques from~\cite{engel2021linear,tanaka2023polynomial}
to encode multi-indices $\mm \in \calJ_k$ by $Q$-bit strings.
Details of this encoding are not essential for understanding of our quantum algorithm.

Below  we describe main steps of the quantum algorithm claimed in Theorem~\ref{thm:algorithm}:
initialization,  simulating  time evolution associated with the projected Kolmogorov equation, and preparation of the readout state.
We then combine all these steps to estimate the desired expected value $v(t,x)$ and derive an upper bound on the qubit and gate count. 
For convenience of the reader, we collect all parameters affecting the runtime in the following table.
\begin{center}
\begin{tabular}{|c|c|}
\hline
$N$ & number of variables \\
\hline
$q$ & noise rate \\
\hline
$s$ & sparsity  \\
\hline
$J$ & drift strength \\
\hline
$k$ & regularization cutoff \\
\hline
$\ell$ & number of Trotter steps\\
\hline
$\lambda_1$ & smallest dissipation rate\\
\hline
$\epsilon$ & approximation error \\
\hline
\end{tabular}
\end{center}

\subsection{Initialization}
\label{sec:init}

Our simulation of the projected Kolmogorov equation begins with preparing the initial state  encoding the observable function $u_0\, : \, \RR^n\to \RR$.
By assumption, 
\be
\label{u0_restated}
u_0(x)=\prod_{i=1}^N x_i^{d_i},
\ee
where $d_i\ge 0$ and $\sum_{i=1}^N d_i=O(1)$.
Let 
\[
\overline{u}_0 = \int_{\RR^N} dx \, \mu(x) u_0(x)
\]
be the mean value of $u_0(x)$. From Wick's theorem one gets $\overline{u}_0=0$ unless $d_i$ is even for all $i$, in which case
\[
\overline{u}_0 =  \prod_{i=1}^N \frac{(d_i)!}{2^{d_i/2} (d_i/2) !} (q/2\lambda_i)^{d_i/2}.
\] 
The initial (unnormalized) state of the  Kolmogorov equation  is 
\[
|\psi(0)\ra =  \sum_{\mm \in \calJ_k}\phi_\mm  |\mm\ra,
\]
where $\phi_\mm\in \RR$ are coefficients of the expansion 
\[
u_0(x)-\overline{u}_0 = \sum_{\mm \in \calJ}  \phi_\mm \HH_\mm(x).
\]
Let us consider two examples. 

\noindent
{\em Example 1:}
Suppose our goal is to estimate the expected value of  a single variable $X_i(t)$.
Then $u_0(x)=x_i$ and $\overline{u}_0=0$.
The expansion of $u_0(x)$ in the Hermite basis  is
$u_0(x)= \sqrt{q/2\lambda_i} \HH_\mm(x)$, where $\mm=e^i$ is the multi-index with $m_i=1$ and $m_j=0$ for $j\ne i$.
The  initial state is 
\[
|\psi(0)\ra = \sqrt{\frac{q}{2\lambda_i}} |\mm\ra.
\]

\noindent
{\em Example 2:}
Suppose our goal is to estimate the expected value of $X_i^2(t)$.
Then $u_0(x) = x_i^2$ and $\overline{u}_0 = q/2\lambda_i$.  Let $\mm=2e^i$ be a multi-index such that 
$m_i=2$ and $m_j=0$ for $j\ne i$.
To expand $u_0(x)-\overline{u}_0$ in the Hermite basis note that $\herm_2(x)=x^2-1$ which gives
\[
\HH_{\mm}(x) = \frac1{\sqrt{2}} \herm_2( x_i \sqrt{2\lambda_i/q}) = \sqrt{2}\lambda_i q^{-1} x_i^2 - 1/\sqrt{2}.
\]
Hence $u_0(x) - \overline{u}_0 =  (q/\sqrt{2} \lambda_i)\HH_{\mm}(x)$ and the initial state is
\[
|\psi(0)\ra =  \frac{q}{\sqrt{2} \lambda_i} |\mm\ra.
\]

In the general case, expand  $u_0(x)$ in the Hermite basis by applying the identity
\be
\label{x_i^n}
x_i^n = n! \sum_{0\le m\le n/2} \frac1{2^m m! (n-2m)!} \herm_{n-2m}(x_i)
\ee
to each variable of of $u_0(x)$ and replacing $x_i$ by $x_i \sqrt{2\lambda_i/q}$ to get the rescaled polynomials $\HH_\mm(x)$. 
This shows that  $|\psi(0)\ra$ is a superposition of $O(d^d)=O(1)$ basis vectors $|\mm\ra$, $\mm\in \calJ_k$.
Coefficients in this superposition can be computed in time $O(Q) + poly(\log{N})$.
Indeed, we assumed that
the dissipation rates $\lambda_i$ are computable in time $poly(\log{N})$ and it takes time linear in $Q$ simply to write down a single
multi-index $\mm\in \calJ_k$.
The norm of  $\psi(0)$ can be computed using Wick's theorem:
\be
\label{psi0_norm_analytic}
\|\psi(0)\|^2  = \int_{\RR^N} dx\, \mu(x) (u_0 - \overline{u}_0)^2(x)= \prod_{i=1}^N \frac{(2d_i)!}{2^{d_i} d_i !} (q/2\lambda_i)^{d_i} -\overline{u}_0^2.
\ee
It is known~\cite{mao2024toward} that any $Q$-qubit  superposition of $\chi$ basis states
can be prepared by a quantum circuit of size linear in $Q + Q \chi/\log{(Q)}$. In our case $\chi=O(1)$
and thus the normalized state $\psi(0)/\|\psi(0)\|$ can be prepared by a quantum circuit of size linear in $Q$.
From Eq.~(\ref{psi0_norm_analytic}) one gets 
\be
\label{psi0_norm_bound}
\| \psi(0)\| \le poly(q,\lambda_1^{-1}).
\ee
Recall that the upper bound on the regularization error of Theorem~\ref{thm:regul} depends on both $\la \psi(0)|\psi(0)\ra$ and $\la \psi(0)|A|\psi(0)\ra$.
We claim that 
\be
\label{Apsi0_bound}
\la \psi(0)|A|\psi(0)\ra \le poly(q,\lambda_1^{-1}).
\ee
Indeed, the above arguments show that 
\[
|\psi(0)\ra = \sum_{\mm \in \calS} \psi_\mm |\mm\ra \qquad \mbox{where} \quad |\psi_\mm| \le O(1) \prod_{i=1}^N (q/\lambda_i)^{m_i/2},
\]
for some subset of multi-indices $\calS\subseteq \calJ$ of size $|\calS|=O(1)$
such that $|\mm|=O(1)$  for all $\mm\in \calS$
Let $\lambda_\mm = \sum_{i=1}^N \lambda_i m_i$. For any $\mm \in \calS$ one has
$\lambda_\mm \le O(1) \max_j \lambda_j m_j$ since $\mm$ has $O(1)$ nonzeros. Thus
\[
\la \psi(0)|A|\psi(0)\ra =\sum_{\mm \in \calS} \psi_\mm^2 \lambda_\mm \le O(1)  \max_{\mm \in \calS}  \max_j \lambda_j m_j  \prod_{i=1}^N (q/\lambda_i)^{m_i}
\le poly(q,\lambda_1^{-1}).
\]

\subsection{Time evolution}
\label{sec:time_evolution}

Let $A_k$, $B_k$,  $C_k$ be the restrictions  of $A$, $B$,  $C$ onto the subspace $\calH_k$.
Then the solution of the projected  Kolmogorov equation  is
\be
\label{RKE_eq2}
 |\psi_k(t)\ra = e^{t(-A_k + B_k + C_k)} |\psi(0)\ra.
\ee
The following lemmas provide upper bounds on the norm and the sparsity of 
$B_k+C_k$.
This determines the cost of simulating the time evolution $e^{t (B_k+C_k)}$ by the sparse Hamiltonian simulation 
methods~\cite{berry2014exponential,low2017optimal,low2017quantum,low2019hamiltonian}.
Note that $e^{t (B_k+C_k)}$ is  unitary since $B$ and $C$ are anti-hermitian. 
\begin{lemma}
\label{lemma:BplusC}
The matrix of $B_k+C_k$ in the standard basis has at most 
$4sk\lambda_1^{-1}$ nonzeros in every row and every column. Furthermore,
\be
\label{BC_norm}
\|B _k+C_k\|\le \gamma k^{3/2},
\ee
where $\gamma$  is the coefficient defined in Section~\ref{sec:ABCstructural}.
\end{lemma}
\begin{proof}
Recall that $C=C^\uparrow-C^\downarrow$, where $C^\uparrow$ is a linear combination of operators proportional to
$c_{ijk} a_j^\dag a_k^\dag a_i$ and $C^\downarrow = (C^\uparrow)^\dag$,
see Eq.~(\ref{Csimplified}).
Consider a column of $C_k$ labeled by a multi-index  $\mm \in \calJ_k$.
Then $C^\uparrow|\mm\ra$ is a linear combination of basis vectors  $|\mm -e^i + e^j + e^k\ra$ such that $m_i\ge 1$
and $c_{ijk}\ne 0$. The number of such basis vectors is at most $s|\mm|\le sk\lambda_1^{-1}$. 
Likewise, $C^\downarrow |\mm\ra$ is a linear combination of basis vectors $|\mm +e^i - e^j - e^k\ra$ such that $m_j\ge 1$
and $m_k\ge 1$ and $c_{ijk}\ne 0$. The number of such basis vectors is also at most $s|\mm|\le sk\lambda_1^{-1}$.
Hence $C_k$ has sparsity at most $2sk\lambda_1^{-1}$. Similar arguments show
that $B_k$ has sparsity at most $2sk\lambda_1^{-1}$. This proves the first part of the lemma.
Consider  any normalized states $\psi,\phi\in \calH_k$ such that
$\|B_k+C_k\| = \la \psi|B_k+C_k|\phi\ra$.
 From Lemma~\ref{lemma:AC1} one gets
\[
\|B_k+C_k\|  = \la \phi |B_k+C_k|\psi\ra \le \gamma \sqrt{\la \psi|A_k|\psi\ra  \la \phi|A_k^2|\phi\ra} \le \gamma \|A_k\|^{3/2} \le  \gamma k^{3/2}
\]
since the restriction of $A$ onto $\calH_k$ has eigenvalues at most $k$. 
\end{proof}

For simplicity, we simulate the projected Kolmogorov equation  by  Trotterizing the time evolution.
This is enough to get
 a polynomial time algorithm claimed in Theorem~\ref{thm:algorithm}.
We expect that the polynomial degree can be optimized  using 
 more sophisticated quantum ODE solvers~\cite{krovi2023improved,an2023linear}.
Consider the first-order product formula 
\be
\label{product_formula}
\calS(t) = e^{-tA_k } e^{t(B_k+C_k)}
\ee
approximating $e^{t(-A_k + B_k + C_k)}$ for a short evolution time $t$. 
Divide the time interval $[0,t]$ into $\ell$ Trotter steps
of length $t/\ell$.
Using  well-known upper bounds on the Trotter error, see e.g. Lemma~1 of~\cite{childs2021theory}, 
one gets
\be
\label{trotter_error}
\| \calS(t/\ell)^\ell  \psi(0) -   \psi_k(t)  \| \le  O(t^2/\ell) (\| A_k\| + \|B_k+C_k\| )^2 e^{(t/\ell) (\| A_k\| + \|B_k+C_k\| )}  \|\psi(0)\|
\ee
for any evolution time $t\ge 0$ and any number of Trotter steps $\ell$. 
Choose the number of Trotter steps $\ell$ large enough that 
\[
(t/\ell) (\| A_k\| + \|B_k+C_k\|)=O(1).
\] 
By definition, $\|A_k\|\le k$.
Using the bounds of  Lemma~\ref{lemma:BplusC}  one gets
\be
\label{Trotter_error_final}
\|   \calS(t/\ell)^\ell  \psi(0) -  \psi(t)  \| \le O( t^2 k^3/\ell)\cdot  poly(q,s,J,\lambda_1^{-1}) \cdot  \|\psi(0)\|.
\ee

Let $\tau \equiv  t/\ell$. 
A single Trotter step $e^{\tau (B_k+C_k)}$
can be implemented using any available quantum algorithm for simulating unitary time evolution under a sparse Hamiltonian, see e.g.~\cite{berry2014exponential,low2017optimal,low2017quantum,low2019hamiltonian,haah2019product}.
These methods have cost scaling linearly with the evolution time, Hamiltonian norm, sparsity,
and logarithmically with the desired error tolerance. 

A single Trotter step $e^{-\tau A_k}$ can be implemented  by a quantum circuit with a post-selective measurement.
 To this end  introduce one ancillary qubit such that the full Hilbert space is $\calH_k \otimes \CC_2$.
Let $R$ be the controlled  rotation of the ancilla qubit defined as
\be
\label{postselection_gadget}
R|\mm\ra \otimes |0\ra = |\mm\ra \otimes (c_\mm  |0\ra + s_\mm |1\ra)
\ee
for all $\mm\in \calJ_k$, where $c_\mm = e^{-\tau \lambda_\mm}$ and $s_\mm = \sqrt{1-c_\mm^2}$.
The action of $R$ on states $|\mm\ra \otimes |1\ra$ is irrelevant. 
Consider an arbitrary state $|\psi\ra \in \calH_k$.
Applying $R$ to  $|\psi\ra \otimes |0\ra$, measuring the ancillary qubit, and post-selecting the outcome $0$
implements the desired time evolution $|\psi\ra \to e^{-\tau A_k}|\psi\ra$.
The probability of observing the outcome $0$ is $P_0 = \| e^{-\tau A_k} \psi\|^2/\|\psi\|^2$.
Let $|0_{anc}\ra$ be the all-zero state of all ancillary qubits (including those used to implement the controlled rotation $R$).
The above shows that  a unitary quantum circuit $U_\ell$ satisfying 
\be
\label{U_ell}
\| \calS(t/\ell)^\ell  - \la 0_{anc} |U_\ell |0_{anc}\ra \| \le \epsilon
\ee
has gate complexity polynomial in all parameters
$k,\ell, t, q,s,J, \lambda_1^{-1}, \log{N},\log{\epsilon^{-1}}$.

\subsection{Readout state preparation}
\label{sec:readout_state_prep}

Let $\psi_{out}(x)$ be the readout state 
defined in Eq.~(\ref{psi_out})
and $\mathrm{supp}(x)\subseteq \{1,2,\ldots,N\}$ be the set of indices $i$ such that $x_i\ne 0$. 
By assumption, $|\mathrm{supp}(x)|=O(1)$.
From Eq.~(\ref{psi_out}) one gets
\be
|\psi_{out}(x)\ra = \bigotimes_{i\in \mathrm{supp}(x)} \; |\varphi_i\ra  \bigotimes_{i\notin \mathrm{supp}(x)} \; |0\ra
\ee
with
\be
|\varphi_i\ra =  \sum_{m\ge 0}
 \frac1{\sqrt{(m!)}} \left( x_i\sqrt{\frac{2 \lambda_i}{q}}\right)^{m}  |m\ra.
\ee
Note that $|\varphi_i\ra$ is the coherent state of a single harmonic oscillator. 
One can efficiently prepare a good approximation of $\varphi_{i}$ by truncating the sum over $m$ at some finite order $K$.
Indeed, define a state
\[
|\tilde{\varphi}_{i}\ra =  \sum_{m= 0}^{K}
 \frac1{\sqrt{(m!)}} \left( x_i \sqrt{\frac{2 \lambda_i}{q}}\right)^{m}  |m\ra.
\]
Using the bound 
\[
\left| e^x - \sum_{m=0}^{K} \frac{x^m}{m!}\right|\le \epsilon \quad \mbox{for} \quad
K\ge \frac{4x}{\log{(2)}} + \frac{\log{(1/\epsilon)}}{\log{(2)}}
\]
which holds for any $x\ge0$ and $\epsilon>0$, see Lemma~14 in~\cite{bravyi2021complexity},
one can ensure that 
\[
\| \varphi_{i} -\tilde{\varphi}_{i}\| \le \delta
\]
by choosing the cutoff
\[
K= \max_i \frac{8 \lambda_i x_i^2}{q \log{(2)}} +   \frac{2\log{(1/\delta)}}{\log{(2)}}.
\]
Our approximation of the readout state is defined as
\[
|\phi_{out}(x)\ra =  \bigotimes_{i\in \mathrm{supp}(x)} \; |\tilde{\varphi}_i\ra  \bigotimes_{i\notin \mathrm{supp}(x)} \; |0\ra.
\]
From the triangle inequality one gets
\be
\|  \psi_{out}(x)  -  \phi_{out}(x)\|  \le O(\delta) \| \psi_{out}(x)\|
\ee
Setting  $\delta\sim  \epsilon/ \| \psi_{out}(x)\|$ gives 
\be
\|  \psi_{out}(x)  -  \phi_{out}(x)\|  \le \epsilon
\ee
provided that 
\[
K = O(q^{-1} \|x\|_\lambda^2)+O(\log{\epsilon^{-1}}).
\]
Let $r\ge 1$ be the smallest integer such that $K+1\le 2^r$.
 Then the state $\tilde{\varphi}_i$ can be embedded into an 
$r$-qubit register using the binary encoding of integers. 
General purpose quantum state preparation methods
enable the exact preparation of any 
 $r$-qubit state  by a quantum circuit composed of $O(2^r)$ CNOTs and single-qubit gates,
 see e.g.~\cite{grover2002creating,mottonen2004transformation}. 
Thus  the cost of preparing the normalized version of $\phi_{out}(x)$
 on a quantum computer scales linearly with $q^{-1}\|x\|_\lambda^2$ and $\log{\epsilon^{-1}}$.
 The state $\phi_{out}(x)$ is expressed using $r|\mathrm{supp}(x)| = O(r)$ qubits. 
For simplicity, let us assume that  $\phi_{out}(x)\in \calH_k$ (if this is not the case, project 
$\phi_{out}(x)$ onto $\calH_k$).
Changing the binary encoding of  multi-indices 
to the encoding by $Q$-bit strings described in Appendix~\ref{app:encoding}, one can consider the  readout state
$\phi_{out}(x)$ as a $Q$-qubit state.

\subsection{Putting everything together}
\label{sec:runtime}

Define our approximation of $v(t,x)=\la \psi_{out}(x)|\psi(t)\ra$ as 
\[
v_1(t,x) = \la \phi_{out}(x) \otimes 0_{anc} | U_\ell|\psi(0) \otimes 0_{anc}\ra,
\]
where $|0_{anc}\ra = |00\ldots0\ra$ is the state of all ancillary qubits,
 $U_\ell$ is a unitary quantum circuit satisfying Eq.~(\ref{U_ell}),
 and $|\phi_{out}(x)\ra$ is our approximation of the readout state.
We shall choose the regularization  cutoff $k$ and the number of Trotter steps $\ell$ such that 
\be
\label{error_v1}
|v(t,x)-v_1(t,x)|\le \frac{\epsilon}{2},
\ee
where $\epsilon$ is the desired error tolerance of Theorem~\ref{thm:algorithm}.
Define normalized states
\[
|\hat{\psi}(0)\ra = \frac1{\|\psi(0)\|} |\psi(0)\ra \quad \mbox{and} \quad |\hat{\phi}_{out}(x)\ra = \frac1{\|\phi_{out}(x)\|} |\phi_{out}(x)\ra.
\]
 The inner product 
\[
v_2(t,x) = \la \hat{\phi}_{out}(x) \otimes 0_{anc} | U_\ell|\hat{\psi}(0) \otimes 0_{anc}\ra
\]
can be estimated using the standard Hadamard test since we have already showed
how to prepare the states $|\hat{\psi}(0)\ra$ and $|\hat{\phi}_{out}(x)\ra$.  We have
\[
v_1(t,x) = \| \phi_{out}(x)\| \cdot \|\psi(0)\| \cdot v_2(t,x).
\]
As discussed above, the norms $\|\psi(0)\|$ and $\|\phi_{out}(x)\|$ admit a simple analytic formula.
Thus it suffices to estimate $v_2(t,x)$ with an error 
\[
\epsilon_{v_2} = \frac{\epsilon}{2 \|\psi(0)\| \cdot \|\phi_{out}(x)\|} \ge   \frac{\epsilon}{2 \|\psi(0)\| \cdot \|\psi_{out}(x)\|}
\ge \epsilon \cdot  poly(q^{-1} \lambda_1) \cdot \exp{\left[ -q^{-1} \| x\|_\lambda^2\right]}.
\]
Here we used Eqs.~(\ref{psi0_norm_bound},\ref{readout_state_norm}).
The number of circuit repetitions required for the Hadamard test scales as $O(1/\epsilon_{v_2}^2)$
and the output of the Hadamard test approximates $v(t,x)$ with the desired error $\epsilon$.

It remains to choose the regularization cutoff $k$ and the number of Trotter steps $\ell$ to satisfy Eq.~(\ref{error_v1}).
By the triangle inequality,
\[
|v(t,x)-v_1(t,x)|\le \epsilon_{reg} + \epsilon_{tro} + \epsilon_{sim} +  \epsilon_{out},
\]
where the four terms represent the regularization error,  Trotter error,  Hamiltonian simulation error,
and  the readout error:
\[
\epsilon_{reg} = \|\psi_{out}(x)\| \cdot \| \psi(t) - \psi_k(t)\|,
\]
\[
\epsilon_{tro} =  \|\psi_{out}(x)\| \cdot \| \psi_k(t) - \calS(t/\ell)^\ell \psi(0)\|,
\]
\[
\epsilon_{sim} = \|\psi_{out}(x)\| \cdot   \|  \calS(t/\ell)^\ell  \psi(0) - \la 0_{anc} | U_\ell |\psi(0)\otimes 0_{anc}\ra \|,
\]
\[
\epsilon_{out} = \| \psi(0)\|\cdot \| \psi_{out}(x)-\phi_{out}(x)\|.
\]
One can satisfy Eq.~(\ref{error_v1}) whenever all errors $\epsilon_{reg}, \epsilon_{tro},\epsilon_{sim},\epsilon_{out}$
are at most $\epsilon/8$.
Using Theorem~\ref{thm:regul}
and Eqs.~(\ref{psi0_norm_bound},\ref{Apsi0_bound})
one can make $\epsilon_{reg}\le \epsilon/8$ by choosing the regularization cutoff
\be
\label{choice_of_k}
k=\epsilon^{-2} poly(q,s,J,\lambda_1^{-1})  \exp{\left[ 2q^{-1} \| x\|_\lambda^2\right]})
\ee
Next   use Eq.~(\ref{Trotter_error_final}) to choose the number of Trotter steps $\ell$ such that 
$\epsilon_{tro}\le \epsilon/8$. Clearly, $\ell$ scales polynomially with $t,\epsilon^{-1},q,s,J,\lambda_1^{-1}$,
and  $\exp{\left[ q^{-1} \| x\|_\lambda^2\right]}$.
Setting $\epsilon_{sim}=\epsilon/8$ determines the gate complexity of the Hamiltonian simulation circuit $U_\ell$.
As commented in the end of Section~\ref{sec:time_evolution}, for the  chosen $\ell, k$, the gate
complexity of $U_\ell$ scales polynomially with $t,\epsilon^{-1},q,s,J,\lambda_1^{-1}$,
and  $\exp{\left[ q^{-1} \| x\|_\lambda^2\right]}$.
Finally, the gate complexity of preparing the readout state 
scales linearly with $q^{-1}\|x\|_\lambda^2$ and $\log{(1/\epsilon_{out})}$, see Section~\ref{sec:readout_state_prep}.
This complexity is linear in $q^{-1}\|x\|_\lambda^2$ and logarithmic in $1/\|\psi(0)\| = poly(q^{-1} \lambda_1)$.

To conclude, our quantum algorithm  has gate and qubit count scaling
polynomially with  $\log{N}$, $t$, $\epsilon^{-1}$, $q$,$s$, $J$,  $\lambda_1^{-1}$, 
and exponentially with $q^{-1} \| x\|_\lambda^2$, as claimed in Theorem~\ref{thm:algorithm}.

\section{Proof of BQP-completeness}
\label{sec:BQP}

In this section we prove Theorem~\ref{thm:BQP}.
The proof requires a minor adaptation of well-known circuit-to-Hamiltonian mappings~\cite{kitaev2002classical}.
Suppose all dissipation rates are the same,
$\lambda_1=\ldots=\lambda_N\equiv \lambda$,
and all drift functions are linear, that is, $c_{ijk}\equiv 0$.
The divergence-free conditions are then equivalent to skew-symmetry of the matrix $b$. 
One can easily check that
for a fixed realization of Wiener noise  the system
\[
dX_i(t) = -\lambda X_i(t) dt  +  b_i(X(t))dt  + \sqrt{q}\, dW_t,  \qquad t\ge 0
\]
has a solution
\[
X(t) = e^{-\lambda t} e^{t b} X(0) + \sqrt{q} \int_0^t e^{-\lambda(t-s)}  e^{(t-s)b } dW_s.
\]
For a fixed initial state $x=X(0)\in \RR^N$ let $|x\ra=\sum_{i=1}^N x_i |i\ra$. The expected value of the variable $X_1(t)$ over Wiener noise is 
\[
u(t,x) = \EE_W X_1(t) = e^{-t\lambda } \la 1| e^{t b} |x\ra.
\]
Here we used the assumption  that Wiener noise has zero mean and $u(t,x)$ is a linear function of $x$.
By the same reason,
\[
v(t,x) =\int_{\RR^N} dz\, \mu(z) u(t,x+z) = u(t,x).
\]
Thus the noise does not matter in the linear case.

Let $U=U_m \cdots U_2 U_1$ be a quantum circuit on $n$ qubits composed of $m=poly(n)$ gates
$U_1,\ldots,U_m$.
Each gate $U_j$ is a unitary operator with real matrix elements acting nontrivially  on at most $k$ qubits. 
We choose  the matrix $b=\{b_{jk}\}$ as a  skew-symmetric weighted version of the Feynman-Kitaev 
Hamiltonian~\cite{kitaev2002classical} associated with the circuit $U$, 
\be
\label{B_from_circuit}
b=\sum_{j=1}^{m} \alpha_{j-1} \left( |j \ra\la j -1 |\otimes U_j  -  |j-1\ra\la j |\otimes U_j^\dag \right),
\ee
where $\alpha_j$ are real weights  yet to be defined. The operator $b$ acts on a tensor product Hilbert space
$\CC^{m+1} \otimes (\CC^2)^{\otimes n}$, where $\CC^{m+1}$ and  $(\CC^2)^{\otimes n}$ represent the clock register 
and the $n$-qubit computational register. We label basis vectors of the clock register by $|0\ra,|1\ra,\ldots,|m\ra$.
Basis vectors of the full Hilbert space $\CC^{m+1} \otimes (\CC^2)^{\otimes n}$ are labeled by $|j,i\ra$ where $0\le j\le m$
and $i \in \{0,1\}^n$. Note that 
\[
b = W (\tilde{b} \otimes I)  W^\dag,
\]
where
\[
W = |0\ra\la 0|\otimes I +  \sum_{j=1}^m |j \ra\la j| \otimes U_j \cdots U_2 U_1
\]
is a unitary operator that applies the first $j$ gates of $U$ to the computational register
if the clock register state is in the state $|j\ra$
and
\be
\label{B_0_definition1}
\tilde{b} = \sum_{j=0}^{m-1} \alpha_{j} \left( |j+1\ra\la j |  -  |j\ra\la j+1 |\right)
\ee
describes a quantum walk on a 1D chain of length $m+1$. We shall  choose the coefficients $\alpha_j$ such that 
\be
\label{B_0_definition2}
e^{-\tilde{b}} |0\ra = |m\ra.
\ee
Then 
\[
e^{-b } |0, 0^n\ra = W (e^{-\tilde{b} } \otimes I) W^\dag |0,0^n\ra
=W (e^{-\tilde{b} }\otimes I) |0,0^n\ra= W |m,0^n\ra=|m\ra \otimes U|0^n\ra.
\]
Hence
\be
\label{U00}
\la 0^n |U|0^n\ra = \la m,0^n| e^{- b} |0,0^n\ra = e^{\lambda} \la 0,0^n| e^{-\lambda I + b}|m,0^n\ra.
\ee
This is equivalent to $\la 0^n |U|0^n\ra = e^{\lambda} v(t,x)$ with  $x=|m,0^n\ra$.

Suppose each gate of $U$ acts on at most $k$ qubits. 
Then the  matrix $b$ has sparsity at most $s=2^{1+k}$.
 Indeed, consider a row of $b$ associated with a basis vector $|j,i\ra$ for some
$j\in [0,m]$ and $i\in \{0,1\}^n$. From Eq.~(\ref{B_from_circuit}) one infers that $\la j,i|b|j',i'\ra\ne 0$ only if
$j'=j-1$ and $\la i|U_j|i'\ra \ne 0$
or $j'=j+1$ and $\la i|U_{j+1}^\dag|i'\ra \ne 0$. Since each individual gate has sparsity $2^k$,
one infers that for a fixed pair $(j,i)$ the number of pairs $(j',i')$ 
such that $\la j,i|b|j',i'\ra\ne 0$ is at most $s=2^{1+k}$.
Since $b$ is skew-symmetric, columns and rows have the same sparsity.

It remains to bound the drift strength parameter
\[
J = \max_{1\le i,j\le N} |b_{ij}| \le s \max_{0\le j\le m-1} |\alpha_j|.
\]
Here we used Eq.~(\ref{B_from_circuit}) and noted that matrix elements of any unitary gate
have  magnitude at most one. 
Recall that the coefficients $\alpha_j$ must solve Eqs.~(\ref{B_0_definition1},\ref{B_0_definition2}).
We claim that $e^{-\tilde{b}}|0\ra=|m\ra$ if we choose
\[
\alpha_j = \pi \sqrt{(m - j)(j+1)}, \qquad j=0,1,\ldots,m-1.
\]
Indeed, for this choice of $\alpha_j$ the operator $\tilde{b}=\sum_{j=0}^{m-1} \alpha_j \left( |j+1\ra\la j |  -  |j\ra\la j+1 |\right)$ is proportional to the angular momentum operator $L_y$
for a system with the angular momentum $m/2$. The identity $e^{-\tilde{b}}|0\ra=|m\ra$ follows from the fact that 
time evolution under $L_y$ over time $\pi/2$ flips the $z$-component of the angular momentum. 
We conclude that $J=O(2^{k+1} m)$.
Choose $\lambda=q=1/10$ and $t=1$. Then 
\[
|\la 0^n|U|0^n\ra - v(t,x)| = |\la 0^n|U|0^n\ra - e^{\lambda} v(t,x) + (e^\lambda -1 ) v(t,x)| =  (e^\lambda -1 )|v(t,x)| \le (e^\lambda -1 )e^{-\lambda} \le \frac1{10},
\]
as claimed in Theorem~\ref{thm:BQP}.

\section{Navier–Stokes equations}
\label{sec:examples}

In this section we describe an explicit example of
a quadratic divergence-free ODE:
2D \ac{nse} describing a flow of an incompressible fluid. We also benchmark our quantum algorithm by performing numerical experiments. 

The notation and discretization of \ac{nse} of this example closely follows~\cite{zhuk2023detectability,flandoli1998kolmogorov}. Here we omit all the formal definition of spaces required for applying Galerkin projection~\cite[p.55]{Foias_Manley_Rosa_Temam_2001}, which effectively approximates the solution of the continuous \ac{nse}, $\vec v$, by a linear combination of finitely many eigen-functions of the Stokes operator with time-dependent coefficients, then projects \ac{nse} onto a subspace generated by this eigen-functions to get a non-linear divergence-free ODE describing dynamics of the time-dependent coefficients. This ODE is then stochastically forced and we apply our algorithm to estimate expectations of certain functions $u_0$ of the solution.  

The classical \ac{nse} in 2D is a system of two PDEs defining dynamics of the scalar pressure field $p(x,y)$ and the viscous fluid velocity vector-field $\vec v(t,x,y)=[v_1(t,x,y),v_2(t,x,y)]$ which depends on the initial condition $\vec v(0)=\vec v_0$, forcing $\vec f$, and Boundary Conditions (BC), e.g. periodic BC $v_{1,2}(t,x+\ell_1,y) = v_{1,2}(t,x,y)$, $v_{1,2}(t,x,y+\ell_2) = v_{1,2}(t,x,y)$. In the vector form it reads as follows:
\begin{equation}
  \label{eq:NSE-vector}
  \begin{split}
    \dfrac{d\vec{v}}{dt} &+ (\vec{v} \cdot \nabla) \vec{v} - \nu\Delta\vec{v}+\nabla p = \vec f,\quad \nabla\cdot\vec{v} = 0
  \end{split}
\end{equation}
To eliminate pressure $p$ and obtain an evolution equation just for $\vec{v}$ it is common to use Leray projection~\cite[p.38]{Foias_Manley_Rosa_Temam_2001}: every vector-field $\vec{v}$ in $\RR^2$ admits Helmholtz-Leray decomposition, $\vec{v} = \nabla p + \vec q$ with $\nabla\cdot\vec q = 0$ which in turn defines Leray projector, $P_l(\vec{v})=\vec q$ (e.g.~\cite[p.36]{Foias_Manley_Rosa_Temam_2001}). 
Multiplying~\eqref{eq:NSE-vector} by a smooth divergence-free test function $\vec\phi$, and integrating by parts in $\Omega$ allows one to obtain Leray's weak formulation of NSE in 2D:
\begin{equation}
\label{eq:NSE-var}
    \dfrac{d}{dt}(\vec{v},\vec \phi) + b(\vec{v},\vec{v},\vec\phi) + \nu(A \vec v,\vec \phi) = (\vec f,\vec \phi), 
\end{equation}
with initial condition $(\vec{v}(0),\vec\phi)=(\vec{v}_0,\vec\phi)$. Here the Stokes operator~\cite[p.52]{Foias_Manley_Rosa_Temam_2001} $A$ and trilinear form $b$ are defined as follows: 
\[
    A\vec v = -\Delta \vec v = 
    \left[\begin{smallmatrix}-\Delta v_1\\ -\Delta v_2, 
    \end{smallmatrix}\right], 
    \quad
    b(\vec v,\vec y,\vec \phi) = 
    \int_\Omega (\vec v\cdot \nabla y_1) \phi_1 d\vec\xi + \int_\Omega (\vec v\cdot \nabla y_2) \phi_2 d\vec\xi
\]
The orthonormal set of eigenfunctions of $A$ is given by (see~\cite{flandoli1998kolmogorov})
\begin{align*}
    A\vec{e}_{\vec k} &= \lambda_{\vec k} \vec{e}_{\vec k}, & \lambda_{\vec k} &= 4\pi^2|\vec k|^2_2,&\vec k &= (k_1,k_2)^\top, k_{1}\in\mathbb{N}, k_{2}\in\mathbb{Z}\\
    \vec{e}_{\vec k}&= \frac{\vec{k}^\perp}{|\vec k|_2}\sqrt{2}\sin(2\pi\vec{k}\cdot\vec{\xi}),
    & \vec{k}^\perp&=(k_2,-k_1)^\top,&\vec{k}\cdot\vec{\xi}&=k_1\xi_1+k_2\xi_2, |\vec k|_2^2=\vec k\cdot\vec k
\end{align*}
Additionally, we enumerate 2-dimensional vectors (multi-indices) $\vec k$ with a scalar index $k\in\mathbb{N}$ permitting us to define $\lambda_{k}=\lambda_{\vec k}$ and $\vec{e}_{k}=\vec{e}_{\vec k}$. Furthermore, in what follows we will abuse the notation and write $\xi$ for $\vec \xi$ and $e_k$ for $\vec{e}_{k}$ for simplicity. We note that (see~\cite[p.99,p.101]{Foias_Manley_Rosa_Temam_2001} \[
b(\vec u ,\vec y,\vec \phi)=-b(\vec u,\vec \phi,\vec y), \quad b(\vec u,\vec u,A\vec u)=0. 
\]
Now we approximate $\vec v$ by $\vec v_N=\sum_{k=1}^N x_{k}(t) e_{k}(\xi)$, substitute $\vec v_N$ into~\eqref{eq:NSE-var} instead of $\vec v$, set $\vec\phi=e_{k}$ and obtain a system of finitely many ODEs: 
\begin{align}\label{eq:NSEODE}
    \dot x_k &= -\nu\lambda_k x_k - b(\vec u_N,\vec u_N,e_k) + (\vec f,e_k)\\
    &=-\nu\lambda_k x_k - \sum_{i,j=1}^N b(e_i,e_j,e_k)x_ix_j + (\vec f,e_k)\\
    &=-\nu\lambda_k x_k + b_k(x) +(\vec f,e_k)
\end{align}
with initial conditions $x_k^0=(\vec{v}_0,e_k)$. Since $b(e_k,e_k,\vec u)\equiv 0$ and $b(e_i,e_k,e_k)=0$ we get that $b_k(x)$ is independent of $x_k$. Also $\sum \lambda_k x_k b_k(x) = b(\vec u_N,\vec u_N, A \vec u_N)=0$. Hence, the disretized NSE given by~\eqref{eq:NSEODE} satisfies the divergence-free condition.  

Now, set $\vec f=0$ and consider the stochastic discretized \ac{nse} in which $w_k$ is a standard scalar Wiener process:
\begin{equation} \label{eq:stochastic_disc_nse}
    d X_k = -\nu\lambda_k X_k dt + b_k(X)dt + \sqrt{q} d W_k
\end{equation}
We are interested in computing the following expectation value: $u(t,x)=\EE_W\left(u_0(X(t)|X(0)=x)\right)$, for a given function $u_0:\RR^N\to\RR$. As suggested above this task can be performed by rewriting $u$ by means of Hermite polynomials $u(t,x) = \sum_{\mm\in \calJ} \psi_\mm(t) \HH_\mm(x)$ and introducing the Kolmogorov equation describing dynamics of coefficients $\psi(t) =\{\psi_\mm(t)\}_{\mm\in \calJ}$: 
\be
\label{eq:ke_assoc_nse}
\frac{d}{dt} |\psi(t)\ra = (-A + C)|\psi(t)\ra
\ee
where $A=\text{diag}(\{\lambda_{\mm}\}_{\mm\in\calJ})$ with elements $\lambda_\mm=\sum_{i=1}^N \mm_i 4\pi^2|\vec i|^2_2$, and $C=\{C_{\mm,\nn}\}_{\mm,\nn\in\calJ}$ is a skew-symmetric matrix with elements $C_{\mm,\nn} = \sum_{k=1}^N \int_{\RR^N}  (b_k(x) \partial_{x_k}\HH_\nn)\HH_\mm \mu(dx)$.
An explicit formula for $C_{\mm,\nn}$ is derived in Appendix~\ref{app:NSE_matrix_elements}.
We omit the demonstration of the existence of the solution of KE~\eqref{eq:ke_assoc_nse} (which can be done e.g. by employing~\cite[Lemma 4.1]{flandoli1998kolmogorov}).

\subsection{Numerical experiment}
\begin{figure}[htbp]
    \centering
    \begin{subfigure}[t]{0.32\textwidth}
        \includegraphics[width=\textwidth]{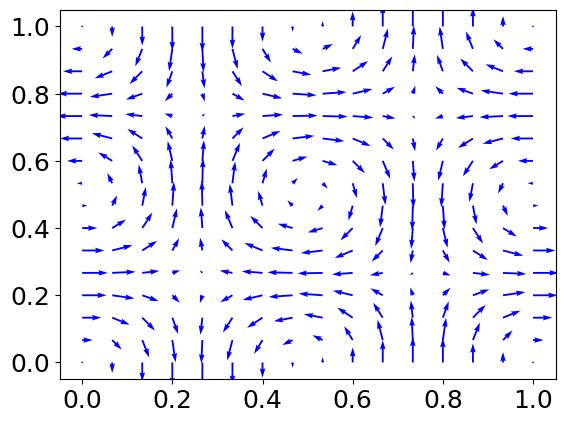}
        \caption{}
        \label{fig:tg_initial}
    \end{subfigure}
    \hfill
    \begin{subfigure}[t]{0.32\textwidth}
        \includegraphics[width=\textwidth]{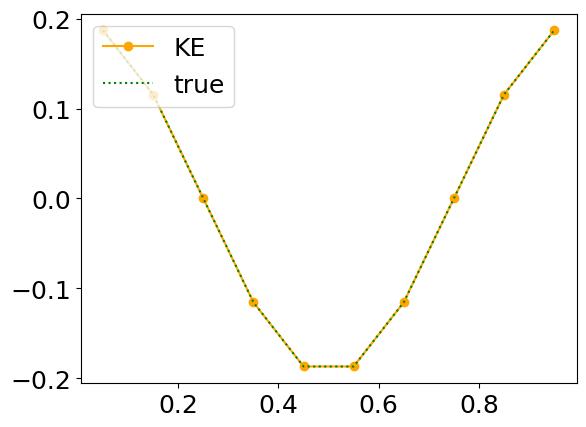}
        \caption{}
        \label{fig:ke_no_noise}
    \end{subfigure}
    \hfill
    \begin{subfigure}[t]{0.32\textwidth}
        \includegraphics[width=\textwidth]{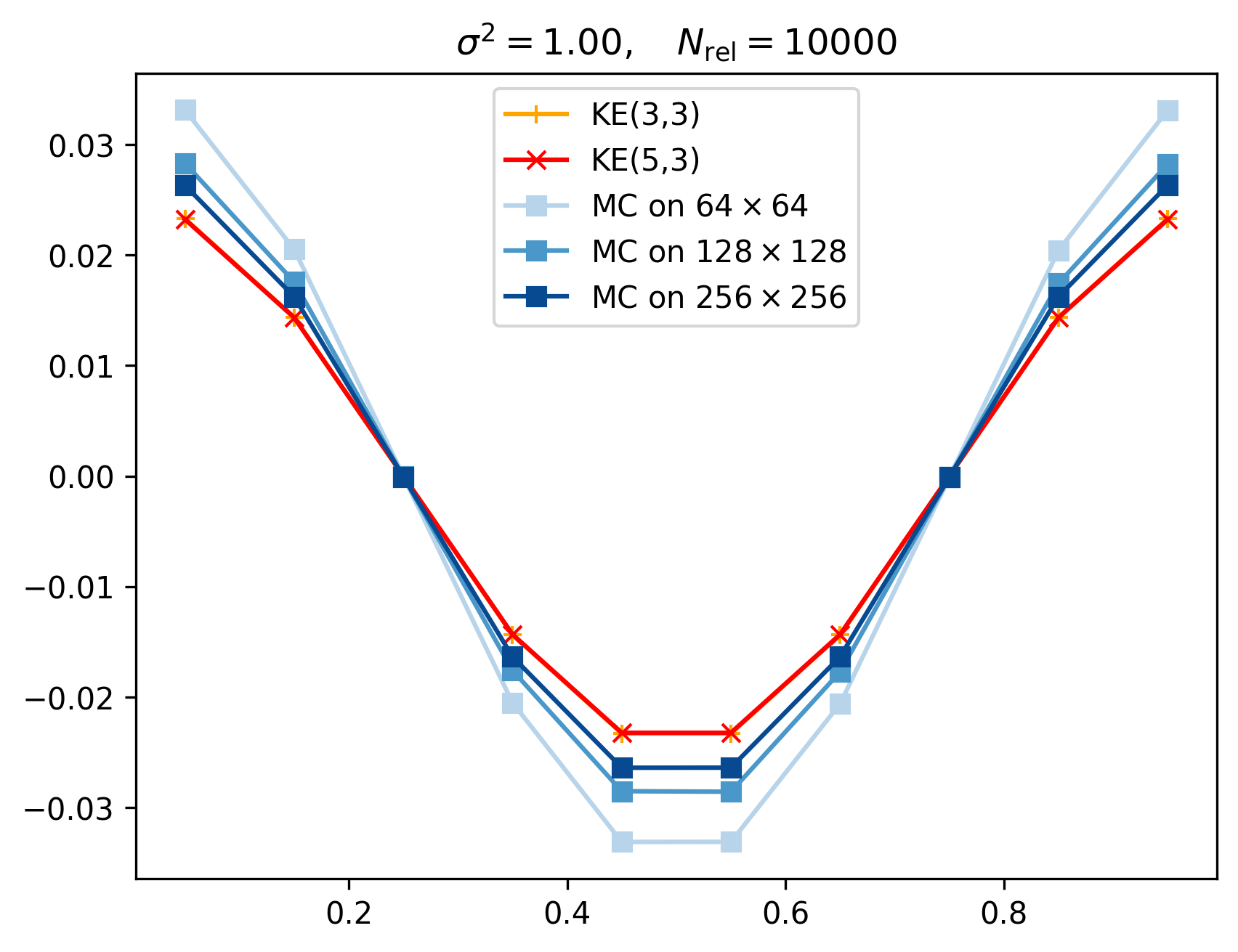}
        \caption{}
        \label{fig:ke_large_noise}
    \end{subfigure}
    \begin{subfigure}[t]{0.72\textwidth}
        \includegraphics[width=\textwidth]{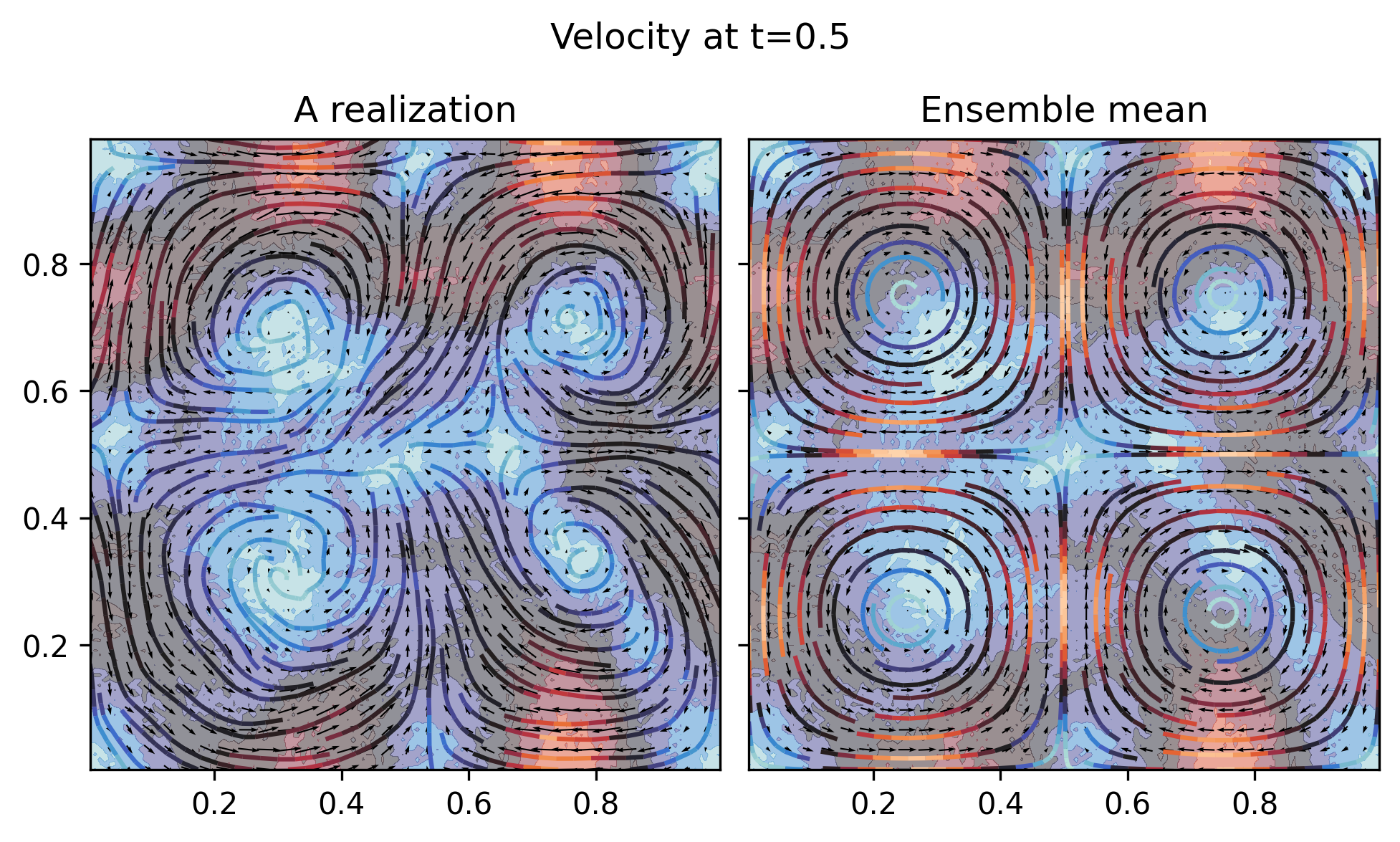}
        \caption{}
        \label{fig:ke_large_noise}
    \end{subfigure}
    \begin{subfigure}[t]{0.72\textwidth}
        \includegraphics[width=\textwidth]{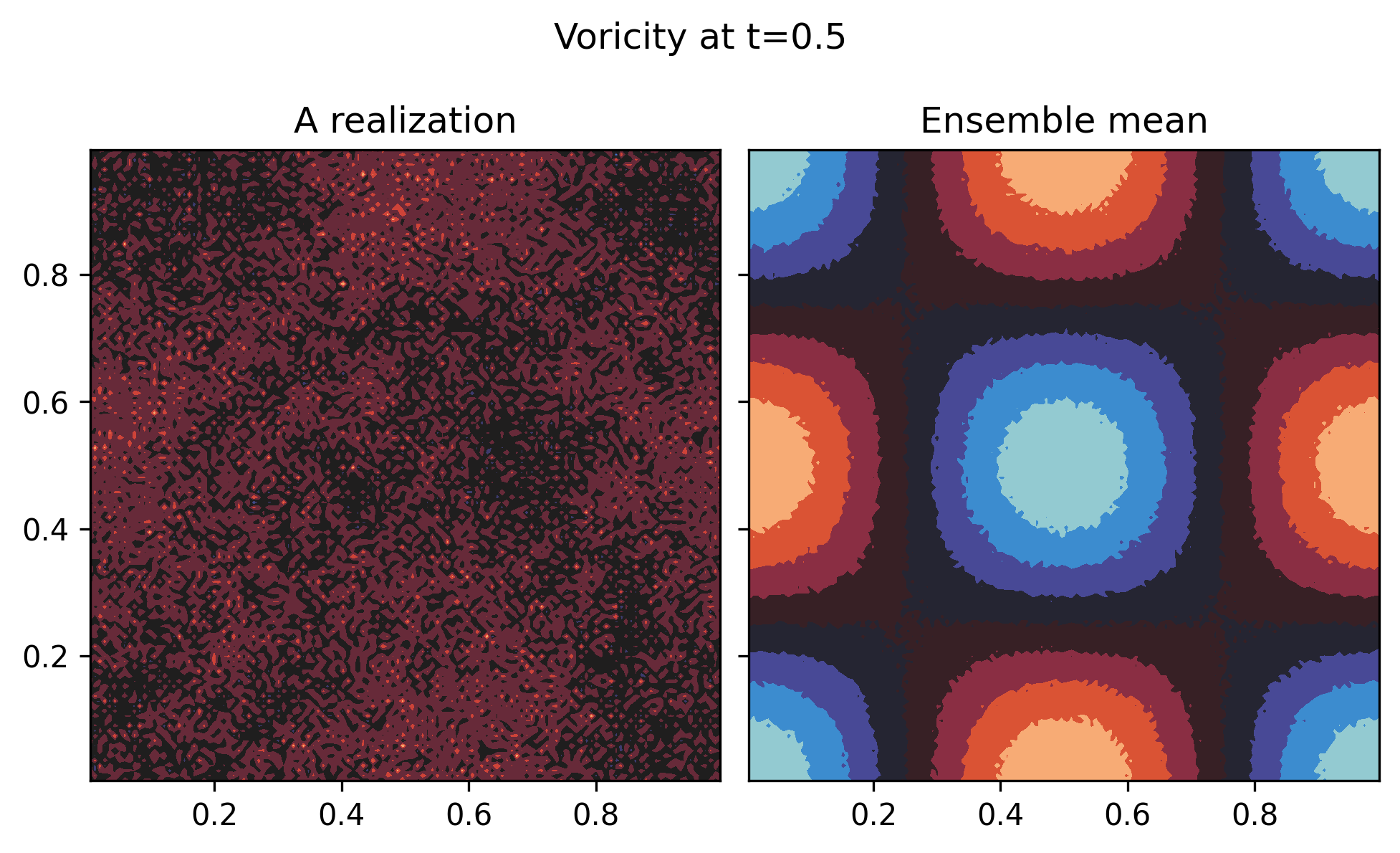}
        \caption{}
        \label{fig:ke_large_noise}
    \end{subfigure}
    \caption{Comparison of the solutions of the \ac{nse} obtained by our approach and compared against true analytical solution: a) initial state of the Taylor-Green vortex flow; b) the case with vanishing noise; c) the case with non-vanishing noise: ensemble of $N_{\mathrm{rel}}$ members at 3 spatial resolutions (shades of blue) is compared agaisnt our algorithm (KE) (orange, red) with different $N$s and 3rd order polynomials; d) example of a realization of the velocity field (left) and the ensemble mean velocity field (right); e) realization of the vorticity and ensemble mean vorticity field.}
\end{figure}

To validate the described quantum algorithm, we performed a numerical experiment that solves the \ac{nse} with a known analytical solution. To this end, we considered a 2-dimensional Taylor-Green vortex flow $\vec v_{true}(t,x,y)=[v_1^{true}(t,x,y), u_v^{true}(t,x,y)]$ defined as
\begin{equation}\begin{aligned} \label{eq:tg_vortex}
    &u_1^{true}(t,x,y) = \sqrt{2} \sin(2\pi x) \cos(2\pi y) \exp(-8\pi^2\nu t)\\
    &u_2^{true}(t,x,y) = -\sqrt{2} \cos(2\pi x) \sin(2\pi y) \exp(-8\pi^2\nu t)
\end{aligned}\end{equation}
for $x,y \in [0;1]$, $t\in[0;0.25]$ and $\nu=0.1$. The flow defined in \eqref{eq:tg_vortex} is divergence-free and satisfies the Navier--Stokes equation \eqref{eq:NSE-vector}. Its initial state, $\vec v(0)$, is depicted in Fig.~\ref{fig:tg_initial}.

To obtain a solution of the underlying \ac{nse}, we configure the corresponding KE to describe the dynamics of a functional $u(t,x)$ taken as
\begin{equation}
    u(t,x)= \EE_W\left(\sum_{k=1}^N X_{k}(t) E_1\cdot \vec e_{k}(\xi_1^p,\xi_2^p)\right)
\end{equation}
where $X(0)=x^0$, $x^0=\{x^0_k\}_{k=1}^N$ is a vector of projection coefficients of Taylor--Green vortex at the initial time in the span of eigenfunctions $e_k,\;k=[1,\dots,N]$, $E_1=(1,0)^\top$, $(\xi_1^p,\xi_2^p)$ is a point in space. In other words, functional $u(t,x)$ is an expected value approximating the first component of the velocity field $\vec u$, solving \ac{nse}, evaluated at the point $(\xi_1^p,\xi_2^p)$. Specifically, we take 10 different points with $\xi_2^p=0.25$ and $\xi_1^p$ uniformly distributed over an interval $[0.05; 0.95]$. We then solve the discretized KE using the matrix $C$ as defined in~\eqref{eq:c_mn_final} computed using $N=40$ eigenfunctions $e_k$ and taking Hermite polynomials of order up to $K=3$. We present the solutions computed at $t=0.25$ across all 10 spatial points in Fig.~\ref{fig:ke_no_noise} and~\ref{fig:ke_large_noise} for two cases with different noise levels.

In the first case, we utilise a vanishing noise with a variance $10^{-5}$ to mimic the dynamics of the deterministic \ac{nse}. As shown in Fig.~\ref{fig:ke_no_noise}, the obtained solution agrees well with the corresponding analytical solution obtained by applying functional $u(t,x)$ to the projections of Taylor--Green vortex~\eqref{eq:tg_vortex} in the span of eigenfunctions $e_k,\;k=[1,\dots,N]$.

In the second case, we introduce noise with a variance equal to $1$. Here, the dynamics of the mean solution of the stochastic \ac{nse} diverges from the deterministic solution, as demonstrated in Fig.~\ref{fig:ke_large_noise}. An example of the dynamics of stochastically forced NSE starting from Taylor-Green vortex is given in Figure~\ref{fig:NSEforward}.  

\begin{figure}[p]
  \centering
  \includegraphics[width=0.9\textwidth]{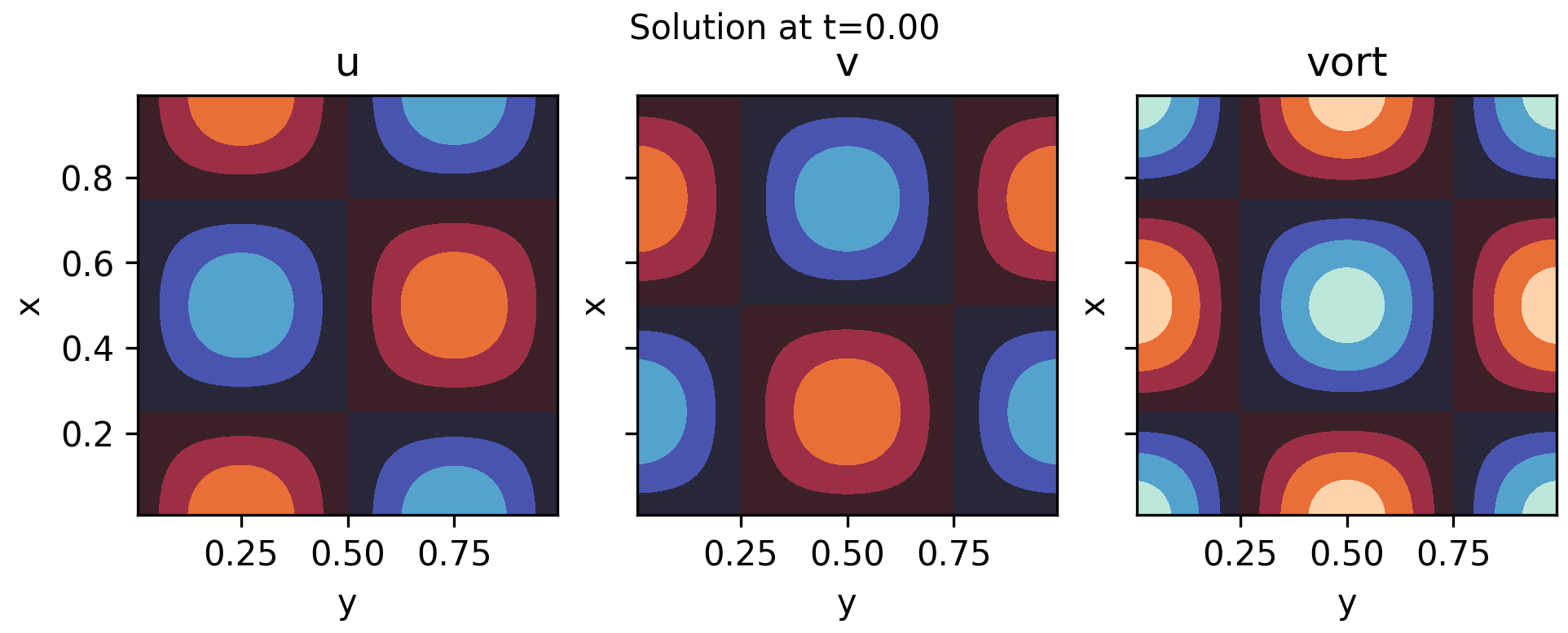}
  \includegraphics[width=0.9\textwidth]{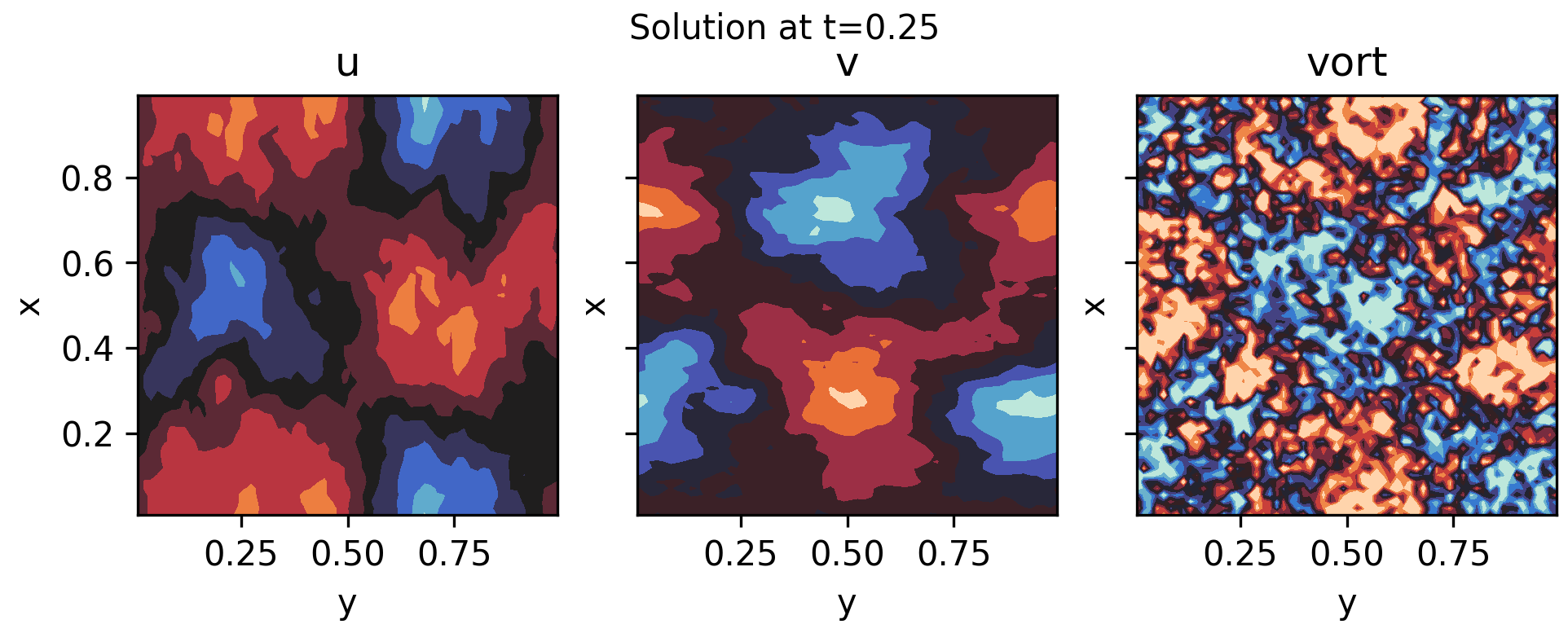}
  \includegraphics[width=0.9\textwidth]{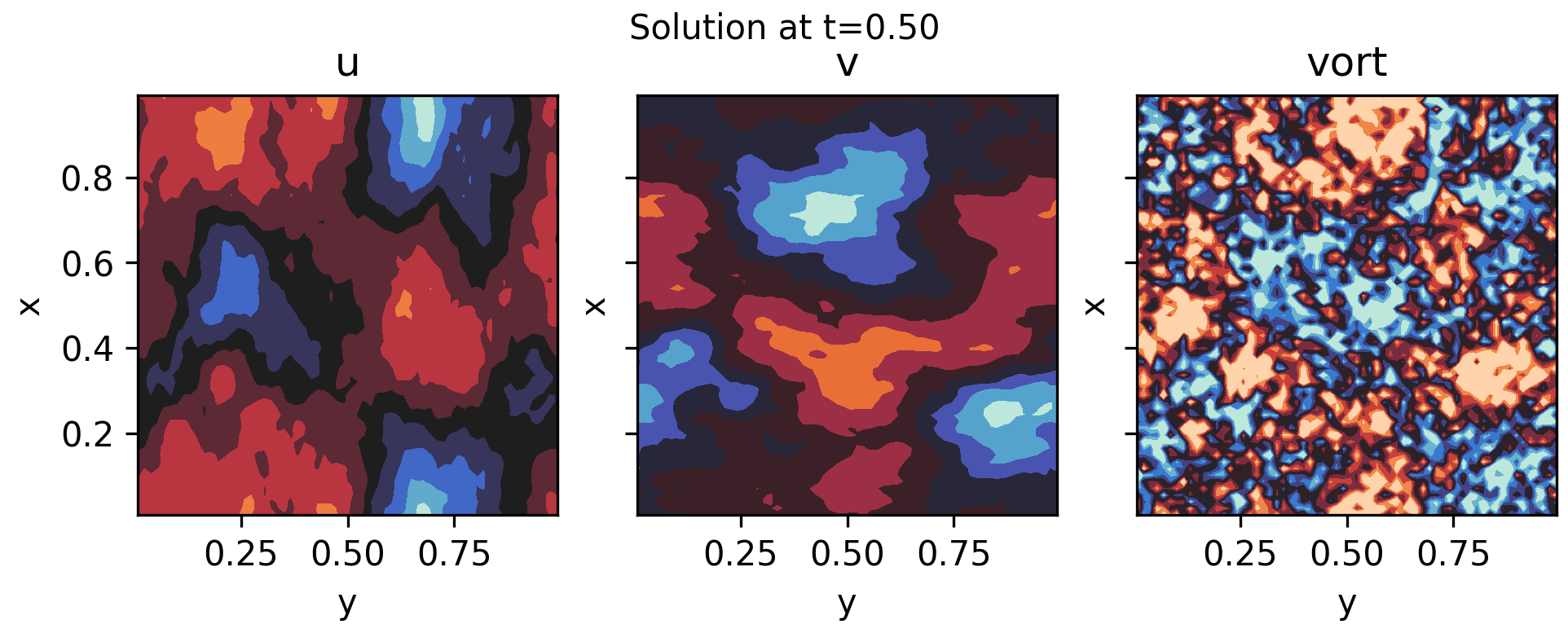}
  
  \hspace{1cm}\includegraphics[width=0.8\textwidth]{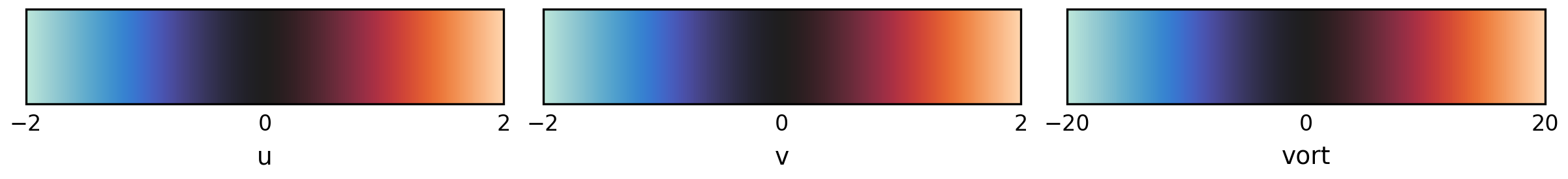}
  \caption{Solution of an incompressible 2D \acf{nse} with $\nu=0.01$ with initial condition described by equation~\ref{eq:tg_vortex} and perturbued with cylindrical Wiener noise with $\sigma^2=1$. The fields respectively represent $x$- and $y$-components of velocity and resulting vorticity.}\label{fig:NSEforward}
\end{figure}


\section*{Acknowledgments}
The authors thank Andrew Childs, Arkopal Dutt, Jay Gambetta, Hari Krovi, Kristan Temme, Martin Mevissen and Juan Bernabe Moreno for helpful discussions. 


\appendix

\section{Proof of Lemma~\ref{lemma:AC1}}
\label{app:lemmaAC1proof}

For convenience of the reader we restate the lemma.
\begin{lemma*}
For any vectors $\psi,\phi\in \calF_0$ one has 
\be
\label{C_overlap_upper_restated}
|\la \phi |B+C|\psi\ra |\le  \gamma \sqrt{\la \psi|A|\psi\ra \la \phi|A^2|\phi\ra}
\ee
where $\gamma = 2q^{1/2} sJ \lambda_1^{-2} (1+ \lambda_1^{1/2} q^{-1/2})$.
\end{lemma*}
\begin{proof}
Let us first upper bound $|\la \phi|C|\psi\ra|$. From Eq.~(\ref{Csimplified}) one gets
\[
C=C^\uparrow -C^\downarrow,
\]
where
\[
C^\uparrow = (q/2)^{1/2}\sum_{i,j,k=1}^N   (\lambda_i\lambda_j  \lambda_k)^{-1/2}   \lambda_i  c_{ijk} a_j^\dag a_k^\dag a_i 
\]
and $C^\downarrow=(C^\uparrow)^\dag$.
The bound $(\lambda_j  \lambda_k)^{-1/2} \le \lambda_1^{-1}$ and
 the triangle inequality give
\[
|\la \phi |C^\uparrow|\psi\ra |\le (q/2)^{1/2} \lambda_1^{-1} \sum_{\mm \in \calJ} \; |\psi_\mm|  \sum_{i,j,k=1}^N   (m_i \lambda_i)^{1/2} |c_{ijk}| \cdot |\la \phi| a_j^\dag a_k^\dag |\mm-e^i\ra|.
\]
Let us rewrite this as 
\[
|\la \phi |C^\uparrow|\psi\ra |\le (q/2)^{1/2} \lambda_1^{-1}  \sum_{\mm \in \calJ} \; \sum_{i,j,k=1}^N x_{ijk\mm} y_{ijk\mm},
\]
where
\[
x_{ijk\mm} = |\psi_\mm| (m_i\lambda_i)^{1/2}   |c_{ijk}|^{1/2} \quad \mbox{and} \quad y_{ijk\mm} =   |c_{ijk}|^{1/2} \cdot |\la \phi| a_j^\dag a_k^\dag |\mm-e^i\ra|.
\]
Let us agree that $y_{ijk\mm} =0$ if $m_i=0$.
Cauchy-Schwartz inequality gives
\be
\label{lemmaAC1_eq1}
|\la \phi |C^\uparrow|\psi\ra | \le (q/2)^{1/2} \lambda_1^{-1}  
\sqrt{  \sum_{\mm \in \calJ}\;  \sum_{i,j,k=1}^N  x_{ijk\mm}^2}
\cdot \sqrt{   \sum_{\mm \in \calJ}\;  \sum_{i,j,k=1}^N  y_{ijk\mm}^2}.
\ee
For any fixed index $i$ there are at most $s$ nonzero coefficients $c_{ijk}$. Hence
\be
\label{lemmaAC1_eq2}
\sum_{\mm \in \calJ}\;  \sum_{i,j,k=1}^N  x_{ijk\mm}^2 \le 
sJ \sum_{\mm \in \calJ} \; \sum_{i=1}^N  \; |\psi_\mm|^2 m_i \lambda_i
 =sJ \la \psi| A |\psi \ra.
\ee
From the definition of creation/annihilation operators one gets
\[
y_{ijk\mm}  = |c_{ijk}|^{1/2}  (m_j+1)^{1/2}(m_k+1)^{1/2}  |\phi_{\mm - e^i + e^j +e^k}|.
\]
Perform a change of variables $\nn= \mm-e^i+e^j+e^k$ in the sum over $\mm$.
Then $m_j+1=n_j$ and $m_k+1=n_k$. Thus
\be
\label{lemmaAC1_eq3}
\sum_{\mm \in \calJ}\;  \sum_{i,j,k=1}^N   y_{ijk\mm}^2
=  \sum_{i,j,k=1}^N\; \sum_{\mm \in \calJ}\;   y_{ijk\mm}^2
\le 
 \sum_{i,j,k=1}^N\; 
\sum_{\nn \in \calJ}\; \phi_\nn^2 |c_{ijk}| n_j n_k \le sJ \lambda_1^{-2} \la \phi|A^2|\phi\ra.
\ee
Substituting Eqs.~(\ref{lemmaAC1_eq2},\ref{lemmaAC1_eq3}) into Eq.~(\ref{lemmaAC1_eq1}) gives
\be
\label{lemmaAC1_eq4}
|\la \phi |C^\uparrow|\psi\ra | \le (q/2)^{1/2} sJ  \lambda_1^{-2}  \sqrt{\la \psi|A|\psi\ra} \sqrt{\la \phi|A^2|\phi\ra}.
\ee
We shall bound  $|\la \phi |C^\downarrow |\psi\ra|$ by rearranging the terms in $C^\downarrow$ and applying the same arguments as above.
By definition,
\[
C^\downarrow = (q/2)^{1/2}\sum_{i,j,k=1}^N   (\lambda_i\lambda_j  \lambda_k)^{-1/2}   \lambda_i  c_{ijk} a_i^\dag a_k a_j.
\]
From the triangle inequality one gets
\[
|\la \phi |C^\downarrow|\psi\ra |\le (q/2)^{1/2} \lambda_1^{-1} \sum_{\mm \in \calJ} |\psi_\mm|  \sum_{i,j,k=1}^N   m_j^{1/2} \lambda_i^{1/2} |c_{ijk}|\cdot |\la \phi| a_i^\dag a_k|\mm-e^j\ra|.
\]
Let us rewrite this as
\[
|\la \phi |C^\downarrow|\psi\ra |\le  (q/2)^{1/2} \lambda_1^{-1}
 \sum_{\mm \in \calJ}\;  \sum_{i,j,k=1}^N x_{ijk\mm} y_{ijk\mm},
\]
where
\[
x_{ijk\mm} = |\psi_\mm|  \cdot
|c_{ijk}| (m_j)^{1/2} \quad \mbox{and} \quad y_{ijk\mm} = |c_{ijk}|^{1/2} \lambda_i^{1/2} \cdot |\la \phi| a_i^\dag a_k|\mm-e^j\ra|.
\]
Cauchy-Schwartz inequality gives
\be
\label{lemmaAC1_eq8}
|\la \phi |C^\downarrow|\psi\ra | \le  (q/2)^{1/2} \lambda_1^{-1}
\sqrt{  \sum_{\mm \in \calJ}\;  \sum_{i,j,k=1}^N  x_{ijk\mm}^2}
\cdot \sqrt{   \sum_{\mm \in \calJ}\;\sum_{i,j,k=1}^N  y_{ijk\mm}^2}.
\ee
For any fixed index $j$ there are at most $s$ nonzero coefficients $c_{ijk}$. 
Thus 
\be
\label{lemmaAC1_eq9}
 \sum_{\mm \in \calJ}\; \sum_{i,j,k=1}^N  x_{ijk\mm}^2 \le sJ  \sum_{\mm \in \calJ}\; \sum_{j=1}^N \psi_\mm^2 m_j \le  sJ \lambda_1^{-1} \la \psi |A|\psi\ra.
\ee
From the definition of creation/annihilation operators one gets
\[
y_{ijk\mm}  = |c_{ijk}|^{1/2} \lambda_i^{1/2} (m_k(m_i+1))^{1/2}  |\phi_{\mm + e^i - e^j -e^k}|.
\]
Perform a change of variables $\nn= \mm+e^i-e^j-e^k$ in the sum over $\mm$.
Then $m_k=n_k+1$ and $m_i+1=n_i$. Thus
\be
\label{lemmaAC1_eq10}
\sum_{\mm \in \calJ}\;  \sum_{i,j,k=1}^N   y_{ijk\mm}^2
\le 
\sum_{\nn \in \calJ}\; \phi_\nn^2 \sum_{i,j,k=1}^N |c_{ijk}|n_i\lambda_i (n_k+1).
\ee
Using the sparsity of $c_{ijk}$ one gets
\[
\sum_{i,j,k=1}^N |c_{ijk}| n_i\lambda_i(n_k+1) \le sJ \sum_{i,k=1}^N n_i \lambda_i n_k+ sJ  \sum_{i=1}^N n_i\lambda_i
\le sJ \lambda_1^{-1} \la \nn|A^2|\nn\ra + sJ  \la \nn|A|\nn\ra
\]
for any $\nn \in \calJ$. Substituting this into Eq.~(\ref{lemmaAC1_eq10}) gives
\be
\label{lemmaAC1_eq11}
\sum_{\mm \in \calJ}\;  \sum_{i,j,k=1}^N   y_{ijk\mm}^2 
\le sJ \lambda_1^{-1} \la \phi|A^2|\phi\ra +  sJ  \la \phi|A|\phi\ra.
\ee
Finally, substituting Eqs.~(\ref{lemmaAC1_eq9},\ref{lemmaAC1_eq11}) into Eq.~(\ref{lemmaAC1_eq8}) one gets
\begin{align*}
|\la \phi |C^\downarrow|\psi\ra | &  \le   (q/2)^{1/2} \lambda_1^{-1}
\sqrt{sJ \lambda_1^{-1} \la \psi|A|\psi\ra}
\cdot
\sqrt{ sJ \lambda_1^{-1} \la \phi|A^2|\phi\ra +  sJ \la \phi|A|\phi\ra }\\
= & 
(q/2)^{1/2} sJ \lambda_1^{-2} \sqrt{\la \psi|A|\psi\ra} \sqrt{\la \phi|A^2|\phi\ra + \lambda_1 \la \phi|A|\phi\ra}.
\end{align*}
Operator inequality $\lambda_1 A \le A^2$  gives $\lambda_1 \la \phi|A|\phi\ra =\la \phi|\lambda_1 A|\phi\ra \le \la \phi|A^2|\phi\ra$.
Hence
\[
|\la \phi |C^\downarrow|\psi\ra |  \le q^{1/2} sJ \lambda_1^{-2} \sqrt{\la \psi|A|\psi\ra} \sqrt{\la \phi|A^2|\phi\ra}.
\]
Combing this, the upper bound on $|\la \phi|C^\uparrow|\psi\ra|$ from Eq.~(\ref{lemmaAC1_eq4}), and $1+1/\sqrt{2}\le 2$  proves 
\be
\label{partCupper}
|\la \phi |C|\psi\ra | \le 2 q^{1/2} sJ \lambda_1^{-2} \sqrt{\la \psi|A|\psi\ra} \sqrt{\la \phi|A^2|\phi\ra}.
\ee

Similar arguments provide a bound on $|\la \phi|B|\psi\ra|$,
By Eq.~(\ref{ABexplicit}), 
$B= \sum_{i,j=1}^N \beta_{i,j} \, a_j^\dag a_i$, 
where 
$\beta_{i,j} = b_{ij}\lambda_j^{-1/2} \lambda_i^{1/2}$.
Thus
\[
|\la \phi|B|\psi\ra| \le \lambda_1^{-1/2} \sum_{\mm \in \calJ} \; |\psi_\mm| \sum_{i,j=1}^N (m_i \lambda_i)^{1/2} |b_{ij}| \cdot |\la \phi|a_j^\dag|\mm-e^i\ra|
=\lambda_1^{-1/2} \sum_{\mm\in \calJ} \; \sum_{i,j=1}^N x_{ij\mm} y_{ij\mm},
\]
where
\[
x_{ij\mm} = |\psi_\mm| (m_i \lambda_i)^{1/2}\cdot |b_{ij}|^{1/2} \quad \mbox{and} \quad
y_{ij\mm} =  |b_{ij}|^{1/2} \cdot  |\la \phi|a_j^\dag|\mm-e^i\ra|.
\]
Let us agree that $y_{ij\mm}=0$ if $m_i=0$. Cauchy-Schwartz gives 
\[
|\la \phi|B|\psi\ra| \le \lambda_1^{-1/2}  \sqrt{ \sum_{\mm \in \calJ}\; \sum_{i,j=1}^N x_{ij\mm}^2}\cdot  \sqrt{ \sum_{\mm \in \calJ}\; \sum_{i,j=1}^N y_{ij\mm}^2}.
\] 
The same arguments as above show that
\[
 \sum_{\mm \in \calJ}\; \sum_{i,j=1}^N x_{ij\mm}^2 \le sJ \la \psi|A|\psi\ra.
\]
We have
\[
y_{ij\mm} =  |b_{ij}|^{1/2} (m_j+1)^{1/2} |\phi_{\mm + e^j-e^i}|.
\]
Perform a change of variables $\nn = \mm +e^j - e^i$ in the sum over $\mm$. Then $m_j+1=n_j$ and we get
\[
 \sum_{\mm \in \calJ}\; \sum_{i,j=1}^N y_{ij\mm}^2 \le  \sum_{\nn \in \calJ}\; \sum_{i,j=1}^N |b_{ij}| n_j |\phi_\nn|^2
 \le sJ \sum_{j=1}^N n_j |\phi_\nn|^2 \le sJ \lambda_1^{-1} \la \phi|A|\phi\ra.
\]
We have shown that
\be
\label{partBupper}
|\la \phi|B|\psi\ra| \le sJ \lambda_1^{-1}  \sqrt{\la \psi|A|\psi\ra \cdot \la \phi|A|\phi\ra} \le sJ \lambda_1^{-3/2}  \sqrt{\la \psi|A|\psi\ra \cdot \la \phi|A^2|\phi\ra}.
\ee
Here the second inequality follows from $\lambda_1 A \le A^2$. 
The triangle inequality $|\la \phi|B+C|\psi\ra|\le |\la \phi|B|\psi\ra| + |\la \phi|C|\psi\ra$ 
and Eqs.~(\ref{partCupper},\ref{partBupper})
prove
the desired bound.
\end{proof}

\section{Proof of Lemma~\ref{lemma:commutator_order1}}
\label{app:B}

For convenience of the reader we restate the lemma.
\begin{lemma*}
For any vector $\psi\in \calF_0$ one has
\be
\la \psi|[A,B+C]|\psi\ra\le 
2Js \lambda_1^{-1/2} 
(1+ 4 q^{1/2}  \lambda_1^{-1/2}  ) \sqrt{ \la \psi|A|\psi\ra \la \psi |A^2|\psi\ra}.
\ee
\end{lemma*}
\begin{proof}
Let us first upper bound $\la \psi|[A,C]|\psi\ra$.
From Eqs.~(\ref{ABexplicit},\ref{Csimplified}) one gets
\be
\label{AC_commutator_eq1}
[A,C] = (q/2)^{1/2}\sum_{i,j,k,\ell=1}^N c_{ijk} \lambda_i^{1/2} (\lambda_j  \lambda_k)^{-1/2}  \lambda_\ell [a_\ell^\dag a_\ell, a_j^\dag a_k^\dag a_i - a_i^\dag a_k a_j ] 
\ee
The canonical  commutation rules give
\be
 [a_\ell^\dag a_\ell, a_j^\dag a_k^\dag a_i ] = (\delta_{j,\ell} + \delta_{k,\ell} - \delta_{i,\ell}) a_j^\dag a_k^\dag a_i.
\ee
Thus
\be
\label{Commutator_eq1}
[A,C] = (q/2)^{1/2}\sum_{i,j,k=1}^N c_{ijk} \lambda_i^{1/2} (\lambda_j  \lambda_k)^{-1/2}  ( \lambda_j + \lambda_k -\lambda_i) (a_j^\dag a_k^\dag a_i + a_i^\dag a_k a_j).
\ee
First let us upper bound the coefficients in the above sum. 
\begin{prop}
\label{prop:coeff_upper}
For any triple of indices $i\ne j\ne k$ one has 
\be
|c_{ijk}|  \lambda_i^{1/2} (\lambda_j  \lambda_k)^{-1/2}  | \lambda_j + \lambda_k -\lambda_i| \le 4J \lambda_1^{-1}(\lambda_i \lambda_j \lambda_k)^{1/2}.
\ee
\end{prop}
\begin{proof}
Recall that $\lambda_i c_{ijk} + \lambda_j c_{jki} + \lambda_k c_{kij}=0$ due to the divergence-free assumption. Hence
\be
|c_{ijk}| | \lambda_j + \lambda_k -\lambda_i| \le |c_{ijk}| (\lambda_j+\lambda_k) + |c_{jki}| \lambda_j + |c_{kij}|\lambda_k  \le 2J(\lambda_j+\lambda_k).
\ee
This gives
\be
|c_{ijk}|  \lambda_i^{1/2} (\lambda_j  \lambda_k)^{-1/2}  | \lambda_j + \lambda_k -\lambda_i| \le 2J \frac{(\lambda_j+\lambda_k)}{\lambda_j \lambda_k} (\lambda_i \lambda_j \lambda_k)^{1/2}
\le 4J \lambda_1^{-1} (\lambda_i \lambda_j \lambda_k)^{1/2}.
\ee
\end{proof}
Define binary coefficients  $\bar{c}_{ijk}\in \{0,1\}$ such that $\bar{c}_{ijk}=1$ if $c_{ijk}$ is nonzero and $\bar{c}_{ijk}=0$ otherwise. 
Substituting the bound of Proposition~\ref{prop:coeff_upper} into Eq.~(\ref{Commutator_eq1}) 
and taking into account that  $|\la \psi| a_j^\dag a_k^\dag a_i|\psi\ra|=|\la \psi| a_i^\dag a_k a_j|\psi\ra|$
one gets
\be
\label{Commutator_eq2}
\la \psi| [A,C] |\psi\ra \le
8 q^{1/2} J \lambda_1^{-1} \omega,
 \ee
where
\[
\omega  = 
 \sum_{i,j,k=1}^N \bar{c}_{ijk}  (\lambda_i \lambda_j \lambda_k)^{1/2}  |\la \psi |a_j^\dag a_k^\dag a_i|\psi\ra|.
\]
Let us  upper bound $\omega$. The argument is nearly identical to the one used in the proof of Lemma~\ref{lemma:AC1}. We have
\[
\la \psi |a_j^\dag a_k^\dag a_i|\psi\ra = \sum_{\mm \in \calJ} \psi_\mm m_i^{1/2} (m_j+1)^{1/2} (m_k+1)^{1/2} \psi_{\mm -e^i + e^j + e^k}^*.
\]
Hence
\[
\omega \le 
\sum_{\mm \in \calJ}\;  \sum_{i,j,k=1}^N x_{ijk\mm} y_{ijk\mm}
\]
where
\[
x_{ijk\mm} = \bar{c}_{ijk} (\lambda_i m_i)^{1/2} |\psi_\mm| \quad \mbox{and} \quad y_{ijk\mm} =\bar{c}_{ijk} (\lambda_j\lambda_k)^{1/2}  (m_j+1)^{1/2} (m_k+1)^{1/2} |\psi_{\mm -e^i + e^j + e^k}|.
\]
Cauchy-Schwartz inequality gives 
\[
\omega  \le  \sqrt{ \sum_{\mm \in \calJ}\; \sum_{i,j,k=1}^N x_{ijk\mm}^2}\cdot  \sqrt{ \sum_{\mm \in \calJ}\; \sum_{i,j,k=1}^N y_{ijk\mm}^2}.
\] 
Here it is understood that the sum of $y_{ijk\mm}^2$ only includes terms $\mm$ with $m_i\ge 1$ (otherwise $y_{ijk\mm}$ is not defined).
From $\sum_{j,k=1}^N \bar{c}_{ijk}\le s$ one gets
\[
 \sum_{\mm \in \calJ}\; \sum_{i,j,k=1}^N x_{ijk\mm}^2 \le 
 s\sum_{\mm \in \calJ} \; \sum_{i=1}^N  \lambda_i m_i \psi_\mm^2= 
 s \la \psi |A|\psi\ra.
\]
Performing a change of variables $\nn = \mm -e^i + e^j +e^k$ in the sum of $y_{ijk\mm}^2$ 
and noting that $m_j+1=n_j$, $m_k+1=n_k$, one gets
\[
 \sum_{i,j,k=1}^N \;  \sum_{\mm \in \calJ}\;
 y_{ijk\mm}^2
 \le \sum_{i,j,k=1}^N \; \bar{c}_{ijk} \sum_{\nn \in \calJ} \;  \lambda_j n_j \lambda_k n_k \psi_\nn^2 \le
  s\la \psi|A^2|\psi\ra.
\]
Hence
\[
\omega  \le s  \sqrt{ \la \psi|A|\psi\ra \la \psi |A^2|\psi\ra}.
\]
Substituting this into Eq.~(\ref{Commutator_eq2}) gives
\[
\la \psi| [A,C] |\psi\ra \le
8 q^{1/2} J s \lambda_1^{-1}  \sqrt{ \la \psi|A|\psi\ra \la \psi |A^2|\psi\ra}.
\]
Next let us upper bound the expected value $\la \psi|  [A,B]|\psi\ra$.
Recall that 
\[
B = \sum_{i,j=1}^N  b_{ij}   \lambda_j^{-1/2} \lambda_i^{1/2} a_j^\dag a_i,
\]
see Eq.~(\ref{ABexplicit}). 
The canonical commutation rules give
\be
\label{app_commutatorAB}
[A,B] = \sum_{i,j=1}^N b_{ij}   \lambda_j^{-1/2} \lambda_i^{1/2} (\lambda_j-\lambda_i) a_j^\dag a_i.
\ee
First let us upper bound the coefficients in the above sum. 
We claim that 
\be
\label{AB_coeff_upper}
|b_{ij}|   \lambda_j^{-1/2} \lambda_i^{1/2} |\lambda_j-\lambda_i|\le 2J (\lambda_i \lambda_j)^{1/2}.
\ee
Indeed,
recall that $\lambda_i b_{ij} + \lambda_j b_{ji}=0$ due to 
the divergence-free assumption.  Hence
\[
|b_{ij}| \cdot  |\lambda_j-\lambda_i| \le( |b_{ij}| + |b_{ji}|) \lambda_j \le 2J \lambda_j.
\]
This proves Eq.~(\ref{AB_coeff_upper}). Define binary coefficients  $\bar{b}_{ij}\in \{0,1\}$ such that $\bar{b}_{ij}=1$ if $b_{ij}$ is nonzero and $\bar{b}_{ij}=0$ otherwise. 
We arrive at 
\[
\la \psi |[A,B]|\psi\ra \le 2J \sum_{i,j=1}^N \bar{b}_{ij} (\lambda_i \lambda_j)^{1/2} |\la \psi|a_j^\dag a_i|\psi\ra|.
\]
From 
\[
\la \psi|a_j^\dag a_i|\psi\ra = \sum_{\mm \in \calJ} \psi_\mm m_i^{1/2} (m_j+1)^{1/2} \psi_{\mm-e^i +e^j}^*
\]
one gets
\[
\la \psi |[A,B]|\psi\ra \le 2J \sum_{\mm\in \calJ} \; \sum_{i,j=1}^N x_{ij\mm} y_{ij\mm},
\]
where
\[
x_{ij\mm}=   \bar{b}_{ij} m_i^{1/2} \lambda_i^{1/2}|\psi_\mm|  \quad \mbox{and} \quad y_{ij\mm} =  \bar{b}_{ij}  (m_j+1)^{1/2}\lambda_j^{1/2}| \psi_{\mm-e^i +e^j}|,
\]
Cauchy-Schwartz inequality gives
\[
\la \psi |[A,B]|\psi\ra \le 2J \sqrt{ \sum_{\mm \in \calJ}\; \sum_{i,j=1}^N x_{ij\mm}^2}\cdot  \sqrt{ \sum_{\mm \in \calJ}\; \sum_{i,j=1}^N y_{ij\mm}^2}.
\]
We have
\[
 \sum_{\mm \in \calJ}\; \sum_{i,j=1}^N x_{ij\mm}^2 \le s\la \psi|A|\psi\ra
\]
Performing a change of variables $\nn=\mm -e^i+e^j$ in the sum of $y_{ij\mm}^2$ and noting that $m_j+1=n_j$ one gets
\[
 \sum_{\mm \in \calJ}\; \sum_{i,j=1}^N y_{ij\mm}^2 \le \sum_{i,j=1}^N \bar{b}_{ij} \sum_{\nn \in \calJ} \; n_j \lambda_j \psi_\nn^2 \le s\la \psi|A|\psi\ra.
 \]
Hence
\[
\la \psi |[A,B]|\psi\ra \le 2Js \la \psi|A|\psi\ra\le 2Js \lambda_1^{-1/2} \sqrt{ \la \psi|A|\psi\ra \la \psi|A^2|\psi\ra}.
\]
Combining the upper bounds on $\la \psi |[A,B]|\psi\ra$ and $\la \psi |[A,C]|\psi\ra$ proves the lemma.
\end{proof}

\section{Proof of Lemma~\ref{lemma:AC2}}
\label{app:lemmaAC2proof}

\begin{lemma*}
For any integer  $p\ge 2$ there exists a real number  $\kappa_p\le poly(s,J,\lambda_1^{-1},\lambda_N)$ 
such that  for any vector $\psi \in \calF_0$ one has 
\be
\label{commutator_bound_restated}
\la \psi| [A^p,B+C]|\psi\ra \le \la \psi | A^{p+1}|\psi\ra  +  \kappa_{p} \la \psi| A^p  |\psi\ra.
\ee
\end{lemma*}
\begin{proof}
Let us first prove Eq.~(\ref{commutator_bound_restated}) for $p=1$. 
Although the case $p=1$ is already covered by Corollary~\ref{corol:commutator_bound_order1},
here we use a different proof strategy that can be generalized to $p\ge 2$.

Given Hermitian operators $X$ and $Y$ let us write $X\le Y$ if $Y-X$ is positive semidefinite.
We need the following implication of Gershgorin circles theorem.
\begin{prop}[\bf Diagonal dominance]
\label{prop:DD}
Suppose $X$ and $Y$ are hermitian operators on $\calF_0$.
Suppose $Y$ is diagonal, $X$ has zero diagonal, and 
\[
\la \mm|Y|\mm\ra \ge \sum_{\nn \in \calJ} |\la \nn|X|\mm\ra|
\]
for all $\mm \in \calJ$. Then $Y-X$ is positive semidefinite, that is, $\la \psi|Y-X|\psi\ra \ge 0$ for all $\psi \in \calF_0$.
\end{prop}
\begin{proof}
\begin{align*}
\la \psi|Y-X|\psi\ra & \ge \sum_{\mm\in \calJ}  \la \mm|Y|\mm\ra |\la \mm|\psi\ra|^2 - \sum_{\nn,\mm\in \calJ} |\la \nn|X|\mm\ra|\cdot |\la \nn|\psi\ra|\cdot |\mm|\psi\ra| \\
& \ge  \sum_{\mm\in \calJ}  \la \mm|Y|\mm\ra |\la \mm|\psi\ra|^2 - \frac12  \sum_{\nn,\mm\in \calJ} |\la \nn|X|\mm\ra| \left(  |\la \nn|\psi\ra|^2 +  |\la \mm|\psi\ra|^2\right)  \\
& =   \sum_{\mm\in \calJ}  \la \mm|Y|\mm\ra |\la \mm|\psi\ra|^2 -  \sum_{\nn,\mm\in \calJ} |\la \nn|X|\mm\ra|  |\la \mm|\psi\ra|^2  \\
& =   \sum_{\mm\in \calJ}   |\la \mm|\psi\ra|^2\left(  \la \mm|Y|\mm\ra  -   \sum_{\nn\in \calJ} |\la \nn|X|\mm\ra|\right) \ge0.
\end{align*}
\end{proof}
Below we write $X\le_d Y$ whenever $X$ and $Y$ are operators satisfying conditions of Proposition~\ref{prop:DD}.
Then $X\le _d Y$ implies $X\le Y$ for any Hermitian operators $X$ and $Y$. 
One can easily check that  $X'\le_d Y'$ and $X''\le_d Y''$ implies $(\alpha X'+\beta X'')\le_d (|\alpha| Y'+|\beta| Y'')$ for any coefficients $\alpha,\beta \in \RR$.

We claim that 
\be
\label{AC_commutator_dbound}
[A,C] \le_d  A^2 + \kappa_C  A
\ee
for some $\kappa_C\le poly(s,J,\lambda_1^{-1},\lambda_N)$.
Indeed, we have already showed that 
\be
\label{commutator_eq1}
[A,C] = (q/2)^{1/2}\sum_{i,j,k=1}^N c_{ijk} \lambda_i^{1/2} (\lambda_j  \lambda_k)^{-1/2}  ( \lambda_j + \lambda_k -\lambda_i) (a_j^\dag a_k^\dag a_i + a_i^\dag a_k a_j).
\ee
see Eq.~(\ref{Commutator_eq1}).
The following proposition proves an analogue of Eq.~(\ref{AC_commutator_dbound}) for a single term in the commutator $[A,C]$.
\begin{prop}
\label{prop:triple}
Consider any triple of indices $i\ne j \ne k$ and hermitian operators
\be
g = \gamma (a_j^\dag a_k^\dag a_i + a_i^\dag a_k a_j) \quad \mbox{and} \quad h = \lambda_i a_i^\dag a_i + \lambda_j a_j^\dag a_j + \lambda_k a_k^\dag a_k,
\ee
where $\gamma \in \RR$.
For any $\beta>0$ one has 
\be
\label{g_upper}
g \le_d  \frac12 |\gamma| \lambda_1^{-3/2} (\beta h^2 + \beta^{-1} h).
\ee
\end{prop}
\begin{proof}
Clearly $h$ is diagonal while $g$ has zero diagonal since it changes the particle number by $\pm 1$.
Define a function 
\[
\sigma(\mm)= \sum_{\nn \in \calJ}  |\la \nn|g|\mm\ra|.
\]
We need to check that 
\be
\label{g_upper_restated}
\sigma(\mm) \le   \frac12 |\gamma| \lambda_1^{-3/2}\la \mm | \beta h^2 + \beta^{-1} h|\mm\ra
\ee
We have 
\[
g |\mm\ra = \gamma m_i^{1/2} ((m_j+1)(m_k+1))^{1/2} |\mm -e^i + e^j + e^k\ra + \gamma (m_i+1)^{1/2} (m_j m_k)^{1/2} |\mm +e^i - e^j - e^k\ra
\]
which gives
\[
\sigma(\mm)= |\gamma| m_i^{1/2} ((m_j+1)(m_k+1))^{1/2} + |\gamma| (m_i+1)^{1/2} (m_j m_k)^{1/2}.
\]
Consider several cases.

\noindent
{\em Case 1:} $m_i=0$.  Then $\sigma(\mm) = |\gamma| (m_jm_k)^{1/2}\le (|\gamma|/2)(m_i+m_j+m_k)$.

\noindent
{\em Case 2:} $m_i\ge 1$, $m_j=m_k=0$.  Then $\sigma(\mm) = |\gamma| m_i^{1/2} \le |\gamma| (m_i+m_j+m_k)$.

\noindent
{\em Case 3:} $m_i\ge 1$, $m_j=0$, $m_k \ge 1$. Then $m_k+1\le 2m_k$ and thus
\[
\sigma(\mm) = |\gamma| m_i^{1/2} (m_k+1)^{1/2}\le \sqrt{2} |\gamma|  (m_im_k)^{1/2}\le ( |\gamma|/\sqrt{2}) (m_i+m_j+m_k).
\]

\noindent
{\em Case 4:} $m_i\ge 1$, $m_j\ge 1$, $m_k=0$. Same argument as above gives $\sigma(\mm) \le   ( |\gamma|/\sqrt{2}) (m_i+m_j+m_k)$.

\noindent
{\em Case 5:} $m_i,m_j,m_k\ge 1$. Then $m_i+1\le 2m_i$ and $m_j+1\le 2m_j$ and $m_k+1\le 2m_k$. In this case
\[
\sigma(\mm) \le 4 |\gamma| (m_i m_j m_k)^{1/2} \le  |\gamma| (m_i+m_j+m_k)^{3/2}.
\]

Thus in all cases one has 
\[
\sigma(\mm) \le  |\gamma| (m_i+m_j+m_k)^{3/2}\le |\gamma| \lambda_1^{-3/2} (m_i \lambda_i + m_j \lambda_j + m_k \lambda_k)^{3/2}.
\]
For any $x\ge 0$ and any $\beta>0$ one has 
$x^{3/2} \le \frac12 (\beta^{-1}  x + \beta x^2)$.
Hence 
\[
\sigma(\mm) \le \frac12 |\gamma| \lambda_1^{-3/2} (\beta x^2 + \beta^{-1} x)
\]
where $x=m_i \lambda_i + m_j \lambda_j + m_k \lambda_k$. This proves Eq.~(\ref{g_upper_restated}) since
$\la \mm|h|\mm\ra = x$ and $\la \mm|h^2|\mm\ra = x^2$. 
\end{proof}
Consider a single term in the righthand side of Eq.~(\ref{commutator_eq1}) of the form 
 $g=\gamma(a_j^\dag a_k^\dag a_i + a_i^\dag a_k a_j)$ with  
\[
\gamma =(q/2)^{1/2} c_{ijk} \lambda_i^{1/2} (\lambda_j  \lambda_k)^{-1/2}  ( \lambda_j + \lambda_k -\lambda_i).
\]
Clearly, 
$|\gamma|\ \le \gamma_{max}\equiv 3(q/2)^{1/2} J \lambda_N^{3/2} \lambda_1^{-1}$.
Proposition~\ref{prop:triple} gives
$g\le_d (1/2)\gamma_{max} \lambda_1^{-3/2} (\beta h_{ijk}^2 + \beta^{-1} h_{ijk})$, where
\[
h_{ijk}  =  \lambda_i a_i^\dag a_i + \lambda_j a_j^\dag a_j + \lambda_k a_k^\dag a_k
\]
and  $\beta>0$ is arbitrary. 
Combining all the terms in Eq.~(\ref{commutator_eq1}) gives 
\[
[A,C] \le_d (1/2) \gamma_{max} \lambda_1^{-3/2} \sum_{i,j,k=1}^N \bar{c}_{ijk} (\beta h_{ijk}^2 + \beta^{-1} h_{ijk}).
\]
Here $\bar{c}_{ijk}=1$ if $c_{ijk}\ne 0$ and $\bar{c}_{ijk}=0$ otherwise. 
Define the particle number operator 
$n_i = a_i^\dag a_i$.
Clearly, $h_{ijk}\le \lambda_N(n_i + n_j+n_k)$ and $h_{ijk}^2 \le \lambda_N^2 (n_i + n_j + n_k)^2$, where the inequality
is understood entrywise, for each diagonal entry of the considered diagonal operators. 
We arrive at
\be
\label{commutator_eq3}
[A,C] \le_d (3/2) (q/2)^{1/2} J (\lambda_N/\lambda_1)^{5/2}\sum_{i,j,k=1}^N \bar{c}_{ijk} (\beta \lambda_N (n_i + n_j +n_k)^2 + \beta^{-1} (n_i + n_j +n_k))
\ee
for any $\beta>0$. 
We also have the following entrywise inequality:
\[
\sum_{i,j,k=1}^N \bar{c}_{ijk} (n_i + n_j +n_k) \le 3s  \sum_{i=1}^N n_i \le 3s\lambda_1^{-1} A.
\]
Here we used the fact that $\bar{c}_{ijk}\in \{0,1\}$ and the assumption that any two-dimensional slice of the tensor $c_{ijk}$ has at most $s$ nonzeros. 
By the same reason, 
\[
\sum_{i,j,k=1}^N \bar{c}_{ijk}   (n_i + n_j +n_k)^2 \le 
\left( \sum_{i,j,k=1}^N \bar{c}_{ijk}  (n_i + n_j +n_k) \right)^2
\le
9s^2 \left( \sum_{i=1}^N n_i \right)^2 \le 9s^2 \lambda_1^{-2} A^2.
\]
Substituting this into Eq.~(\ref{commutator_eq3}) gives
\be
\label{commutator_eq4a}
[A,C] \le_d  (3/2) (q/2)^{1/2} J (\lambda_N/\lambda_1)^{5/2} \left( 
9\beta s^2 \lambda_1^{-2}\lambda_N   A^2 + 3s \beta^{-1} \lambda_1^{-1} A\right).
\ee
We can make the coefficient of $A^2$ equal to one by choosing 
$\beta^{-1} = (27/2) (q/2)^{1/2} s^2 J (\lambda_N/\lambda_1)^{7/2} \lambda_1^{-1}$.
This proves the desired bound Eq.~(\ref{AC_commutator_dbound}) with 
$\kappa_C = (243/8) q  s^3 J^2 \lambda_N^6 \lambda_1^{-8}$.

Next consider a commutator $[A^p,C]$ with $p\ge 2$.
We have 
\be
\label{ApC_commutator_eq1}
[A^p,C] = \sum_{j=1}^p A^{p-j} [A,C] A^{j-1}=\sum_{j=1}^p X_j,
\ee
where $X_j = A^{p-j} [A,C] A^{j-1}$. Given a multi-index $\mm\in \calJ$ let
\be
\label{ApC_commutator_eq1a}
\eta_j(\mm) = \sum_{\nn \in \calJ} |\la \nn |X_j|\mm\ra| = \lambda_\mm^{j-1}  \sum_{\nn \in \calJ} \lambda_\nn^{p-j} |\la \nn |[A,C]|\mm\ra|.
\ee
Here  $\lambda_\mm = \sum_{i=1}^N \lambda_i m_i=\la \mm|A|\mm\ra$. 
From Eq.~(\ref{commutator_eq1}) one infers that $[A,C]|\mm\ra$ is a linear combination of basis states
$|\nn\ra$ such that $\nn = \mm - e^i + e^j + e^k$ or $\nn = \mm +e^i - e^j - e^k$ for some triple of indices $i\ne j\ne k$.
Hence $|\la \nn |[A,C]|\mm\ra|\ne 0$ implies 
\be
\lambda_\nn \le \lambda_\mm + 2\lambda_N \le (1+2\lambda_N\lambda_1^{-1}) \lambda_\mm.
\ee
To get the second inequality we noted that $\lambda_1 \le \lambda_\mm$ for all $\mm \in \calJ$. Hence
\be
\label{ApC_commutator_eq2}
\eta_j(\mm) \le (1+2\lambda_N \lambda_1^{-1})^{p-j} \lambda_\mm^{p-1}  \sum_{\nn \in \calJ} |\la \nn |[A,C]|\mm\ra|.
\ee
From  Eq.~(\ref{commutator_eq4a}) one infers that for any $\beta>0$ 
\be
\label{ApC_commutator_eq3}
\sum_{\nn \in \calJ} |\la \nn |[A,C]|\mm\ra| \le \beta \kappa' \lambda_\mm^2 + \beta^{-1} \kappa'' \lambda_\mm,
\ee
for some coefficients $\kappa',\kappa''\le poly(s,J,\lambda_1^{-1},\lambda_N)$.
 Choose $\beta$ such that 
\[
\beta   p(1+2\lambda_N \lambda_1^{-1})^{p-j} \kappa' \le 1
\]
for all $j=1,\ldots,p$. Clearly, $\beta\le poly(s,J,\lambda_1^{-1},\lambda_N)$ for any $p=O(1)$.
Then Eqs.~(\ref{ApC_commutator_eq2},\ref{ApC_commutator_eq3}) imply
\be
\eta_j(\mm) \le p^{-1} \lambda_\mm^{p+1} + (1+2\lambda_N \lambda_1^{-1})^{p-j} \beta^{-1} \kappa'' \lambda_\mm^p.
\ee
Combining this and Eq.~(\ref{ApC_commutator_eq1}) gives
\be
\sum_{\nn \in \calJ} |\la \nn|[A^p,C]|\mm\ra| \le \sum_{j=1}^p \eta_j(\mm) \le \lambda_\mm^{p+1} + \kappa_{C,p} \lambda_\mm^p,
\ee
where $\kappa_{C,p} = \sum_{j=1}^p  (1+2\lambda_N \lambda_1^{-1})^{p-j} \beta^{-1} \kappa''\le poly(s,J,\lambda_1^{-1},\lambda_N)$.
This proves $[A^p,C] \le_d A^{p+1} + \kappa_{C,p} A^p$.
 Proposition~\ref{prop:DD} then implies
\be
\label{ApC}
\la \psi| [A^p,C] |\psi\ra \le \la \psi|A^{p+1}|\psi\ra +  \kappa_{C,p} \la \psi|A^p|\psi\ra
\ee
for any $\psi \in \calF_0$.

We can apply similar arguments to show that
\be
\label{ABcommutator_eq0}
[A,B]\le_d \kappa_B A
\ee
for some $\kappa_B\le poly(s,J,\lambda_1^{-1},\lambda_N)$.
Indeed, define the particle number operator
$P=\sum_{i=1}^N a_i^\dag a_i$.
Since $\lambda_1 P \le A$ termwise, it suffices to prove that 
\be
\label{ABcommutator_eq1}
[A,B] \le_d \kappa_B  \lambda_1 P.
\ee
From Eq.~(\ref{app_commutatorAB}) one gets 
\be
\label{ABcommutator_eq1a}
[A,B] = \sum_{i,j=1}^N \tilde{\beta}_{i,j} a_j^\dag a_i,
\ee
where  $\tilde{\beta}_{i,j} = b_{ij} \lambda_j^{-1/2} \lambda_i^{1/2} (\lambda_j-\lambda_i)$
is a symmetric matrix. A simple algebra shows that 
\begin{align}
\label{ABcommutator_eq2}
\sum_{\nn \in \calJ} |\la \nn| [A,B]|\mm\ra| & \le \sum_{i,j=1}^N |\tilde{\beta}_{i,j}| (m_i (m_j+1))^{1/2}
 \le 2  \sum_{i,j=1}^N |\tilde{\beta}_{i,j}|  m_i \nonumber \\
 &  \le 2s\|\tilde{\beta}\|_{max} \sum_{i=1}^N m_i =  2s\|\tilde{\beta}\|_{max} \la \mm|P|\mm\ra.
\end{align}
Here
\[
\|\tilde{\beta}\|_{max} \equiv  \max_{i,j} |\tilde{\beta}_{i,j}|\le  2\lambda_N(\lambda_N/\lambda_1)^{1/2} \|b\|_{max}\le 2\lambda_N(\lambda_N/\lambda_1)^{1/2}J.
\]
Comparing Eqs.~(\ref{ABcommutator_eq1},\ref{ABcommutator_eq2}) proves Eq.~(\ref{ABcommutator_eq0}) with
 $\kappa_B =2sJ(\lambda_N/\lambda_1)^{3/2}$.
 
Finally, consider a commutator $[A^p,B]$ with $p\ge 2$.
We have 
\be
\label{ApB_commutator_eq1}
[A^p,B] = \sum_{j=1}^p A^{p-j} [A,B] A^{j-1}=\sum_{j=1}^p X_j,
\ee
where $X_j = A^{p-j} [A,B] A^{j-1}$. Given a multi-index $\mm\in \calJ$ let
\be
\label{ApB_commutator_eq1a}
\eta_j(\mm) = \sum_{\nn \in \calJ} |\la \nn |X_j|\mm\ra| = \lambda_\mm^{j-1}  \sum_{\nn \in \calJ} \lambda_\nn^{p-j} |\la \nn |[A,B]|\mm\ra|.
\ee
From Eq.~(\ref{ABcommutator_eq1a}) one infers that $[A,B]|\mm\ra$ is a linear combination of basis states
$|\nn\ra$ such that $\nn = \mm - e^i + e^j$  for some triple of indices $i\ne j$.
Hence $|\la \nn |[A,B]|\mm\ra|\ne 0$ implies 
$\lambda_\nn \le \lambda_\mm + \lambda_N \le (1+\lambda_N\lambda_1^{-1}) \lambda_\mm$.
From Eq.~(\ref{ABcommutator_eq1}) one gets
\be
\label{ApB_commutator_eq2}
\eta_j(\mm) \le (1+\lambda_N \lambda_1^{-1})^{p-j} \lambda_\mm^{p-1}  \sum_{\nn \in \calJ} |\la \nn |[A,B]|\mm\ra|\le \kappa_B (1+\lambda_N \lambda_1^{-1})^{p-j}  \la \mm|A^p|\mm\ra.
\ee
Substituting this into Eqs.~(\ref{ApB_commutator_eq1},\ref{ApB_commutator_eq1a}) proves
$[A^p,B] \le_d \kappa_{B,p} A^p$
with $\kappa_{B,p}=\sum_{j=1}^p  \kappa_B (1+\lambda_N \lambda_1^{-1})^{p-j}  \le poly(s,J,\lambda_1^{-1},\lambda_N)$ for any $p=O(1)$.
 Proposition~\ref{prop:DD} then gives 
 \be
\label{ApB}
\la \psi| [A^p,B] |\psi\ra \le \kappa_{B,p} \la \psi|A^p|\psi\ra
\ee
for any $\psi \in \calF_0$.
Combining  Eqs.~(\ref{ApC},\ref{ApB}) proves the lemma with $\kappa_p=\kappa_{B,p}+\kappa_{C,p}$.
\end{proof}

\section{Proof of Lemma~\ref{lemma:loose}}
\label{app:D}

\begin{lemma*}
Let $p\ge 0$ be a real number. 
 There exist a real number $\omega_p<\infty$ such that 
\be
\label{ABCloose_restated}
\| A^p(B+C)\psi\| \le \omega_p \| A^{p+3/2}\psi\|
\ee
for any vector $\psi \in \calF_0$. Here $\omega_p$ depends on $N$, $J$, $\lambda_1$, and $\lambda_N$. 
\end{lemma*}
\begin{proof}
Recall that $B$ is a linear combination of finitely many operators $a_j^\dag a_i$ with $i\ne j$. Likewise, $C$ is a linear combination of finitely many operators $a_j^\dag a_k^\dag a_i - a_i^\dag a_k a_j$
with $i\ne j\ne k$.  We have
\[
\|A^p a_j^\dag a_k^\dag a_i \psi\|^2  = \la \psi| X_{ijk}|\psi\ra, \qquad X_{ijk}\equiv a_i^\dag a_k a_j A^{2p} a_j^\dag a_k^\dag a_i.
\]
Note that $X_{ijk}$ is diagonal in the Fock basis $|\mm\ra$. From Eq.~(\ref{ai_action}) one gets
\begin{align*}
\la \mm| X_{ijk}|\mm\ra & = m_i (m_j+1)(m_k+1)  \la \mm+e^j+e^k-e^i|A^{2p} |\mm +e^j+e^k-e^i\ra \\ 
&\le 2^{2p} m_i (m_j+1)(m_k+1)
(\la \mm |A^{2p}|\mm\ra + (2\lambda_N)^{2p}).
\end{align*}
Here the inequality follows from $(y+z)^{2p}\le (2y)^{2p} + (2z)^{2p}$ which holds for all $y,z\ge 0$.
Using a bound $m_i \le \lambda_1^{-1} \la \mm|A|\mm\ra$ and the analogous bound for $m_j,m_k$ one gets
\[
X_{ijk} \le 2^{2p}\lambda_1^{-3}  A(A + \lambda_1I) (A+\lambda_1I) ( A^{2p} +  (2\lambda_N)^{2p} I).
\]
For any integer $1\le \ell \le 2p+3$ one has   $A^{\ell} \le \lambda_1^{-2p-3+\ell} A^{2p+3}$.
Thus
$X_{ijk}\le \omega_p' A^{2p+3}$, where $\omega_p'<\infty$ depends only on $p$, $\lambda_1$, and $\lambda_N$. 
Hence 
$\|A^p a_j^\dag a_k^\dag a_i \psi\|^2 \le \omega_p' \la \psi|A^{2p+3}|\psi\ra$.
Similar arguments show that  $\|A^p a_i^\dag a_k a_j\psi\|^2$ and $\|A^p  a_j^\dag a_i\psi\|^2$
are upper bounded by $\omega_p' \la \psi|A^{2p+3}|\psi\ra$.
We can now bound $\| A^p(B+C)\psi\|$ using the triangle inequality arriving at Eq.~(\ref{ABCloose_restated}).
\end{proof}

\section{Encoding of multi-indices by bit strings}
\label{app:encoding}

A  multi-index $\mm \in \calJ$ can be considered as a multiset\footnote{Recall that a multiset is
a set that may contain multiple copies of each element. The cardinality of a multiset is 
the sum of the multiplicities of all its elements.}
 $M \subseteq \{1,2,\ldots,N\}$
of cardinality $|M|=|\mm|=\sum_{i=1}^N m_i$ that contains $m_i$ copies of an integer $i$,  
\[
M = \{ \underbrace{1,1,\ldots,1}_{m_1},  \underbrace{2,2,\ldots,2}_{m_2}, \ldots,  \underbrace{N,N,\ldots,N}_{m_N}\}.
\]
Recall that $\calJ_k=\{\mm \in \calJ\, : \, \sum_{i=1}^N \lambda_i m_i \le k\}$, where $k$ is the cutoff parameter of
our regularization scheme. 
Let 
$p = \max_{\mm \in \calJ_k}\; |\mm|$
be the maximum number of particles in Fock basis states  $\mm\in \calJ_k$. We have
$p \le \lambda_1^{-1} k$. 
Define a $p$-tuple of integers  $(M_1,\ldots,M_p)$ such that 
$M_j$ is the $j$-th largest  element of  $M$ 
if $1\le j\le |\mm|$, and $M_j=0$ if $|\mm|<j\le p$. 
As a concrete example, suppose $N=7$ and $p=4$. Consider a multi-index
$\mm = (2,0,0,0,0,0,1)$. The corresponding multiset $M=\{1,1,7\}$
has cardinality $|M|=3$. Thus 
$M_1=7$, 
$M_2=M_3=1$, and $M_4=0$. The tuple $(M_1,\ldots,M_p)$ uniquely identifies $\mm$ since
$m_i$ is equal to the number of times an integer $i$ appears in $(M_1,\ldots,M_p)$.
Using the binary encoding of integers
one can identify $M_j$ with a bit string of length $\ell$, 
where  $\ell$ is the smallest integer such that $2^\ell \ge N+1$.
Then $(M_1,\ldots,M_p)$ is a bit string of length $\ell p$
that uniquely identifies $\mm$.  The total number of qubits used to express $\calH_k$ is
\[
Q = \ell p \approx k\lambda_1^{-1}  \log_2{(N)}.
\]
In the above example $N=7$ and thus $\ell=3$.
A multi-index $\mm = (2,0,0,0,0,0,1)$  is encoded by a qubit basis state
\[
|111\ra \otimes |100\ra \otimes |100\ra \otimes |000\ra
\]
where $p=4$ registers store the binary encodings of  $M_1=7$, $M_2=M_3=1$, and $M_4=0$
(the least significant bit is the leftmost) respectively. 

\section{Matrix element calculations}
\label{app:NSE_matrix_elements}

Let us compute $C_{\mm,\nn}$. To this end we note that 
\begin{align}
    b(e_i,e_j,e_k) &= 2(2\lambda_j)^{\frac12}B_{i,j}^k \\
    B_{i,j}^k &= \int_\Omega \sin(2\pi\vec k\cdot \xi) \sin(2\pi\vec i\cdot \xi) \cos(2\pi\vec j\cdot \xi)
    \frac{(\vec{i}^\perp\cdot\vec j)(\vec j^\perp\cdot\vec k^\perp)}{|\vec i|_2||\vec k|_2|\vec j|^2_2}d\xi \\
    &= \frac{(\vec{i}^\perp\cdot\vec j)(\vec j^\perp\cdot\vec k^\perp)}{|\vec i|_2||\vec k|_2|\vec j|^2_2}
    \int_\Omega \sin(2\pi\vec k\cdot \xi) \sin(2\pi\vec i\cdot \xi) \cos(2\pi\vec j\cdot \xi) d\xi \\
    &= \frac14 \frac{(\vec{i}^\perp\cdot\vec j)(\vec j^\perp\cdot\vec k^\perp)}{|\vec i|_2||\vec k|_2|\vec j|^2_2}
    \left( \delta_{\vec k-\vec i+\vec j,0} + \delta_{\vec k-\vec i- \vec j,0} - \delta_{\vec k+\vec i+\vec j,0} - \delta_{\vec k+ \vec i- \vec j,0} \right) \\
    &= \frac14 \frac{(\vec{i}^\perp\cdot\vec j)(\vec j^\perp\cdot\vec k^\perp)}{|\vec i|_2||\vec k|_2|\vec j|^2_2}
    \left( \delta_{\vec k,\vec i+\vec j} + \delta_{\vec k,\vec i- \vec j} - \delta_{\vec k, \vec j- \vec i} \right) 
\end{align}
where the last expression is obtained since multi-indices $\vec k,\vec i,\vec j$ are taken such that either $\vec k_1+\vec i_1+\vec j_1 > 0$ or $\vec k_1=0, \vec k_2>0, \vec i_1=0, \vec i_2>0, \vec j_1 =0, \vec j_2 > 0$  resulting in $\delta_{\vec k+\vec i+\vec j,0}\equiv0$ and so 
\begin{align}
    C_{\mm,\nn} &= \sum_{k=1}^N \int_{\RR^N}  (b_k(x) \partial_{x_k}\HH_\nn)\HH_\mm \mu(dx) \\
    &= -\sum_{\substack{k,i,j\ge 1\\i\ne k,j\notin\{k,i\}}}^N b(e_i,e_j,e_k) \int_{\RR^N} x_ix_j \partial_{x_k}\HH_\nn \HH_\mm \mu(dx)\\
    &= -\sum_{\substack{k,i,j\ge 1\\i\ne k,j\notin\{k,i\}}}^N 2(2\lambda_j)^{\frac12} B_{i,j}^k
    \int_{\RR^N} \sqrt{\nn_k 2\lambda_k/q}\sqrt{q/(2\lambda_i)}\sqrt{q/(2\lambda_j)}\HH_{1_i} \HH_{1_j} \HH_{\nn-1_k} \HH_\mm \mu(dx) \\
    &= -\sum_{\substack{k,i,j\ge 1\\i\ne k,j\notin\{k,i\}}}^N 2B_{i,j}^k (\nn_k \lambda_k/\lambda_i q)^\frac{1}{2}
    \int_{\RR^N} \HH_{1_i} \HH_{1_j} \HH_{\nn-1_k} \HH_\mm \mu(dx)
\end{align}


Notice that for $i\neq j$, $k \notin {i,j}$:
\begin{align}
    (\HH_{1_i} \HH_{1_j}, \HH_{\nn-1_k} \HH_\mm )_\mu 
    &= ((1+\nn_i)^{\frac12}\HH_{1_j}\HH_{\nn-1_k+1_i}+\nn_i^{\frac12}\HH_{1_j}\HH_{\nn-1_k-1_i}, \HH_\mm)_\mu\\
    &=(1+\nn_i)^{\frac12}(1+\nn_j)^{\frac12} (\HH_{\nn-1_k+1_i+1_j}, \HH_\mm)_\mu \\
    &+(1+\nn_i)^{\frac12}\nn_j^{\frac12} (\HH_{\nn-1_k+1_i-1_j}, \HH_\mm)_\mu\\
    &+\nn_i^{\frac12}(1+\nn_j)^{\frac12} (\HH_{\gamma-1_k-1_i+1_j}, \HH_\mm)_\mu \\
    &+\nn_i^{\frac12} \nn_j^{\frac12} (\HH_{\nn-1_k-1_i-1_j}, \HH_\mm)_\mu\\
    &=(1+\nn_i)^{\frac12}(1+\nn_j)^{\frac12}\delta_{\nn-1_k+1_i+1_j,\mm}
    +(1+\nn_i)^{\frac12}\nn_j^{\frac12}\delta_{\nn-1_k+1_i-1_j,\mm}\\
    &+\nn_i^{\frac12}(1+\nn_j)^{\frac12} \delta_{\nn-1_k-1_i+1_j,\mm}
    +\nn_i^{\frac12}\nn_j^{\frac12} \delta_{\nn-1_k-1_i-1_j,\mm}
\end{align}

Then

\begin{align}
    C_{\mm,\nn} = -\frac 12 \sum_{\substack{k,i,j\ge 1\\i\ne k,j\notin\{k,i\}}}^N
    \left(\nn_k q \frac{\lambda_k}{\lambda_i} \right)^\frac{1}{2}
    \frac{(\vec{i}^\perp\cdot\vec j)(\vec j\cdot \vec k)}{|\vec i|_2||\vec k|_2|\vec j|^2_2} 
    \bigl( &\delta_{\vec k,\vec i+\vec j} + \delta_{\vec k,\vec i-\vec j} -\delta_{\vec k,\vec j-\vec i} \bigr) \\
    \biggl(
    (1+\nn_i)^{\frac12}(1+\nn_j)^{\frac12}&\delta_{\nn-1_k+1_i+1_j,\mm}\\
    +(1+\nn_i)^{\frac12}\nn_j^{\frac12} &\delta_{\nn-1_k+1_i-1_j,\mm}\\
    +\nn_i^{\frac12}(1+\nn_j)^{\frac12} &\delta_{\nn-1_k-1_i+1_j,\mm}\\
    +\nn_i^{\frac12}\nn_j^{\frac12} &\delta_{\nn-1_k-1_i-1_j,\mm}
    \biggr)
\end{align}

Finally we notice that the sum of coefficient in front of $\delta_{\nn-1_k-1_i-1_j,\mm}$ should equals to zero since the matrix $C_{\mm,\nn}$ is skew-symmetric. Accounting for that we obtain the final representation:
\begin{align} \label{eq:c_mn_final}
    C_{\mm,\nn} = -\frac 12 \sum_{\substack{k,i,j\ge 1\\i\ne k,j\notin\{k,i\}}}^N
    \left(\nn_k q \frac{\lambda_k}{\lambda_i} \right)^\frac{1}{2}
    \frac{(\vec{i}^\perp\cdot\vec j)(\vec j\cdot \vec k)}{|\vec i|_2||\vec k|_2|\vec j|^2_2} 
    \bigl( &\delta_{\vec k,\vec i+\vec j} + \delta_{\vec k,\vec i-\vec j} - \delta_{\vec k,\vec j-\vec i} \bigr)\\
    \biggl(
    (1+\nn_i)^{\frac12}(1+\nn_j)^{\frac12}&\delta_{\nn-1_k+1_i+1_j,\mm}\\
    +(1+\nn_i)^{\frac12}\nn_j^{\frac12} &\delta_{\nn-1_k+1_i-1_j,\mm}\\
    +\nn_i^{\frac12}(1+\nn_j)^{\frac12} &\delta_{\nn-1_k-1_i+1_j,\mm}
    \biggr)
\end{align}

\end{document}